\documentclass[12pt,oneside,letter]{article}
\usepackage[left=1in, right=1in, top=1in, bottom=1in]{geometry}

\usepackage{amsfonts}
\usepackage{amssymb}
\usepackage{amsmath}
\usepackage{amsthm}
\usepackage{bbm}
\usepackage{eurosym}
\usepackage{mathtools}
\usepackage{bm}
\usepackage[extra]{tipa}

\usepackage[toc]{appendix}
\usepackage[font=footnotesize, labelsep=period, labelfont=bf]{caption}
\usepackage{changepage}
\usepackage{color, colortbl, soul}
\usepackage{xcolor}
\usepackage[bottom]{footmisc}
\usepackage{graphicx}
\usepackage{array}
\usepackage{graphics}
\usepackage{epstopdf}
\usepackage{epsfig}
\usepackage{tabularx}
\usepackage[font=large]{caption}

\usepackage{epstopdf} %
\usepackage{setspace} %

\usepackage[hyperfootnotes=true]{hyperref}
\usepackage[latin1,utf8]{inputenc}
\usepackage{longtable}
\usepackage{lscape}
\usepackage{multirow} %
\usepackage[round]{natbib}
\usepackage{pdflscape}
\usepackage{pdfpages}
\usepackage{rotating}
\usepackage{booktabs,makecell}
\usepackage[detect-all]{siunitx}[=v2]
\usepackage{etoolbox}
\usepackage [autostyle, english = american]{csquotes}
\MakeOuterQuote{"}
\usepackage{epigraph}
\usepackage[normalem]{ulem}
\usepackage{colortbl}
\usepackage{hyperref}
\usepackage{comment}
\usepackage{subfig}
\usepackage{tikz}
\usepackage{titlesec}
\usepackage[USenglish]{babel}
\newtheorem{theorem}{Theorem}

\newtheorem{lemma}[theorem]{Lemma}

\newtheorem{hypothesis} {Hypothesis}
\newtheorem{proposition}{Result}

\newcolumntype{d}[1]{D{.}{.}{#1}}

\newcommand{\fn}{\footnotesize}

\newcommand{\rc}{\rowcolor{black!5}}

\definecolor{blue}{rgb}{0.00,0.07,1.00}
\definecolor{red}{rgb}{1.00,0.10,0.00}

\definecolor{black}{rgb}{0.00,0.00,0.00}
\definecolor{LightCyan}{rgb}{0.88,1,1}
\definecolor{White}{rgb}{1.00,1.00,1.00}
\definecolor{Gray}{gray}{0.95}
\definecolor{Oran}{rgb}{0.97,1.00,0.67}
\definecolor{darkred}{rgb}{0.6,0.0,0.0}
\definecolor{orange}{rgb}{1,0.5,0}
\definecolor{w}{rgb}{1.00,1.00,1.00}
\definecolor{b}{rgb}{0.00,0.00,0.00}
\def\sym#1{\ifmmode^{#1}\else\(^{#1}\)\fi}
\newcommand{\bl}{\color{blue}}

\newcommand{\E}{\operatorname{\E}}
\newcommand*{\mytab}[1]{\hyperref[{#1}]{Table~\ref*{#1}}}
\newcommand*{\myfig}[1]{\hyperref[{#1}]{Figure~\ref*{#1}}}
\newcommand*{\myfigs}[1]{\hyperref[{#1}]{Figures~\ref*{#1}}}
\newcommand*{\myfigx}[1]{\hyperref[{#1}]{\ref*{#1}}}
\newcommand*{\mysec}[1]{\hyperref[{#1}]{Section~\ref*{#1}}}
\newcommand*{\myeq}[1]{\hyperref[{#1}]{Equation~(\ref*{#1})}}
\newcommand*{\myeqs}[1]{\hyperref[{#1}]{Equations~(\ref*{#1})}}
\newcommand*{\myeqx}[1]{\hyperref[{#1}]{(\ref*{#1})}}
\newcommand*{\myhyp}[1]{\hyperref[{#1}]{Hypothesis~\ref*{#1}}}
\newcommand*{\mypred}[1]{\hyperref[{#1}]{Prediction~\ref*{#1}}}
\newcommand*{\mylem}[1]{\hyperref[{#1}]{Lemma~\ref*{#1}}}
\newcommand*{\myres}[1]{\hyperref[{#1}]{Result~\ref*{#1}}}

\newcommand*{\myIA}[1]{\hyperref[{#1}]{Appendix}}
\newcommand*{\mysecIA}[1]{\hyperref[{#1}]{Appendix~\ref*{#1}}}
\newcommand*{\mytabIA}[1]{\hyperref[{#1}]{Appendix Table~\ref*{#1}}}
\newcommand*{\myfigIA}[1]{\hyperref[{#1}]{Appendix Figure~\ref*{#1}}}
\newcommand*{\mytabIAs}[1]{\hyperref[{#1}]{Appendix Tables~\ref*{#1}}}
\titlespacing\section{0pt}{12pt plus 4pt minus 2pt}{0pt plus 2pt minus 2pt}
\titlespacing\subsection{0pt}{6pt plus 4pt minus 2pt}{0pt plus 2pt minus 2pt}
\titlespacing\subsubsection{0pt}{6pt plus 4pt minus 2pt}{0pt plus 2pt minus 2pt}

\usepackage{hyperref}
\hypersetup{
	colorlinks=true,
	raiselinks=true,
	breaklinks=false,
	linkcolor=blue,
	anchorcolor=darkred,
	citecolor=blue,
	urlcolor=blue}
\defcitealias{Diamond1987}{DV}

\DeclareUnicodeCharacter{2212}{-}
\begin{document}

\thispagestyle{empty}

\title{\LARGE{\bf The Double-Edged Sword \\ of Short-Selling Bans}\protect
\footnote{The authors would like to thank Dante Amengual, Wolfgang Bessler, Patrick Bolton, Gilles Chemla,  J\'{e}r\^{o}me Dugast, Marco Pagano, Evgenia Passari, Rafael Repullo,  Enrique Sentana, Ravi Shukla, Javier Suarez as well as seminar participants at the 2023 NFA meeting, the 2024 FMA Applied Finance Conference,  AUEB, CEMFI, Durham, Universit\'e Paris-Dauphine, and University of St.\! Andrews for their helpful and constructive comments. Pasquale Della Corte is with Imperial College London and the Centre for Economic Policy Research (CEPR), e-mail: \texttt{p.dellacorte@imperial.ac.uk}. Robert Kosowski is with Imperial College London, the Centre for Economic Policy Research, and the Oxford-Man Institute of Quantitative Finance, e-mail: \texttt{r.kosowski@imperial.ac.uk}. Dimitris Papadimitriou is with King's College London and the University of Cyprus, email: \texttt{dimitris.papadimitriou@kcl.ac.uk}. Nikolaos P. Rapanos is with Imperial College London, e-mail: \texttt{nikolaos.rapanos@imperial.ac.uk}.
}}

\author{
	\begin{tabular}[t]{c@{\extracolsep{1cm}}cc}
		{\large \bf  Pasquale Della Corte}          &       & {\large \bf Robert Kosowski}  \\
		{\large Imperial College London}                   &       &  {\large Imperial College London} \\
		{\large   \& CEPR}                         &       & {\large   \& CEPR} \\
		\addlinespace[20pt]
		{\large \bf Dimitris Papadimitriou}       &       & {\large \bf Nikolaos P. Rapanos} \\
		{\large King's College London} &       & {\large Imperial College London}  \\
        {\large   \& University of Cyprus}                         &       &  \\
	\end{tabular}
	\bigskip
}

\maketitle
\thispagestyle{empty}

\newpage

\thispagestyle{empty}
\vspace*{2cm}

\begin{center}
\title{\LARGE \bf{The Double-Edged Sword \\\vspace{10pt} of Short-Selling Bans}}
\end{center}
\maketitle

\vspace*{1cm}

\begin{abstract}
\noindent We develop a theoretical model that endogenizes the regulator's decision to impose short-selling bans to prevent large stock price declines. Empirically, we test the model's predictions using the cross-sectional variation in short-selling restrictions implemented across European countries in 2020. Consistent with our model, we find that bans had a detrimental effect on liquidity and failed to support the average price levels, but were effective in limiting large price drawdowns. Finally, we show that the effectiveness of the bans depends on the share of informed stockholders, a central variable in our framework, thus informing the design of more effective regulatory responses.

\vspace{1cm}
\noindent\textit{Keywords:} Short Selling, Ban, Liquidity, Price Discovery, Covid.\\
\medskip
\noindent\textit{JEL Classification:} G01, G12, G14, G18.
	
\end{abstract}

\newpage

\thispagestyle{empty}
\clearpage

\newpage
\setcounter{page}{1}

\onehalfspacing
\section{Introduction}
The effects of short-selling bans on market liquidity and share prices %
are part of a long-standing debate among regulators and academics around the world \citep[e.g.,][]{Beber2013,ESMA:2022}. Disagreement about the ability of these measures to restore market quality during volatile markets reemerged at the peak of the COVID-19 pandemic, when some European regulators chose to pull the brake used during previous financial crises and curb short-selling activity. Other regulators, however, decided against these restrictions as they can reduce market liquidity while having no positive effects on price levels. This divergence is well reflected in a statement released by the Financial Conduct Authority, the UK national watchdog, on March 23, 2020: \enquote{\it A great many investment and risk management strategies rely on the ability to take `long' and `short' positions. These benefit a wide range of ordinary investors ... The loss of these benefits would need to be carefully balanced before determining that any intervention to prevent short selling was appropriate.}%

In this paper, we contribute to this debate and provide novel insights on the potential costs and benefits of short-selling bans. We start our analysis by building a stylized model with {\it informed} and {\it noise} investors while endogenizing a {\it regulator}'s decision to impose a ban on short sales in order to prevent a large drop in asset prices. This model generates several testable predictions for both liquidity and asset returns, which we then empirically evaluate by exploiting the introduction of temporary short-selling bans in some but not all European countries in March 2020, an ideal setting to focus on the causal effect of bans. Consistent with our model, we show that the likelihood of imposing short sale restrictions is inversely related to the institutional ownership of each stock, a proxy for the share of {\it informed} stockholders. Moreover, short-selling bans widen bid-ask spreads, fail to support average returns, but help limit large price drawdowns. Overall, the potential costs of short-selling bans (deterioration in liquidity) increase for stocks with high institutional ownership, whereas the possible benefits (supporting the left tail of asset returns) improve for stocks with low institutional ownership.

Our findings offer valuable guidance to policy makers, as we show that the efficacy of short-selling bans in mitigating significant short-term price falls depends on the composition of stockholders and their level of informativeness. Specifically, our study indicates that restrictions on short sales can be effective in markets with a higher percentage of uninformed stockholders but may be detrimental in markets with a larger share of informed stockholders.  We uncover these results through a simple model that extends the work of \citet{Diamond1987} while using institutional ownership to quantify the relative proportion of informedness among stockholders. The use of this measure is supported by recent research \citep[e.g.,][]{Boehmer2009, bai2016have, davila2022identifying}, which finds that stocks owned by institutional investors exhibit higher price informativeness and concludes that these investors can be construed as informed traders.  Moreover, we corroborate the evidence of recent literature on the adverse effects of short-selling bans on liquidity and their inability to enhance average returns \citep[e.g.,][]{Beber2013, Enriques2020}, thus providing a framework to assess the trade-offs faced by regulators on the impact of short-selling bans using the most recent experience in Europe.

To guide our analysis, \mysec{sec:bans_model} builds on the seminal work of \citet{Diamond1987} and considers a single-period economy populated with {\it informed} and {\it noise} investors trading a stock against competitive {\it market makers}. However, we add two extensions to this setup. First, we introduce a {\it regulator} whose goal is to avert a sharp decline in asset prices and has to decide on whether to prohibit short-selling activity, while being uncertain about the liquidity needs of {\it noise} traders. The higher this uncertainty, the more likely the regulator decides to take action. Indeed, uncertainty is likely to have played an important role in the recent decision of various European countries to implement short-selling bans, {as corroborated by our discussions with representatives from several European regulatory bodies}. For example, an ESMA opinion issued in March 2020 concerning the French short-selling bans states: \textit{``AMF reports to have observed examples of disinformation, rumours and false news ... these rumours may affect listed companies and may damage the confidence of investors on an efficient market''}. Second, each group of traders owns the stock with a given probability and the effectiveness of a restrictive short-selling policy depends on the ratio of {\it informed} stockholders relative to {\it noise} stockholders. In the context of our model, imposing a short-selling ban is not always an optimal solution from the perspective of the {\it regulator}. In particular, the probability of a sharp decline in prices depends on two factors: the probability of a low bid price posted by {\it market makers} and the probability of a sell order coming from active traders given that the bid price is low. The latter always decreases when short-selling is not allowed as traders without the stock are unable to submit a sell order. However, the former may decrease or increase depending on the share of {\it informed} stockholders. The larger this fraction, the higher the adverse selection faced by {\it market makers} when short-selling bans are in place, and they may set lower bid prices in proportion to their adverse selection. Thus, we get the following testable hypotheses. First, the likelihood of imposing short-selling bans increases when institutional ownership is lower. Second, bans lead to a deterioration of liquidity with the effect being more pronounced for high institutional ownership stocks. Finally, the left tail of returns is supported when bans are implemented and this support is larger for stocks with low institutional ownership.

In our empirical analysis, we exploit a quasi-natural experiment associated with the introduction of short-selling restrictions in some European countries in March 2020, as described in \mysec{sec:bans_data}.  The power to temporarily restrict the short sales of a financial instrument in European trading venues is granted to national authorities by the European Union Short Selling Regulation in the case of adverse circumstances. Starting on March 17, 2020, Austria, Belgium, France, Greece, Italy, and Spain introduced temporary short-selling bans on shares admitted to their trading venues. The restrictions applied to any natural or legal person, regardless of where they were located, covered all stocks traded in cash and derivatives markets, and lasted for about two months.  Our identification strategy follows the existing literature and relies on a difference-in-differences method with pre- and post-intervention periods capturing a  month of observations before and after the regulatory change. Additionally, our treatment group consists of individual stocks traded in the European countries that adopted short-selling bans, while the control group includes individual stocks traded in the other twelve European countries. While this approach helps control for time-invariant stock-level heterogeneity and common shocks, we recognize that the decision to implement a ban and its timing may be endogenous, potentially responding to deteriorating market conditions. To address this concern and mitigate potential omitted variable bias, we also complement our baseline analysis with an instrumental variable exercise to isolate exogenous variation in the likelihood of a ban.

Using data from 17 European stock markets, we test various predictions implied from our model in \mysec{sec:bans_results}.  In our hypotheses, we compare stocks in countries with and without short-selling bans, and focus on the mechanism that our theoretical model identifies and on the role of institutional ownership. Overall, we find the following results. First, we find that institutional ownership is lower in countries that imposed short-selling bans. Second, in line with existing literature \citep[e.g.,][]{Beber2013}, we find that short-selling bans during the COVID-19 pandemic were associated with a deterioration of liquidity by up to 13 basis points, as measured by average bid-ask spreads. To strengthen the causal interpretation, we follow an approach similar to \citet{Beber2013}, using sovereign CDS spreads and the Financial Stress Index of \citet{Duprey2017} as instruments. The results remain qualitatively unchanged. Third, we demonstrate that stocks with higher institutional ownership suffered from a greater deterioration in liquidity. Indicatively, bid-ask spreads of stocks with low institutional ownership increase on average by 5-8 basis points (which is statistically insignificant) as a result of the bans, whereas the bid-ask spreads of stocks with high institutional ownership increase by 22-29 basis points. Fourth, we show that short-selling bans offered left-tail support, as measured by the maximum drawdown. Finally, stocks with lower institutional ownership benefit more in terms of limiting extreme negative outcomes (i.e., left-tail support). We estimate that short-selling bans lower the maximum drawdown (in absolute terms) by 310 basis points relative to the stocks without short-selling bans, and the effect is even stronger for stocks with low institutional ownership. Hence, we quantify the trade-off that policy makers face when imposing short-term short-selling bans: bid-ask spreads widen by around 13 basis points, and maximum drawdown is reduced by 310 basis points. In \mysec{sec:robustness}, we conduct a variety of additional exercises that confirm our main results, such as matching stocks by market capitalization and industry classification, running a placebo outcome exercise, and extending the pre- and post-intervention windows.

Our paper is closely related to three strands of the literature. First, we contribute to the growing literature on the effects of short-selling bans on market quality. For example, \citet{Beber2013} investigate the impact of short-selling bans around the world during the 2007-2009 financial crisis and conclude that short-selling bans are detrimental to liquidity, slow price discovery, and fail to support prices except for US financial stocks. More recently, and most closely related to our study, \citet{bessler20212020} analyze the 2020 European short-selling bans and their impact on several measures of market quality, confirming that the bans were detrimental for market liquidity. \citet*{Beber2017}, moreover, find that bans increase the probability of default and volatility. \citet*{Boehmer2013} study the response of liquidity to the short-selling ban imposed during the global financial crisis in the US by exploiting the difference between financial stocks that were targeted by the ban and those that were not, and find that liquidity worsened in all but the smallest stocks. Similarly, \citet{Marsh2012} examine the UK's short sales ban in 2008 and document a deterioration of liquidity on affected stocks and an overall negative effect on the quality of the market; they also recognize, however, that ``if the goal of the FSA was to arrest sharp declines in financials' stock prices'', then their results may signify that the policy of banning short sales was successful. Finally, \citet{battalio2011} investigate the impact of 2008 bans on the option market in the US and document a dramatic increase in the bid-ask spreads of affected options. While we build on this literature and replicate some of its results as a preliminary step, our paper offers a more nuanced perspective on short-selling bans. Beyond providing an analysis of market outcomes once bans were implemented, we focus on the underlying mechanism that drives these effects, their causal interpretation, and ultimately the regulator's decision to impose these bans.

In addition, we speak to a growing theoretical literature that evaluates the effects of short-selling bans. \citet{Miller1977} predicts that short-selling bans lead to overpricing, while \citet{Diamond1987} build on  \citet{Glosten1985} and  show that this is not true in a rational expectations framework; stocks are not systematically overpriced when short sales are prohibited, but liquidity and price discovery are compromised. \citet{Hong2003} build a heterogeneous agent model and find that short-selling bans may aggravate price declines, while \citet*{Bai2006} point out that short-selling constraints can increase uncertainty about the asset in a model with risk-averse investors, and thus also lead to a decline in prices. On the other hand, \citet{Brunnermeier2014} show how short selling impacts the fundamentals of firms rather than just the price discovery process.
They argue that financial institutions may be vulnerable to predatory short selling, which may lead to a bank-run equilibrium, and their model provides a potential justification for temporary restrictions on short selling. Finally, \citet{Dixon2021} endogenizes the incentives for information acquisition when short selling is costly and shows that a ban increases adverse selection on the sell side and reduces it on the buy side. We add to this literature by developing a theoretical model in which the regulator’s decision to impose a short-selling ban is endogenous, aimed at curbing sharp price declines. A key determinant of this decision is the ratio of informed to noise traders holding a stock, empirically proxied with institutional ownership. Consistent with prior studies, we find that bid–ask spreads widen following bans, although the effect is economically modest and concentrated in stocks (and markets) with a high share of informed traders. Importantly, unlike earlier work, we move beyond average price levels and document left-tail support, particularly in stocks (and markets) dominated by noise traders.

Finally, our work contributes to the broader empirical literature on short selling \citep*[e.g.,][]{Boehmer2010, Reed2013}.  In particular, \citet{Saffi2011} show that stocks subject to short-selling constraints, as measured by low lending supply, have lower price efficiency and relaxing those constraints does not lead to instability in the form of a higher probability of left-tail returns. \citet*{muravyev2025anomalies}  provide evidence that many pricing anomalies are eliminated once stock-level shorting costs are taken into account. \citet*{deng2020short} use SEC's Regulation SHO as a natural experiment and find that the lifting of short-sale constraints reduces the stock price crash risk. \citet{barardehi2019short} study the effects of short-lived short-selling restrictions, originating from Rule 201 in the US, which restricts the placement of marketable short-sale orders once an intraday return drops below $−10\%$. They find that prices are supported, volatility decreases, and ask-side depth increases. \citet{Jones2002} document evidence consistent with the overpricing hypothesis when short-selling constraints bind, but \citet*{Diether2009} find that short sellers correctly predict negative future returns. In a similar vein, \citet*{DellaCorte2022} exploit granular net short positions disclosed at the investor-stock level for European markets and show that a measure of short conviction harvests information from better informed investors, while \citet*{AnHuangLouShi2021} show that most mutual funds do not engage in short selling, suggesting that short interest is concentrated in a subset of institutional investors. Finally, \citet*{Bris2007} find that short-selling constraints reduce price efficiency and are associated with less negative skewness at the market level.    

The remainder of this paper is organized as follows. \mysec{sec:bans_model} introduces the model and defines the empirical predictions, \mysec{sec:bans_data} describes the data, \mysec{sec:bans_results} presents the results of our empirical tests, and \mysec{sec:robustness} performs some robustness exercises. We provide our concluding remarks in \mysec{sec:bans_conclusion}. A separate Internet \myIA{app:internet_appendix} contains additional technical details and empirical results.

\section{Model}\label{sec:bans_model}
In this section, we first extend the model of \citet{Diamond1987} by introducing a {\it regulator} who can impose a short-selling ban to avert a sharp decline in prices, and then determine the conditions under which it is optimal for the {\it regulator} to impose restrictions on short sales. We conclude with a set of empirical predictions, tested later in this paper. Model proofs and extensions are reported in \mysecIA{app:proof_simulations}.

\subsection{Setting}
We consider a static model with a single stock in the spirit of \citet{Diamond1987}.\footnote{Note that the main results of our model would not change if we introduced multiple stocks.} The value of the risky asset is denoted by $V$ and follows a Bernoulli distribution that takes the liquidation value of one with probability $p$, or zero with probability $1-p$. While the liquidation value is paid in the future, we abstract from any discounting for simplicity. This model comprises an infinite number of risk-neutral traders who sequentially enter the market and want to trade with probability $g$. They can also decide not to engage in trading with probability $1-g$, a case in which no trade is observed. Active traders initiate a buy or sell order for a unit of the risky asset. When a sell order is submitted by a trader who owns no asset, we have a short sale. Since no market participants can distinguish sell orders from short sales, the set of observed actions includes buy orders, sell-or-short orders, and no trade.

There are two types of traders in our setting: {\it informed} traders and {\it noise} traders. The first group of traders has a probability mass of $\alpha$, privately knows the true liquidating value of the stock, and trades only for information reasons. An active {\it informed} trader submits a buy order when $V$ is high and a sell (or short) order when $V$ is low. The second group of investors has a probability mass of $1-\alpha$, infers the value of the risky asset using publicly available information, and trades only for liquidity reasons exogenous to our model. A {\it noise} trader with a desire to trade submits a buy order with probability $1-\eta_s$ and a sell-or-short order with probability $\eta_s$. Moreover, each group of traders owns the stock with a given probability.  We use $h_I$ to indicate the probability that an {\it informed} trader owns a share of the risky asset prior to submitting a sell-or-short order, and $h_N$ to denote the corresponding quantity for a {\it noise} trader. $h_I$ and $h_N$ can also be interpreted as the fraction of {\it informed} and {\it noise} traders who own the stock, respectively. When a trader faces short-sale restrictions and owns no stock, no trade is observed.

In addition to {\it informed} and {\it noise} traders, our model is also populated by competitive risk-neutral {\it market makers} facing no inventory costs or constraints, so that the expected profit from each trade is zero. The {\it market maker} has no access to private information, but he observes all trades as they take place and knows the probability of a sell order from a {\it noise} trader. He sets bid and ask prices such that profits and losses from transacting with {\it noise} and {\it informed} traders, respectively, offset each other. Ultimately, the {\it market maker} will post a bid (ask) price that equals his expectation of $V$ conditional on public information and depending on whether the transaction is a sell-or-short  (buy) order. Finally, we assume that the demand for the risky asset is bounded, as otherwise the {\it informed} traders would be willing to trade an infinite amount of the risky asset.

Unlike the model of \citet{Diamond1987}, our economy also includes a {\it regulator}, whose main objective is to avert a significant price decline in the short term by ensuring that
\begin{equation*}\label{eq:obj_reg}
	P(q<c)<x,
\end{equation*}
where $q$ is the equilibrium price of the risky asset in the next period, $c$ is a sufficiently low price threshold, $x$ is a confidence level, and $P(q<c)$ denotes the perceived probability that $q$ falls below $c$. We will provide further discussion on the regulator's objective in \mysec{sec:discmodel}.  At the time of her decision, the {\it regulator} only knows the probability distribution of a sell order from a {\it noise} trader, denoted by $f(\eta_s)$, so from her perspective the probability of a sharp decline in price is never zero. For example, if $c=\text{\euro}600$, $x=5\%$, and the price today is $\text{\euro}1,000$, the goal of the {\it regulator} is to prevent a decline in the price of the stock below $\text{\euro}600$ with a confidence interval of $5\%$. When this probability is higher than $x=5\%$, the {\it regulator} must decide whether or not to impose a ban on short sales of the risky asset, as a sharp price drop represents a potential threat to financial stability.\footnote{Even though our model is static, it is important to note that we view the regulator's objective as a short-term goal, since the final payoff, and thus the price of the asset, is totally exogenous ($V$ is $0$ or $1$).} For example, \citet{Brunnermeier2014} point out that a sharp decrease in the share price of financial institutions combined with leverage constraints may lead to a bank run in equilibrium. Similarly, in March 2020, the ESMA expressed its opinion on the short-selling bans implemented by the Autorit\'e des March\'es Financiers (AMF) in France. The opinion stated that: \textit{"the AMF considers that the growth of short positions betting on negative news (be they real or ill-based)... could destabilize markets in a way that could be self-reinforcing, with downward price spirals"}.  %

To preview our results, the equilibrium price of the risky asset can fall below the regulator's price threshold when the {\it market maker} sets a bid price below such threshold and simultaneously an {\it informed} trader submits a sell-or-short order. The joint likelihood of these events depends on the probability of a sell order submitted by a {\it noise} trader. In particular, when the liquidity needs of {\it noise} traders are high, the bid price is unlikely to fall below the price threshold as the {\it market maker} anticipates that any sell-or-short order is likely to come from a {\it noise} trader, and thus be uninformative about the liquidation value of the risky asset. In contrast, the probability of having a bid price falling below the price threshold increases when the liquidity needs of {\it noise} traders are low. In this case, any sell order is likely to come from an {\it informed} trader and thus carry valuable information about the liquidating value of the risky asset. As a result, the {\it regulator} will decide whether to impose a short-sale restriction depending on the effect of $\eta_n$ on the combined likelihood of the {\it market maker} setting a bid price below $q$ and an {\it informed} trader submitting a sell-or-short order.

Compared to \citet{Diamond1987}, we further introduce the possibility that $h_N$ and $h_I$ may not be identical, thus relaxing one of their assumptions. In doing so, the {\it market maker} can set a different bid price depending on whether short-selling activity is allowed or restricted. As a result, the relationship between $h_N$ and $h_I$ will affect the policy decision of the {\it regulator} on whether to introduce a short-selling ban. 

\subsection{Unconstrained Short Selling}
We first study the model when all traders can freely short the asset. Akin to \citet{Diamond1987}, a {\it market maker} sets the bid price equal to his expectation of $V$ conditional on public information and the fact that the transaction is a sale-or-short order. This is equivalent to $Bid =\mathbb{E}\left[V|Sell\right] = \textstyle\sum\nolimits_{v} v\cdot P\left(V=v|Sell\right)= P\left(V=1|Sell\right)$ since $V$ takes either the value of one or zero. Using the Bayes' rule, the price at which a {\it market maker} is willing to buy a share of the risky asset is then given by
\begin{align}\label{eq:bid}
	Bid = \frac{P(Sell|V=1)\times P(V=1)}{P(Sell)}.
\end{align}
When the liquidation value of the risky asset is high, only a {\it noise} trader would submit a sell-or-short order. This means that the conditional probability of a sell-or-short order is given by $P(Sell|V=1) = g(1-\alpha)\eta_s$, while its unconditional probability is equal to $P(Sell) = g(1-\alpha)\eta_s + g\alpha(1-p)$. Here, $g\alpha(1-p)$ accounts for the probability that an {\it informed} trader submits a sell-or-short order, which only occurs when the liquidation value of the risky asset is low. Hence, when traders can freely short the asset, we get
\begin{align}\label{eq:bid_unc}
	Bid = \frac{ (1-\alpha)\eta_sp}{(1-\alpha)\eta_s + \alpha(1-p)}.
\end{align}
Similarly, $Ask = \mathbb{E}\left[V|Buy\right] = P(V=1|Buy)$. 
Since any trader can submit a buy order when the liquidating value of the risky asset is high, the conditional probability of a buy order is computed as $P(Buy|V=1) = g(1-\alpha)(1-\eta_s) + g\alpha$, whereas its unconditional probability follows from $P(Buy) = g(1-\alpha)(1-\eta_s) + g\alpha p$. Therefore, using again Baye's rule, we get
\begin{align}\label{eq:ask_unc}
	Ask = \frac{ (1-\alpha)(1-\eta_s)p + \alpha p}{(1-\alpha)(1-\eta_s) + \alpha p}.
\end{align}
Finally, when there is no trade, the {\it market maker} cannot update his beliefs and sets the price of the risky asset equal to the expectation of $V$ conditional on public information. Therefore, the no trade price is $\text{\it No Trade} = \mathbb{E}\left[V|\text{\it No Trade}\right] = p$, which is independent of $\eta_s$.

When short selling is allowed, the {\it regulator} has to decide whether to introduce a ban on short sales. Since the prices posted by a {\it market maker} are random from her perspective, the equilibrium price $q$ follows a compound probability distribution that depends on $f(\eta_s)$ as well as $\mathbb{E}\left[V|Sell\right]$, $\mathbb{E}\left[V|Buy\right]$, and  $\mathbb{E}\left[V|\text{\it No Trade}\right]$ with probabilities $P(Buy|\eta_s)$, $P(Sell|\eta_s)$ and $P(\text{\it No Trade}|\eta_s)$, respectively. Put differently, conditional on the unknown parameter $\eta_s$, the {\it regulator} knows that $q$ follows a three-point probability distribution. To derive the probability of a sharp decline in prices in the absence of any short-selling restriction, the price threshold $c$ is assumed to be less than the maximum bid price and less than the minimum no-trade price, i.e.,  $c<\frac{(1-\alpha)p}{1-\alpha p}$. We then have the following lemma:

\begin{lemma}{\label{lem:prob}}
When all agents are unconstrained, the regulator's perceived probability that the stock price $q$ falls below a price threshold $c$ is given by:
\begin{equation}\label{eq:probq}
    P(q<c)=\int_{0}^{K}\left((1-p)g\alpha+(1-\alpha)g\eta\right)f(\eta)d\eta
\end{equation}
where $K=\frac{c\alpha(1-p)}{(1-\alpha)(p-c)}$.
If we assume that $\eta_s\sim U(0,1)$, a standard uniform distribution, we then obtain
\begin{equation*}
	P(q<c)=(1-p)\alpha g K+\frac{1-\alpha}{2}gK^2. 
\end{equation*} 
\end{lemma}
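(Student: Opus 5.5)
The proof conditions on the unknown $\eta_s$ and then integrates against $f$. Given $\eta_s=\eta$, the regulator knows that $q$ takes one of three values, $\mathbb{E}[V|Buy]$, $\mathbb{E}[V|Sell]$ or $\mathbb{E}[V|\text{\it No Trade}]$, with probabilities $P(Buy|\eta)$, $P(Sell|\eta)$ and $P(\text{\it No Trade}|\eta)$. By the law of total probability,
\begin{equation*}
P(q<c)=\int_0^1 \sum_{o\in\{Buy,\,Sell,\,\text{\it No Trade}\}} \mathbbm{1}\{\mathbb{E}[V|o,\eta]<c\}\,P(o|\eta)\,f(\eta)\,d\eta .
\end{equation*}
The plan is to show that only the sell branch can contribute, and only on the set $\{\eta<K\}$.

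First I rule out the no-trade and buy outcomes. The no-trade price is $p$. The assumption $c<\frac{(1-\alpha)p}{1-\alpha p}$ gives $c<p$, because $\frac{1-\alpha}{1-\alpha p}\le 1$. So the no-trade indicator is always zero.

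For the ask in \eqref{eq:ask_unc}, rewrite it as $p\cdot\frac{1-(1-\alpha)\eta}{1-(1-\alpha)\eta-\alpha(1-p)}$. The fraction is at least one, so $Ask\ge p>c$. This is also immediate from $Ask\ge \mathbb{E}[V]\ge Bid$.

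Next I handle the sell branch using \eqref{eq:bid_unc}. The condition $Bid<c$ reads $(1-\alpha)\eta p<c\big((1-\alpha)\eta+\alpha(1-p)\big)$. Since $p>c$, this is equivalent to $\eta<\frac{c\alpha(1-p)}{(1-\alpha)(p-c)}=K$. I will note that the bid is increasing in $\eta$, with $\partial Bid/\partial\eta\propto(1-\alpha)p\,\alpha(1-p)>0$, so the set $\{Bid<c\}$ is exactly the interval $[0,K)$. The same inequality shows that $K<1$ exactly when $c$ is below the maximal bid $\frac{(1-\alpha)p}{1-\alpha p}$, which is the assumption imposed on $c$; hence the upper limit $K$ lies inside the support $[0,1]$.

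It remains to compute $P(Sell|\eta)$ unconditionally on $V$, which is what the regulator perceives. Noise traders sell with probability $g(1-\alpha)\eta$, and informed traders sell with probability $g\alpha$ when $V=0$, which has probability $1-p$. So $P(Sell|\eta)=(1-p)g\alpha+(1-\alpha)g\eta$. Substituting gives \eqref{eq:probq}.

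For the uniform case I set $f\equiv1$ on $[0,1]$ and integrate. This gives $(1-p)\alpha gK+\tfrac{1-\alpha}{2}gK^2$, using $K\le 1$ from the previous step.

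The only delicate point is the bookkeeping on the threshold assumption. It must deliver two things: that the no-trade and ask prices exceed $c$, and that $K<1$, so the integral is not truncated by the support. I expect both to follow from the algebraic rearrangements above.
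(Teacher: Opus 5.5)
Your proposal is correct and follows essentially the same route as the paper: condition on $\eta_s$, note that only the sell branch can put the price below $c$ and that $Bid<c$ iff $\eta_s<K$, then integrate $P(Sell|\eta_s)=(1-p)g\alpha+(1-\alpha)g\eta_s$ against $f$. Your explicit checks fill in what the paper's proof leaves implicit under the phrase \emph{if $c$ is small enough}: that the ask and no-trade prices exceed $c$, and that the assumption $c<\frac{(1-\alpha)p}{1-\alpha p}$ is exactly $K<1$, which the uniform closed form needs.
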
 
\begin{proof}
	See \mysecIA{app:proof_lemma1}. 
\end{proof}

According to \mylem{lem:prob}, a sharp decline in prices is more likely to happen (i.e., $P(q<c)\,\uparrow$) when {\it noise} traders are unlikely to sell (i.e., $ P(\eta_s<K)\,\uparrow$), and the expected probability of a sell order potentially coming from {\it informed}  traders with negative information is high (i.e., $ \mathbb{E}\left[P(Sell|\eta_s)\mid\eta_s<K\right]\,\uparrow$).
Indeed, when the probability of a sell-or-short order from {\it noise} traders is high, it is unlikely that {\it market makers} will set a low bid price, as any sell order is unlikely to reflect valuable information about $V$. For tractability reasons, the second part of the lemma employs a standard uniform distribution for $\eta_s$, which implies that the {\it regulator} has an uninformative prior about the liquidity needs of {\it noise} traders. In \mylem{lem:SSD} in \mysecIA{app:technical}, moreover, we specify a general family of distributions and show that $P(q<c)$ increases as the variance of $\eta_s$ increases. That is, the  {\it regulator} is more likely to impose a ban when uncertainty about the liquidity needs of the {\it noise} traders increases. We can then obtain a necessary condition for the implementation of a short-selling ban. 
\begin{proposition}\label{prop:zeta}
There exists a threshold $\zeta$ such that the {\it regulator} imposes a ban on short sales \textit{only if}:
$$\frac{c\alpha(1-p)}{(1-\alpha)(p-c)}>\zeta.$$
\end{proposition}
\begin{proof}
	See \mysecIA{app:proof_result1}
\end{proof}

In particular, the {\it regulator} may impose a ban when the probability of a high payoff is low (i.e., $p \downarrow$),  the probability mass of {\it informed} traders is high ($\alpha \uparrow$), or when the desired support for the left tail of prices (captured by the threshold $c$) increases.\footnote{These conditions are likely to arise during a crisis, which can also be driven by country-specific shocks. In our empirical analysis, we control for the latter using sovereign CDS spreads.} It is important to emphasize that, in our setting, the {\it regulator}'s primary concern is preventing a sudden drop in prices. This means that the {\it regulator} will not consider the impact of his policy action on the informational efficiency of the market. For example, the regulator could potentially harm market efficiency by imposing a short-selling ban when the market fundamentals are weak, as indicated by a low $p$. While the conditions from \myres{prop:zeta} give us the potential trigger for the {\it regulator}'s intervention, the actual enforcement of bans will also depend on the impact of the new rules and the subsequent distribution of the price. The following section thus examines the pricing implications of a short-selling ban imposed by the {\it regulator}. In doing so, we can endogenously determine the conditions under which a regulator would optimally choose to restrict short-selling activity. 

\subsection{Restrictions to Short Sales}
When short sales are allowed, the {\it market maker} cannot distinguish a sale order from a short sale. However, when short selling is prohibited, only investors holding the stock can sell it, and {\it market makers} form bid prices taking this information into account. Importantly, the fraction of {\it informed} traders owning the stock ($h_I$) may differ from the fraction of {\it noise} traders possessing the stock ($h_N$), thus affecting the adverse selection faced by {\it market makers}. Here, we study the effect of short-selling bans on the bid-ask spread and get the following lemma:

\begin{lemma}\label{lem:bidask}
When short-selling bans are in place, the bid-ask spread increases, relative to the unconstrained case, if and only if $\frac{h_I}{h_N}>1.$
\end{lemma}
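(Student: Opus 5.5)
Under a ban, a sell order can only come from a stockholder. I will recompute the bid exactly as in \myeqx{eq:bid}. With ownership probabilities $h_I$ and $h_N$, the conditional and unconditional sell probabilities become
$P(Sell|V=1)=g(1-\alpha)\eta_s h_N$ and $P(Sell)=g(1-\alpha)\eta_s h_N+g\alpha(1-p)h_I$. Hence
\begin{equation*}
Bid^{B}=\frac{(1-\alpha)\eta_s h_N p}{(1-\alpha)\eta_s h_N+\alpha(1-p)h_I}
=\frac{(1-\alpha)\eta_s p}{(1-\alpha)\eta_s+\alpha(1-p)\frac{h_I}{h_N}}.
\end{equation*}

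Next I argue that the ask is unchanged. The ban restricts only sell-or-short orders, so the buy side is not affected: $P(Buy|V=1)$ and $P(Buy)$ are as in the unconstrained case. Therefore $Ask^{B}$ coincides with \myeqx{eq:ask_unc}. Traders who wanted to short but hold no stock simply do not trade. This only changes the probability of the no-trade event, whose price remains $p$ and does not enter the spread. It follows that the change in the spread equals $Bid-Bid^{B}$, where $Bid$ is given by \myeqx{eq:bid_unc}.

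The comparison is then immediate. Write $Bid$ and $Bid^{B}$ as $N/(N+D)$ and $N/(N+D\,h_I/h_N)$, with $N=(1-\alpha)\eta_s p$ and $D=\alpha(1-p)$. For $\alpha\in(0,1)$, $p\in(0,1)$ and $\eta_s>0$, the quantities $N,D$ are strictly positive. The map $r\mapsto N/(N+Dr)$ is then strictly decreasing in $r$. So $Bid^{B}<Bid$, equivalently the spread strictly increases, if and only if $h_I/h_N>1$; equality holds exactly when $h_I=h_N$, which recovers the Diamond--Verrecchia neutrality result.

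The only delicate step is justifying that the ask and the relevant probabilities on the buy side are untouched by the ban. I also need to make explicit that $h_N>0$, so that the ratio is defined, and that the nondegenerate parameter restrictions hold, so the monotonicity is strict. I would state these assumptions at the start of the proof. The remaining algebra is routine.
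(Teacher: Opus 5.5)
Your proposal is correct and follows the paper's own route. You derive $\widetilde{Bid}$ as in \myeqx{eq:bid_con}, observe that $\widetilde{Ask}$ in \myeqx{eq:ask_con} coincides with \myeqx{eq:ask_unc}, and compare the two bids; your strict monotonicity of $r\mapsto N/(N+Dr)$ just makes the biconditional explicit where the paper reads it off the formulas. One side remark is wrong but harmless: under the ban the no-trade price is not $p$ but $\widetilde{\text{\it No Trade}}<p$, because the market maker infers that constrained informed sellers may be behind a no-trade event — since that price never enters the spread, simply drop the claim.
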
 

Intuitively, when the fraction of informed traders owning the asset exceeds that of noise traders ($h_I > h_N$),  a short-selling ban makes any sell order relatively more likely to originate from an {\it informed} trader. That is, conditional on observing a sell order, market makers infer that the order is more likely to originate from an informed trader with negative information, thereby increasing adverse selection and widening spreads.\footnote{A related intuition appears in \citet{Dixon2021}, although their mechanism operates through endogenous information acquisition rather than exogenous differences in stock ownership.} Mathematically, we  first derive bid and ask prices when short selling is restricted and then draw a comparison with bid and ask prices formed when short selling is allowed. The bid price set by {\it market makers} under short-selling bans is given by
\begin{equation}\label{eq:bid_con}
	\widetilde{Bid}=\frac{(1-\alpha)h_N\eta_s p}{\alpha  h_I(1-p)+(1-\alpha)h_N\eta_s},
\end{equation}
whereas the corresponding ask price is equal to
\begin{equation}\label{eq:ask_con}
	\widetilde{Ask}= \frac{ (1-\alpha)(1-\eta_s)p + \alpha p}{(1-\alpha)(1-\eta_s) + \alpha p}.
\end{equation}
The comparison of \myeqs{eq:bid_unc} and \myeqx{eq:bid_con} reveals that the difference between $\widetilde{Bid}$ and $Bid$ depends on the ratio between {\it informed} and {\it noise} traders holding the stock, and $\widetilde{Bid} < Bid$ when $h_I/ h_N > 1$. From \myeqs{eq:ask_unc} and \myeqx{eq:ask_con}, we instead see that $\widetilde{Ask}$ and $Ask$ remain identical as short-selling bans do not affect the buying activity of traders. Therefore, the bid-ask spread, a commonly used measure of liquidity, widens under short-selling bans when $h_I > h_N$. %

When short-selling is prohibited and there is no trade, {\it market makers} will condition on $\eta_s$ to determine their no trade price. When $\eta_s$ is low (high), {\it market makers} may attribute the absence of trade to constrained {\it informed} ({\it noise}) traders who cannot short the stock, thus updating downwards (upwards) their expectation of $V$. As a result, the no trade price is computed as $\widetilde{\text{\it No Trade}}=E[V|\text{\it No Trade},\eta_s]=\frac{((1-g)+g(1-\alpha)\eta_s(1-h_N))p}{1-g+g((1-\alpha)\eta_s(1-h_N)+\alpha(1-p)(1-h_I))}$, which is increasing in $\eta_s$ and is always less than $p$. 
A graphical illustration of the relationship between $\eta_s$ and the prices quoted by {\it market makers} is provided in \myfigIA{fig:simulated_bid_ask}.

We now examine the probability of a sharp decline in price when the {\it regulator} imposes bans on short sales, and use $\tilde{q}$ to refer to the equilibrium price of the risky asset from the {\it regulator}'s perspective.
Similar to \mylem{lem:prob}, we first determine the region of $\eta_s$ that makes a bid price sufficiently low.\footnote{We assume that $\smash{c<\min\left\{\frac{(1-\alpha)h_{N} p}{\alpha  h_I(1-p)+(1-\alpha)h_N},\frac{(1-g)p}{1-g+g\alpha(1-p)(1-h_I)}\right\}}$, where the first term is the maximum bid price and the second term is the minimum possible no trade price. Then, $\smash{\widetilde{Bid}<c}$, {\it iff} $\smash{\eta_s<(h_I/h_N)K}$.} Hence, we obtain the following lemma:
\begin{lemma}{\label{lem:conprob}}
When short-selling bans are in place, the regulator's perceived probability that the price $\tilde{q}$ falls below the threshold $c$ is equal to:
$$P(\tilde{q}<c)=\int_{0}^{\frac{h_I}{h_N}K}((1-p) \alpha g h_I+(1-\alpha)g h_N\eta)f(\eta)d\eta,$$
where $K=\frac{c\alpha(1-p)}{(1-\alpha)(p-c)}$.
 In particular, if $\eta_s\sim U(0,1)$ then: 
\begin{align}\label{eq:probtildeq}
P(\tilde{q}<c)=\frac{h_I^2}{h_N}P(q<c).
\end{align}
\end{lemma}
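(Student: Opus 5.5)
The plan is to repeat the argument behind \mylem{lem:prob}, now with the prices in force under a ban. Conditional on $\eta_s$, the regulator regards $\tilde q$ as a three-point random variable. It equals $\widetilde{Ask}$ with probability $P(Buy|\eta_s)$, $\widetilde{Bid}$ with probability $P(Sell|\eta_s)$, and $\widetilde{\text{\it No Trade}}$ with probability $P(\text{\it No Trade}|\eta_s)$. Hence
$$P(\tilde q<c\mid \eta_s)=\sum_{o\in\{Buy,Sell,\text{\it No Trade}\}} P(o\mid\eta_s)\,\mathbbm{1}\{\text{price}_o(\eta_s)<c\},$$
and integrating over $f(\eta)$ gives $P(\tilde q<c)$.

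First I dispose of the buy and no-trade outcomes. The ask satisfies $\widetilde{Ask}\ge p$, because buys are weakly good news. The no-trade price is increasing in $\eta_s$, so it is bounded below by its value at $\eta_s=0$, which is $\frac{(1-g)p}{1-g+g\alpha(1-p)(1-h_I)}$. The maintained assumption in the footnote places $c$ below this bound, and $p$ exceeds it, so neither indicator is ever active.

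Next I handle the sell outcome. Solving $\widetilde{Bid}<c$ from \myeqx{eq:bid_con} means rearranging $(1-\alpha)h_N\eta_s p< c\alpha h_I(1-p)+c(1-\alpha)h_N\eta_s$. This is equivalent to $\eta_s(1-\alpha)h_N(p-c)<c\alpha h_I(1-p)$, that is, $\eta_s<\frac{h_I}{h_N}K$. Here I use $p>c$, which follows from $c$ lying below the maximum bid and hence below $p$.

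Under the ban a sell is observed only when the seller owns the stock. An informed trader sells only if $V=0$ and she holds a share, and a noise trader sells with probability $\eta_s$ if he holds a share. This gives
$$P(Sell\mid\eta_s)=g\alpha(1-p)h_I+g(1-\alpha)h_N\eta_s.$$
Multiplying by the indicator and integrating over $\eta$ yields the stated integral with upper limit $\frac{h_I}{h_N}K$.

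For the uniform case I integrate directly:
$$P(\tilde q<c)=(1-p)\alpha g h_I\cdot\frac{h_I}{h_N}K+\frac{(1-\alpha)g h_N}{2}\left(\frac{h_I}{h_N}K\right)^2=\frac{h_I^2}{h_N}\Big[(1-p)\alpha gK+\frac{1-\alpha}{2}gK^2\Big].$$
The bracket is exactly $P(q<c)$ from \mylem{lem:prob}, which gives \myeqx{eq:probtildeq}.

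One caveat concerns the support of $\eta_s$. The upper limit $\frac{h_I}{h_N}K$ must not exceed $1$, otherwise the integral has to be truncated at $1$. I would note that this holds under the stated bound on $c$: that bound is equivalent to $\frac{h_I}{h_N}K<1$, since it says $c$ is below the bid at $\eta_s=1$. With this checked, the main obstacle reduces to the bookkeeping of the no-trade price monotonicity, which the paper asserts and I would verify by differentiating in $\eta_s$.
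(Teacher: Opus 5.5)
Your proposal is correct and follows essentially the same route as the paper. The paper also reduces the general formula to the argument of Lemma~\ref{lem:prob}, with $\frac{h_I}{h_N}K$ in place of $K$. For the uniform case it writes $P(\tilde q<c)=\bigl((1-p)g h_I\alpha+(1-\alpha)g h_N\,\mathbb{E}[\eta_s\mid\eta_s<\tfrac{h_I}{h_N}K]\bigr)P(\eta_s<\tfrac{h_I}{h_N}K)$ and uses linearity of the uniform CDF, which amounts to your direct integration. Your check that the bound on $c$ is equivalent to $\frac{h_I}{h_N}K<1$, used together with $K<1$ from the assumption behind Lemma~\ref{lem:prob}, makes explicit a condition the paper only uses implicitly.
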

\begin{proof}
	See \mysecIA{app:proof_lemma3}
\end{proof}
\myfig{fig:probability_q} displays the relationship between $K$ (an increasing function of $c$) and the probability that the equilibrium price falls below the price threshold $c$ under different scenarios using $\alpha=0.5$, $p= 0.5$, $g= 0.9$, $h_N=0.3$, and $h_I = 0.2$ (so that $h_I^2/h_N = 0.13$) or $h_I = 0.6$ (so that $h_I^2/h_N = 1.2$). The black line corresponds to the baseline scenario without any restrictions on short-selling. Based on \myres{prop:zeta}, there exists a value $\zeta$ so that when $K>\zeta$, the likelihood of a significant decrease in price exceeds the confidence level $x$, potentially leading to a regulatory intervention. The red (blue) line, moreover, denotes the scenario with short-selling bans and shows how the introduction of a short-selling ban leads to an increase (decrease) in the probability of a sharp price decline when $h_I^2/h_N > 1$ ($h_I^2/h_N < 1$). In particular, the difference between the baseline scenario (short selling is allowed) and the alternative scenario (short selling is prohibited)  depends on  $h_I^2/h_N$, which can be seen as the product of $h_I$ and $h_I/h_N$. The former affects the conditional likelihood of a sell order, whereas the latter impacts the probability that the bid price is low. 
\begin{center}
	\textsc{\myfig{fig:probability_q} about here} 
\end{center}
\myeq{eq:probtildeq} assumes that the liquidity needs of {\it noise} traders $\eta_s$ follow a standard uniform distribution. However, $P(\tilde{q}<c)$ would remain an increasing function of $\frac{h_I^2}{h_N}$ also for alternative distributions of $\eta_s$ (for example, for $\eta_s\sim Beta(a,1)$, with $a\geq 1$).
Therefore, we can now get a necessary and sufficient condition for the short-selling ban to be successful:
\begin{proposition}\label{prop:condition}
If $\eta_s\sim U(0,1)$, the {\it regulator} successfully manages to decrease the probability that the price falls below the prespecified level $c$ if and only if 
\begin{equation}\label{eq:assum}
    h_I^2<h_N.
\end{equation}
\end{proposition}

According to \myres{prop:condition}, the effectiveness of a restrictive short-selling policy depends on whether the proportion of {\it noise} traders holding the stock is sufficiently larger than the proportion of {\it informed} traders owning the stock. The {\it regulator} can achieve his goal by reducing the probability that there will be a sell order, or by reducing the likelihood that the bid price will be smaller than $c$. In particular, when the fraction of {\it informed} traders owning the asset is relatively low ($h_I<h_N$), a short-selling ban prevents a larger fraction of {\it noise} traders from selling, thereby reducing adverse selection and supporting prices. However, even when $h_I>h_N$, there may still be cases in which a ban reduces the likelihood of a rapid price drawdown by converting some events that would otherwise occur as short sales into less informative no-trade events. In times when irrational exuberance has led many {\it noise} traders to own the asset and has driven {\it informed} investors away, the intervention of the {\it regulator} will be more warranted and more successful. If, however, the proportion of {\it informed} traders who own the stock is large relative to the corresponding fraction of {\it noise} traders, then imposing bans may not even support prices.\footnote{Previous empirical literature has hinted towards a mixed result concerning whether bans succeed in supporting stock prices. We explore this issue further in the next section.} Having examined the effect of bans on the left tail of prices, we now explore their effects on key measures of central tendency. More specifically, assuming for simplicity that the probabilities of a buy and of a sell order are both less than 1/2 (e.g., if $g<1/2$), we have: 

\begin{lemma}\label{lem:meanmedian}
When short-selling bans are implemented, the mean price remains unchanged, while the median price decreases. 
\end{lemma}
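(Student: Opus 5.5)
The plan is to treat both claims conditionally on $\eta_s$ first, and then integrate against $f(\eta_s)$ to get the regulator's compound distribution. In both regimes, conditional on $\eta_s$, the price is a three-point random variable. It takes the bid, the no-trade price or the ask, with probabilities $P(Sell\mid\eta_s)$, $P(\text{\it No Trade}\mid\eta_s)$ and $P(Buy\mid\eta_s)$ respectively.

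\emph{Mean.} Each quoted price is a conditional expectation of $V$ given the observed order type and $\eta_s$. This is $\mathbb{E}[V\mid Sell]$, $\mathbb{E}[V\mid Buy]$ and $\mathbb{E}[V\mid \text{\it No Trade}]$ in the unconstrained case, and the analogues \myeqx{eq:bid_con}, \myeqx{eq:ask_con} and $\widetilde{\text{\it No Trade}}$ under the ban. By the law of iterated expectations, $\mathbb{E}[q\mid\eta_s]=\mathbb{E}[V]=p$ and $\mathbb{E}[\tilde q\mid\eta_s]=p$ for every $\eta_s$. Integrating over $f$ gives $\mathbb{E}[q]=\mathbb{E}[\tilde q]=p$. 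As a sanity check, one can verify directly from \myeqx{eq:bid_con} and the formula for $\widetilde{\text{\it No Trade}}$ that each price times its probability equals $P(V=1,\cdot\mid\eta_s)$, and that these terms sum to $p$.

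\emph{Median.} The key step is that, conditionally on $\eta_s$, the no-trade price is the median. This needs two ingredients. The first is the assumption $P(Sell\mid\eta_s)<1/2$ and $P(Buy\mid\eta_s)<1/2$. The second is the ordering $\text{Bid}\le \text{\it No Trade}\le\text{Ask}$. Given both, the mass at or below the no-trade price is $1-P(Buy)>1/2$, and the mass at or above it is $1-P(Sell)>1/2$.
\begin{itemize}
\item In the unconstrained case the ordering is immediate, since $\text{\it No Trade}=p$, $Bid<p$ and $Ask>p$. The conditional median is therefore $p$ for every $\eta_s$. Hence $P(q\le p)\ge 1/2$ and $P(q\ge p)\ge1/2$ after integration, so the median of $q$ is $p$.
\item Under the ban, $\widetilde{Ask}=Ask>p>\widetilde{\text{\it No Trade}}$ holds by the text.
\end{itemize}

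The main obstacle is showing $\widetilde{Bid}\le\widetilde{\text{\it No Trade}}$. I would do this with likelihood ratios. A posterior is increasing in $P(\cdot\mid V=1)/P(\cdot\mid V=0)$, so it suffices to compare the two ratios. For a sell order the ratio is
\[
\frac{(1-\alpha)h_N\eta_s}{(1-\alpha)h_N\eta_s+\alpha h_I}.
\]
For no trade it is
\[
\frac{1-g+g(1-\alpha)\eta_s(1-h_N)}{1-g+g\bigl((1-\alpha)\eta_s(1-h_N)+\alpha(1-h_I)\bigr)}.
\]
The target is to show the first is smaller; I would first try cross-multiplying and using $g<1/2$ (so $1-g>g$). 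If the inequality failed for some parameters, I would instead note that it is not strictly needed whenever $P(\text{\it No Trade}\mid\eta_s)\ge 1/2$. That condition holds, for instance, if $P(Sell)+P(Buy)\le g<1/2$ under the simplifying assumption $g<1/2$, and then the no-trade point is the median regardless of the ordering. I expect this fallback to close the argument.

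Finally, I would aggregate over $\eta_s$. Since $\widetilde{\text{\it No Trade}}(\eta_s)<p$ and the sell mass is positive, for each $\eta_s$ we have
\[
P(\tilde q<p\mid\eta_s)\ \ge\ P(\tilde q\le\widetilde{\text{\it No Trade}}\mid\eta_s)\ >\ 1/2 .
\]
Integrating over $f$ gives $P(\tilde q<p)>1/2$, so the median of $\tilde q$ is strictly below $p$, the unconstrained median.
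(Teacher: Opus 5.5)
Your argument is correct and is essentially the paper's. The law of iterated expectations gives $\mathbb{E}[\tilde q]=\mathbb{E}[q]=p$, and the median comparison rests on the no-trade price being the median in both regimes together with $\widetilde{\text{\it No Trade}}<p$; your explicit conditioning on $\eta_s$ and integration against $f$ makes rigorous an aggregation step the paper leaves implicit.

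Your cross-multiplication attempt would in fact fail. The inequality $\widetilde{Bid}\le\widetilde{\text{\it No Trade}}$ reduces to $g(1-\alpha)\eta_s(h_N-h_I)\le(1-g)h_I$, which is violated when $h_I$ is small relative to $h_N$, even with $g<1/2$. So the $g<1/2$ fallback, with no-trade mass at least $1-g>1/2$, is what actually closes the proof, and the paper's claim that the median is the no-trade price implicitly relies on the same condition.

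You could also avoid the ordering altogether. Since $\widetilde{Bid}<p$ and $\widetilde{\text{\it No Trade}}<p<\widetilde{Ask}$, you get $P(\tilde q<p\mid\eta_s)=1-P(Buy\mid\eta_s)>1/2$ directly.
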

\begin{proof}
	See \mysecIA{app:proof_lemma4}
\end{proof}

\subsection{Testable Hypotheses} \label{sec:hypotheses}
We use our simple model to guide our empirical investigation on assessing the impact of short-selling bans on prices and liquidity. Before formalizing our testable hypothesis, we should find a suitable empirical proxy for $h_I$ and $h_N$. 
Recent literature \citep[e.g.,][]{Boehmer2009,bai2016have,davila2022identifying} documents that stocks with high institutional ownership have higher price informativeness, thus suggesting that institutional investors can be regarded as \textit{informed} traders. Building on this finding, let $m$ be a conjugate probability denoting the fraction of the risky stock owned by {\it informed} traders and $1-m$ the corresponding probability for {\it noise} traders. 
If we assume that {\it informed} traders are mostly financial institutions, then $m$ can be quantified with the institutional ownership of a stock.\footnote{While institutional ownership is an imperfect proxy for informed ownership, prior literature documents that stocks with higher institutional ownership exhibit greater price informativeness on average. Moreover, our interpretation does not require all institutions to be informed traders, but only that institutional ownership is positively associated with the relative presence of sophisticated investors.} Using the Bayes' rule and assuming that the fraction of {\it informed} investors (say $\alpha$) in the whole economy is fixed, we first obtain $\frac{h_I}{h_N}=\frac{m}{1-m}\frac{1-\alpha}{\alpha}$ and then in the proportional form 
\begin{equation*}
\frac{h_I}{h_N}\propto\frac{m}{1-m} \qquad \text{and} \qquad \frac{h_I^2}{h_N}\propto\frac{m^2}{1-m}.
\end{equation*}

According to our model, the {\it regulator} should only impose bans when $h_I^2/h_N <1$. Hence, in markets  populated by many sophisticated stock owners, the {\it regulator} should refrain from imposing any short-selling restrictions. We thus formalize the following hypothesis:
\setcounter{hypothesis}{0}
\begin{hypothesis}
\label{h:imposingban}
Short-selling bans are more likely to be imposed by regulators in markets with low institutional ownership.
\end{hypothesis}

Short-selling bans affect the bid-ask spread by altering the bid price. This happens as the composition of the potential sellers changes with $h_I/h_N$, that is, the relative likelihood that an {\it informed} investor owns the stock. Our model predicts that when $h_I/h_N>1$, it is more likely that a sell order is initiated by an {\it informed} trader and the {\it market maker} submits a lower bid. Moreover, stocks with a larger number of sophisticated owners are more likely to experience wider bid-ask spreads after the introduction of short-selling bans. This happens as adverse selection worsens relative to the unconstrained case, as the fraction of {\it informed} sellers is relatively larger. When short selling is allowed, a sell order may arise from any {\it informed} or {\it noise} trader in the economy, and the {\it market maker} adjusts his expectation of the payoff, depending on the overall relative masses of {\it informed}-to-{\it noise} traders. In contrast, when short-selling is prohibited, the pool of potential sellers changes and includes only those who already own the stock. Therefore, the fraction of {\it informed} traders holding the stock becomes relevant. The higher this fraction, the more the {\it market maker} thinks that a sell order contains information and adjusts the bid price downwards. Since the bid-ask spread is commonly used as a measure of liquidity, our model is consistent with the following hypothesis:

\begin{hypothesis}\label{h:liquidity}
Under certain conditions ($h_I>h_N$), short-selling bans lead to a deterioration in liquidity; moreover, the higher the institutional ownership of a stock, the larger the increase in bid-ask spreads.
\end{hypothesis}

In our model, the {\it regulator} can successfully reduce the probability of a sharp price decline as long as $h_I^2/h_N<1$. Since this effect dominates any changes in the bid price, a short-selling ban leads to a thinner left tail of the price distribution, and following a similar logic as before, the lower the fraction of \textit{informed} stockholders the more pronounced this effect becomes. Also, consistent with the findings of \citet{Beber2013}, our model also predicts that short-selling bans have no impact on the mean return. The median return, however, decreases under short-selling bans as no-trade actions are more likely to reflect negative news. This hypothesis is also discussed in \citet{Dixon2021}, while the focus in \citet{Diamond1987} is on the dynamics of the bid-ask spread. %
From the above discussion, we can then derive the following hypothesis on the effect of bans on the distribution of prices:
\begin{hypothesis}\label{h:support}
Under certain conditions ($h_I^2<h_N$), short-selling bans support the left tail of returns; moreover, the lower the institutional ownership of a stock, the higher this support.
Finally, the median return decreases relative to the unconstrained case, while the mean remains the same.
\end{hypothesis}

Overall, it can very well be the case that a {\it regulator} manages to avoid a huge price drop (if $h_I^2/h_N<1$) while causing a deterioration of liquidity (if $h_I/h_N>1$). However, it is also possible that a non-optimal imposition of short-selling bans can have a negative effect on both the left tail of returns and on liquidity. We leave the study of this trade-off of regulators for future work. 

\subsection{Discussion of the model}\label{sec:discmodel}
Our model, despite being highly stylized, offers a number of testable predictions. However, its simplicity also entails certain limitations that are worth discussing. First, our model is static and therefore designed to capture the short-term, transitory effects of short-selling bans during episodes of heightened uncertainty, such as the early stages of the COVID-19 crisis, when market participants were still learning about the economic consequences of the shock and the terminal value of assets was not yet fully understood. In this environment, prices adjust progressively through trading, and short-selling bans may affect economically meaningful intermediate-period outcomes by limiting destabilizing price spirals while uncertainty about fundamentals remains unresolved. As such, the model’s predictions apply to intermediate trading periods and do not extend to the final resolution of uncertainty, when the asset’s payoff is realized.  Similarly, we measure liquidity conditions using average bid-ask spreads but the ban may have a non-trivial effect on the dynamics of spreads over time. Moreover, since we only consider a market with a single risky asset, our cross-sectional predictions do not take into account the interactions between the returns of different assets and the changes in investor portfolios.  We leave such extensions for future work, where one can also study the effects of lifting the bans, distinguish between short-term and longer-term effects, and make further inferences concerning the differential effect of bans on various stocks. 

Second, we exogenously assume that the objective of the {\it regulator} is to avert a sharp decline in prices in the short term. Although this is consistent with the goal of regulators to ensure financial stability and maintain market confidence, in practice, regulators may also consider the effects of short-selling bans on liquidity and price informativeness. Third, we have implicitly made the assumption that the model parameters remain unchanged after the introduction of the ban. This is a simplifying assumption, but it could have important implications if some of these parameters change endogenously. For example, the decision to impose a ban could change the incentive of investors to acquire information about a stock and could, thus, affect the parameters $h_I, h_N$ of the model \citep{Dixon2021}. Finally, in the model, there can be instances where no trade takes place. In these cases, we assume that the price of the asset is equal to the updated expectation of the payoff from the perspective of {\it market makers}. However, in the data,  we only observe transaction prices and to avoid a censored sample bias problem \citep[see, for example, Section 5.3 of][]{Diamond1987}, we exclude from our empirical analysis all micro and nano-cap stocks, which may be less actively traded.

\section{Data Description and Preliminary Analysis}\label{sec:bans_data}
In this section, we begin by reviewing the European regulation on short selling and how restrictions were implemented in some European countries during the COVID-19 outbreak. Next, we describe our datasets before proceeding with a preliminary empirical assessment.

\subsection{Short-Selling Bans in European Markets}\label{subsec:short_selling_ban_periods}
In response to the global financial crisis that peaked with the collapse of Lehman Brothers, supervisory authorities around the world enacted various emergency measures limiting the short sale of certain financial instruments. At the time, the European Union lacked a common regulatory framework, and its member states adopted different short-selling policies to halt the downward spiral in securities prices. The need to implement a harmonized legislation to regulate short-selling activity led the European Union to introduce the Short Selling Regulation in November 2012. This regime applies to any legal person undertaking the short sales of shares, sovereign debt, sovereign credit default swap, and related instruments traded on a trading venue located in the European Economic Area, and gives a wide set of supervisory powers to the European Securities and Markets Authority (ESMA). However, national regulators have the power to impose temporary short selling restrictions on any financial instrument, if there is a serious threat to financial stability or to market confidence.\footnote{The UK was subject to this framework until December 2020, when the Brexit transition period ended.} 

With the outbreak of COVID-19, European equity markets experienced a substantial amount of volatility and a sharp price decline. Under these exceptional circumstances, six European regulatory authorities notified ESMA of their intention to temporarily outlaw the short sales of stocks in a quasi-synchronized fashion. On 17 March 2020, Belgium, France and Spain exercised their right under Article 23 of the Short Selling Regulation and decided to introduce a temporary ban on taking or increasing net short positions with respect to all shares admitted to their trading venues. Identical measures were then adopted by Greece, Italy, and Austria, effective from March 18, 2020.\footnote{{On the same day, the European Central Bank also launched a $\text{\euro}750$ billion temporary asset purchase programme (the Pandemic Emergency Purchase Programme (PEPP)), which focused on the fixed income market.}} Initially, the bans were introduced for a period of approximately one month, with the exception of Italy, which proposed a ban of three months. These emergency measures were deemed appropriate by ESMA given the adverse situation linked to the COVID-19 outbreak. However, on 15 April 2020, Austria, Belgium, France, Greece, and Spain informed ESMA of their intention to extend the ban on short sales for another month. ESMA agreed on the renewal, and the bans remained in place until May 18, 2020.  Meanwhile, the Italian Authority also chose to terminate the short-selling restriction on May 18, 2020, in accordance with the decisions made by the other European countries. The dates on these decisions have been gathered from the websites of the European Regulators. \mytabIA{tab:timeline_restrictions} provides a description of these actions and links to the relevant websites. 

Overall,  while the proposed duration of short-selling bans varied slightly, all countries eventually decided to lift the short-selling restrictions on the same date. The scope of the bans applied to any natural or legal person, regardless of where they were located, and covered all stocks traded in cash and derivatives markets, including American Depository Receipts and bearish intraday operations. The prohibitions did not apply to market-making activities or trading in index-related instruments. Exceptions also included convertible bond arbitrage with a delta-neutral structure and short positions hedged by a purchase that is equivalent in terms of subscription rights. Index-related instruments in which restricted shares represented more than a given country-specific threshold were also exempted. Initially, the thresholds were 20\% for Belgium, Greece, and Italy, and 50\% for France and Spain. From April 15 onward, a uniform threshold of 50\% was adopted for all countries.

We test the predictions of our model by exploiting a quasi-natural experiment associated with the introduction of temporary short-selling bans in Europe. In our empirical analysis, the  pre-intervention period runs from  February 17 to March 16, 2020, whereas the post-intervention period goes from March 17 to April 15, 2020. Both periods are intended to capture a month of observations before and after the regulatory shock. As shown in the robustness section, our results remain both quantitatively and qualitatively similar if we instead employ two months of observations before and after the regulatory change, i.e., from  January 17 to March 16, 2020, for the pre-treatment period and from March 17 to May 15, 2020 for the post-treatment period. Additionally, our treatment group will consist of individual stocks traded in Austria, Belgium, France, Greece, Italy, and Spain, while the control group will include individual stocks traded in other European countries. While the selection within each block is not random, in our robustness section, we will further employ firm characteristics and match stocks between treated and untreated countries so that they can be regarded as largely comparable. This is an important exercise, as causal inference could be compromised if the group of treated stocks was systematically different from the group of control stocks.

\subsection{Data on European Stock Markets}
We collect stock market data for 17 European markets, that is, Austria, Belgium, Denmark, Finland, France, Germany, Greece, Ireland, Italy, Netherlands, Norway, Poland, Portugal, Spain, Sweden, Switzerland, and the United Kingdom, from Datastream. Our dataset consists of daily bid and ask prices and total return indices between January 2 and June 2, 2020. After dropping micro and nano-cap stocks (i.e., stocks with a market capitalization below \$250 million) and removing observations with a negative spread between bid and ask prices, we end up with a sample of 1,922 stocks corresponding to more than 200 thousand daily observations. We convert all prices to US dollars using daily spot exchange rates sourced from the same database, to ensure comparability across markets.
\begin{center}
	\textsc{\mytab{tab:data_description} about here}
\end{center}
\mytab{tab:data_description} summarizes the key aspects of this dataset at the country level and shows that, after our data-cleaning, the stock markets with the largest number of stocks are located in the United Kingdom (448 stocks), France (225 stocks), and Germany (209 stocks). In contrast, the stock markets of Portugal (15 stocks), Ireland (20 stocks), and Greece (28 stocks) have the fewest number of stocks overall. Further, we categorize companies based on their market capitalization and distinguish between small-cap (market capitalization between 250 million and 2 billion dollars), mid-cap (market capitalization between 2 billion and 10 billion dollars), and large-cap companies (market capitalization of at least 10 billion dollars). We find that 59\% of the stocks are small-cap companies, while the remaining 41\% of the stocks are either mid-cap or large-cap companies. 

\subsection{Other Data}
We combine our stock market data with other datasets. First, we collect data on institutional ownership, i.e., the percentage of shares outstanding held by institutions, as of December 2019 from Bloomberg. Institutions include 13Fs, US and international mutual funds, schedule Ds (US insurance companies), institutional stake holdings that appear on the aggregate level, and other holdings collected by Bloomberg. When using institutional ownership data, we remove securities for which institutional ownership is either negative or greater than 100\%. These are potentially erroneous observations stemming from reporting lags or double counting due to short selling. 
Second, to capture the causal relationship between short-selling bans and returns or liquidity, we need to ensure that unobserved time-varying factors do not cloud the interpretation of our results. As noted by \citet{Beber2013}, sovereign credit default swaps (CDS) may correlate with the regulator's decision to impose short-selling restrictions. Thus, we obtain sovereign CDS spreads from Datastream and use them as a key control variable in all our empirical exercises. Finally, we retrieve the Industry Classification Benchmark (ICB) code for the stocks in our sample via {\it Bloomberg}.

\subsection{Summary Statistics}
Before taking the predictions of our model to the data, we offer a preliminary analysis on bid-ask spreads and stock returns for countries that adopted bans, as well as countries that did not introduce bans, before and after the introduction of short-selling restrictions. As described in \mysec{subsec:short_selling_ban_periods}, the pre-ban period runs between February 17 and March 16 2020, whereas the post-ban window spans the period from March 17 to April 15, 2020. Panel A of \myfig{fig:summary_bar_chart} shows the average bid-ask spreads and documents a major increase during the post-ban period relative to the pre-ban period, especially in countries subject to short-selling bans. For instance, during the pre-ban period, the average bid-ask spread was about $0.72\%$ in countries that implemented restrictions and $0.62\%$ in other countries. Moving to the post-ban period, the average bid-ask spread reached $1.09\%$ in countries with bans and $0.87\%$ in countries without bans. In relative terms, the average bid-ask spreads increased by more than $50\%$ in countries with bans and less than $40\%$ in countries without bans. We also report country-level statistics for bid-ask spreads in \mytabIA{tab:data_summary_bas} and show that bid-ask spreads increased in all countries, with the exception of Poland.
\begin{center}
	\textsc{\myfig{fig:summary_bar_chart} about here} 
\end{center}
Panel B of \myfig{fig:summary_bar_chart} plots the average stock returns before and after the short-selling bans were introduced, and shows virtually no differences between countries with restrictions and countries without. Specifically, the average return was $-1.94\%$ ($-1.98\%$) per day in countries with (without) bans during the pre-ban period, and  $0.61\%$ ($0.65\%$) per day in countries with (without) bans during the post-ban period. We also report country-level statistics for stock returns in \mytabIA{tab:data_summary_returns} but observe no differences across countries. Taken together, these figures are in line with the findings of \citet{Beber2013} that short-selling restrictions are generally detrimental to liquidity and do not support the average level of prices. The next section tests our predictions using difference-in-differences regressions.

\section{Our Main Results}\label{sec:bans_results}
This section provides an empirical evaluation of our model's predictions using the introduction of short-sale bans in certain European countries during the COVID-19 pandemic as a quasi-experimental exercise. In particular, we assess the impact of short-selling bans on market liquidity and stock returns while conditioning on the heterogeneous effect of institutional ownership.

\subsection{Short-Selling Bans and Institutional Ownership}
We start our empirical investigation with \myhyp{h:imposingban}, which suggests that the regulator is more likely to impose a short-selling ban to avert a sharp drop in prices when the fraction of {\it informed} investors holding the stock is low. As explained in \mysec{sec:hypotheses}, moreover, we quantify the share of sophisticated stock owners using stock-level institutional ownership. 
\begin{center}
	\textsc{\myfig{fig:institutional_ownership} about here} 
\end{center}
Panel A of \myfig{fig:institutional_ownership} takes this prediction to the data by plotting the average stock-level institutional ownership by country as of December 2019. We observe that countries that imposed short-sale bans during the COVID-19 pandemic are on the lower end of the institutional ownership compared to countries that did not impose any restrictions.\footnote{There are three exceptions to this general observation: Switzerland, Germany, and Denmark. We reviewed the quality and sources of institutional ownership data in these three countries, and it seems comparable to that of the rest of the countries. Therefore, the idiosyncrasies related to the collection of data in these countries do not appear to be obvious explanations for these exceptions.} In Panel B of \myfig{fig:institutional_ownership}, we pool together all countries that restricted short-selling activity and compare their average stock-level institutional ownership with the same quantity for countries that did not impose any limitations on short sales. We show that countries with bans had substantially lower average institutional ownership than otherwise countries. Taken together, we view \myfig{fig:institutional_ownership} as suggestive evidence in support of \myhyp{h:imposingban}, maintaining that institutional ownership is an important factor in the decision-making process of regulators when considering imposing restrictions on short-selling activity.

\subsection{Short-Selling Bans and Market Liquidity}
Next, we study the effect of short-sale restrictions on stock market liquidity using bid-ask spreads, following the seminal paper of \citet{Beber2013}. To avoid replicating existing results, we briefly summarize our findings here and relegate the full analysis to \mysecIA{app:supplement_analysis}. To assess the impact of the ban, we calculate the average daily bid-ask spread over a window that covers one month before and one month after the introduction of short-selling bans. While liquidity was lower during the ban than in the preceding period as previously discussed,  we cannot conclude that the imposition of short-selling bans \emph{caused} the rise in bid-ask spreads. In \myfigIA{fig:DiD_bas} we can see that the bid-ask spreads for the two groups of countries, those that imposed bans and those that refrained from the policy, moved together prior to the implementation of short-selling bans.  However, following the introduction of these restrictions, bid-ask spreads increased more sharply in the countries that imposed bans, suggesting a potential adverse effect of short-selling bans on market liquidity, while after the bans are lifted on May 18, 2020, the divergence in bid-ask spreads gradually narrows.

We then formalize our findings in \mytabIA{tab:DiD_bas} by reporting the results of a difference-in-differences regression that estimates how short-selling bans differentially affected market liquidity. Average bid-ask spreads increased by 25 basis points during the short-selling ban period, while the spreads of stocks that were affected by the ban experienced an additional widening of 11.6 basis points.  Our results are consistent across many specifications (e.g., with time and stock fixed effects, and controlling for CDS spreads), and they thus support \myhyp{h:liquidity} and are consistent with the findings of \citet{Beber2013} during the global financial crisis.

A key concern with the estimates in \mytabIA{tab:DiD_bas} is the potential endogeneity of short-selling bans. If policymakers tend to impose bans during periods of rising volatility and falling liquidity, the relationship between short-selling bans and market liquidity could not be interpreted as a causal relationship. To address this concern, we employ an instrumental variables strategy, where the first stage determines the likelihood of a ban and the second stage its effect on liquidity. Similar to \citet{Beber2013}, we employ the monthly average sovereign CDS spread and the monthly Financial Stress Index of \citet{Duprey2017} in logs as instruments. In \mytabIA{tab:IV_bas}, the first-stage regression shows that our instruments exhibit strong explanatory power. In the second stage regression, we regress percentage bid-ask spreads on the instrumented short-selling ban indicator, and we find that the coefficient on the estimated ban, ranging between $0.126$ and $0.135$,  is statistically significant and closely aligned with the magnitudes reported in \mytabIA{tab:DiD_bas}.

\subsection{The Role of Institutional Ownership for Market Liquidity}
Next, we test the prediction of the second part of \myhyp{h:liquidity} according to which the negative effect of short-selling bans on liquidity is larger on stocks with higher institutional ownership. That is, when short-selling is prohibited, stocks with higher institutional ownership will experience a greater deterioration in liquidity, manifested in larger bid-ask spreads. This is because the adverse selection facing {\it market makers} will be greater when more {\it informed} investors own the stock and can thus submit a sell order despite the short-selling bans. To test this hypothesis, we split our sample into stocks with low/high institutional ownership and estimate the difference-in-differences regressions of \mytabIA{tab:DiD_bas} in these two subsamples.
We choose the bottom tercile (i.e., institutional ownership up to $41\%$) for stocks with low institutional ownership, and the top tercile (i.e., institutional ownership above $68\%$) for stocks with high institutional ownership. 
\begin{center}
	\textsc{\mytab{tab:DiD_bas_IO} about here} 
\end{center}
The results, presented in \mytab{tab:DiD_bas_IO}, are indeed in line with the prediction of our model. Standard errors are clustered by stock and time (calendar date) dimensions. Bid-ask spreads of stocks with low institutional ownership increase by an additional 8 basis points on average as a result of the short-selling bans, whereas bid-ask spreads of stocks with high institutional ownership increase by 23-29 basis points. Moreover, the impact of short-selling bans on the liquidity of stocks with low institutional ownership is not statistically significant at any of the conventional levels, whereas the estimated impact of 22-29 basis points on stocks with high institutional ownership is statistically significant at the 1\% level across all specifications. We conclude that short-selling bans have a negative impact on liquidity, especially for stocks with high institutional ownership. In \mytabIA{tab:DiD_bas_IO_sctcls}, we also report standard errors clustered by stock and country-time dimensions, which account for potential cross-sectional correlation within each country on a given day. The results remain qualitatively unchanged, thus confirming our findings.

\subsection{Short-Selling Bans and Stock Returns}
After examining the impact of short-selling bans on market liquidity, we move to assessing their impact on stock returns. Building on \myhyp{h:support}, we focus on the mean, median, and the maximum drawdown as an intuitive criterion to compare the behaviour of stock returns on the left tail of their distribution. \myfigIA{fig:DiD_mean} visualizes the cumulative mean stock return in countries that imposed bans versus countries that did not follow the same policy. As shown in this graph, on average, stock returns closely track each other during the ban period. \myfigIA{fig:DiD_median}, moreover, plots the cumulative median stock return for the same groups of countries. Unlike the mean returns, we uncover a decrease in the median stock return for countries subject to bans relative to unconstrained countries. 

Our theoretical model suggests that even though short-selling bans may not be effective in supporting the average level of prices, they may be effective in supporting the left tail of the distribution of prices. Of course, this is particularly important during a financial crisis, when a precipitous fall in prices may raise concerns about financial stability. For example, \citet{Brunnermeier2014} show that when a financial institution is sufficiently close to its leverage constraints, a sharp fall in its stock price may trigger a run on the bank. Naturally, regulators may be inclined to impose temporary short-selling bans to prevent that from happening and avert a more generalized market panic.\footnote{Moreover, \citet{geraci2018short} find that, during "extreme events", the correlation between short-selling activity and negative price changes becomes significantly stronger.}
\begin{center}
	\textsc{\myfig{fig:DiD_mdd} about here} 
\end{center}
In \myfig{fig:DiD_mdd}, we plot the cross-sectional average of the maximum drawdown (i.e., the maximum cumulative decline from a peak to a trough) in the pre-ban and post-ban periods for the countries that imposed short-selling bans and for those that did not. As this graph shows, the maximum drawdown is similar for the two sets of countries in the pre-ban period, but once short-selling bans are imposed, stock declines are less pronounced in countries with bans relative to countries without short-selling restrictions. To sum up, in line with our theoretical prediction, the preliminary evidence described above suggests that mean returns remained qualitatively similar while median returns decreased during ban periods, thus confirming the findings of \citet{Beber2013}. Additionally, short-selling restrictions were able to support the left tail of stock returns. While the existing evidence on the impact of bans on skewness is far from being conclusive, a study by \citet{Bris2007} reveals that stock returns experience significantly less negative skewness in markets with restrictions on short selling. Their results, however, are primarily driven by market-level evidence, with limited support at the individual stock level, whereas our focus is on identifying the mechanism that leads to a reduction in the likelihood of extreme left tail returns.
\begin{center}
	\textsc{\mytab{tab:DiD_distribution} about here} 
\end{center}
To formally test the predictions of \myhyp{h:support}, we first calculate the mean, median, volatility, and maximum drawdown for each stock return over the pre-ban period and the post-ban period. For each metric, we thus end up with a balanced panel of two observations per stock, i.e., the first one calculated using a month of observations before the enactment of short-selling bans and the second one based on the subsequent month of observations. We then employ standard difference-in-differences regressions, while controlling for sovereign CDS spreads and firm size. We report our estimates in \mytab{tab:DiD_distribution}, where the variable of interest is the interaction term denoted by \emph{Ban} $\times$ \emph{Country}, which is equal to one for stocks subject to short-selling bans during the post-ban period. Relative to stocks traded in countries with no bans, we observe that the mean returns of stocks subject to bans were 7 basis points lower (statistically insignificant), the median returns were 22 basis points lower (statistically significant at the 1\% level), and the maximum drawdowns were 310 basis points higher (significant at the 1\% level). Note here that an increase in the maximum drawdown should be interpreted as support for the left tail of the return distribution, and the magnitude of the coefficient suggests that our results are also economically large. Thus, putting it all together, short-selling bans seem to avert large declines in prices at the expense of marginally lower mean/median returns, in addition to a deterioration of liquidity.

\subsubsection{The Role of Institutional Ownership for Stock Returns}
Finally, we test the remaining part of \myhyp{h:support}, according to which the effectiveness of short-selling bans in limiting extreme negative outcomes is inversely proportional to the institutional ownership of the affected stock. Institutional ownership is used as a proxy for the fraction of informed traders who own the stock. If short-selling is not allowed, then the fraction of informed traders owning the stock affects the distribution of prices in two ways. On the one hand, the lower this is, the fewer sale orders will be submitted (these will be hidden under a veil of ``no order'' events), as potential investors with negative information will be prohibited from submitting a short-sale order. On the other hand, this fraction determines the adverse selection in the market. When the fraction of informed (relative to noise) traders who own the stock is low, market makers are more likely to perceive sell orders as initiated by noise traders. Thus, they would be less aggressive in revising their expectation of fundamentals and skewing their bid lower in defense against potential adverse selection.
\begin{center}
	\textsc{\mytab{tab:DiD_distribution_IO} about here} 
\end{center}
To test this prediction,  we consider two groups of stocks, i.e., one with low (bottom tercile) and another with high (top tercile) institutional ownership. We, then, repeat the difference-in-differences regressions of \mytab{tab:DiD_distribution} in these two subsamples. As predicted by the model, the results presented in \mytab{tab:DiD_distribution_IO} confirm that short-selling bans are more effective in supporting the left tail of stocks with lower institutional ownership. Specifically, we estimate that in the sample of stocks with low institutional ownership, the maximum drawdown for stocks subject to short-selling bans was 4.3\% above the maximum drawdown experienced by stocks without short-selling restrictions, whereas the effect was more muted at around 2\% wedge in the maximum drawdown between the groups in the subsample of stocks with high institutional ownership.

Overall, our empirical findings support the view that short-selling bans can, under certain conditions, reduce the likelihood of a sharp decline in prices, but this comes at the cost of a deterioration in liquidity. It has often been cited that regulators' motivation to impose short-selling bans is to restore financial stability and market confidence.\footnote{For example, Robert Ophele, the Chairman of the French Authority, stated in an interview with Bloomberg on May 18, 2020, that \textit{``The European regulation is very clear: This restriction [the short-selling ban] is possible in case of adverse developments which constitute a serious threat to financial stability or market confidence. This restriction should be temporary, and taken in order to prevent the disorderly decline in the price of financial instruments $\ldots$''}}  The point of this paper is to shed light to the relevant trade-offs, and employ the regulators with the empirical evidence and quantitative estimates they need in order to make informed decisions when considering the imposition of short-selling bans.

\section{Robustness Analysis} \label{sec:robustness}
In this section, we run a battery of robustness tests to corroborate our findings using alternative specifications. We find that results remain unchanged, thus, confirming our conclusions from the the previous section.

\subsection{Samples matched by Firms Characteristics}
One may be concerned that stocks traded in the treatment countries are not necessarily comparable to those traded in the control countries. This is an important concern to address, as causal inference could be compromised if the group of treated stocks was systematically different from the group of control stocks. Concretely, each stock subject to short-selling bans is matched to an unrestricted stock that is closest in terms of market capitalization and industry classification according to the Industry Classification Benchmark (ICB) code. We then repeat the exercises presented earlier in \mytab{tab:DiD_bas_IO} and \mytab{tab:DiD_distribution_IO}, respectively, using our subsample of matched stocks. 
\begin{center}
	\textsc{\mytab{tab:DiD_bas_IO_matched} and \mytab{tab:DiD_distribution_IO_matched} about here} 
\end{center}
We report our estimates for bid-ask spreads in \mytab{tab:DiD_bas_IO_matched}, and for the distribution of stock returns in \mytab{tab:DiD_distribution_IO_matched}. We find no qualitative differences with our main results, thus suggesting that the composition of stocks in the treatment and control groups does not materially affect our results. Furthermore, we replicate \myfig{fig:DiD_mdd} using our subsample of matched stocks in \myfigIA{fig:DiD_mdd_matched} but find qualitatively similar results. 

\subsection{Placebo Tests}
The difference-in-differences regressions rely on the assumption that outcomes have equal trends between affected and unaffected stocks in the absence of treatment, i.e., confounders varying across groups are time-invariant and confounders varying across time are group invariant. When the underlying common trend assumption is invalid, the difference-in-differences estimates are biased, since the trend for the control group is not a valid estimate of the counterfactual trend that we would have observed for the treatment group in the absence of the regulatory change. We test the validity of the underlying assumption of equal trends by repeating the exercises reported in \mytab{tab:DiD_bas_IO} and \mytab{tab:DiD_distribution_IO} using placebo outcomes. Put differently, we keep the actual regulatory shock on March 17, 2020 but only employ countries not subject to short-selling bans. 
\begin{center}
	\textsc{\mytab{tab:PlaceboOutcome_bas_IO_matched} and \mytab{tab:PlaceboOutcome_distribution_IO_matched} about here} 
\end{center}
We arrange our control group of countries in alphabetic order so that stocks traded in Denmark, Finland, Germany, Ireland, Netherlands, and Norway belong to the placebo group, and stocks traded in Poland, Portugal, Sweden, Switzerland, and the United Kingdom are part of the control group. We also match stocks in terms of market capitalization and industry classification. We report our evidence for bid-ask spreads in \mytab{tab:PlaceboOutcome_bas_IO_matched}, and for the distribution of stock returns in \mytab{tab:PlaceboOutcome_distribution_IO_matched}. We find that our difference-in-differences estimates are statistically and economically insignificant for bid-ask spreads, means, and maximum drawdowns. We thus mitigate concerns about the validity of the parallel trend assumption.

\subsection{Extending the Pre-Ban and Post-Ban Window}
Our analysis employs a one-month window for the pre-treatment and post-treatment period, respectively. In \mytabIAs{tab:DiD_bas_IO_60days}--\ref{tab:DiD_distribution_IO_60days}, we repeat our main exercises while extending the window for the pre-treatment and post-treatment period by an additional month, i.e., from January 17 to March 16, 2020 for the pre-ban sample and from March 17 to May 15, 2020 for the post-ban sample.  We obtain qualitatively and quantitatively similar results, thus illustrating that our choice of a one-month window around the imposition of short-selling bans is innocuous.

\subsection{Downside Risk}
Finally, \myfigIA{fig:DiD_down} replicates the content of \myfig{fig:DiD_mdd} while replacing the maximum drawdown with downside risk, which we measure using the volatility of daily negative returns.  Consistent with our findings on the maximum drawdown, we show that the average downside risk was largely identical before the regulatory shocks for treated and untreated shocks. After the introduction of short-selling bans, we document that downside risk was substantially higher for untreated stocks relative to treated stocks, consistent with our model's prediction. We also use samples of stocks matched by market capitalization and industry classification in \myfigIA{fig:DiD_down_matched} but the evidence is virtually identical.

\section{Conclusion} \label{sec:bans_conclusion}
Since the seminal work of \citet{Beber2013}, the general wisdom is that short-selling bans have a detrimental effect on market liquidity and fail to support prices. Yet regulators in six European countries (i.e., Austria, Belgium, France, Greece, Italy, and Spain) decided to impose a two-month ban on new short sales (in March 2020) in response to the financial crisis caused by the COVID-19 outbreak. In this paper, we examine the costs and benefits of restricting short sales in times of crisis, and provide a framework that can help regulators determine whether, when, and how to implement such restrictions.

In particular, we build a theoretical model endogenizing the regulator's decision to impose a ban on short sales and derive testable predictions for liquidity and prices, which we then verify empirically.  Our model extends \citet{Diamond1987} by introducing a regulator whose goal is to avert a sharp decline in prices, and we show that the effectiveness of short-selling bans depends on the relative ratio of informed to noise traders who own the stock. We identify institutional ownership as a proxy for this model parameter and exploit cross-sectional variation arising from the European 2020 short-selling bans to test the model's predictions. Consistent with the model, we find that tail risk was reduced in countries that implemented short-selling bans and that this effect was more pronounced in stocks with low institutional ownership. Moreover, we also confirm prior evidence that bans were harmful for liquidity and failed to support prices. Overall, our findings highlight the trade-off facing policy makers considering short-selling bans: hampering liquidity in order to mitigate tail risks in the distribution of stock returns.

\begin{spacing}{1}
	\setlength{\bibsep}{\baselineskip}
	\bibliographystyle{aer}
	\bibliography{biblio}
\end{spacing}

\clearpage
\begin{figure}
	\begin{center}
		\includegraphics[scale = 0.90, angle = 0, 
		trim = 00mm 00mm 00mm 00mm]{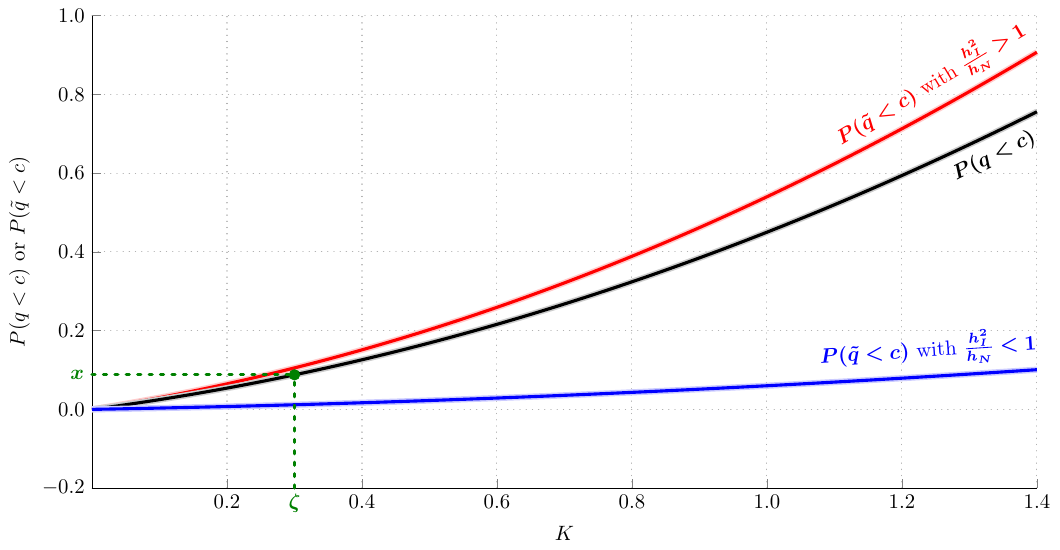}
		\vspace{0.5cm}
		\caption{\bf Simulated Equilibrium Prices under Different Scenarios} 
		\label{fig:probability_q}
	\end{center}
	
	\begin{footnotesize}
		This figure displays, from  the {\it regulator}'s perspective, the probability that the equilibrium price $q$ (short selling is allowed) or $\tilde{q}$ (short selling is prohibited) drops below the threshold $c$ using simulated data. The former scenario is denoted as $P(q<c)$, whereas the latter case is represented as $P(\tilde{q}<c)$. The black line corresponds to the baseline case with unrestricted short selling, the red line denotes the alternative scenario with restricted short-selling while setting $\smash{h_I^2/h_N > 1}$, and the blue line denotes the alternative scenario with restricted short-selling while setting  $\smash{h_I^2/h_N < 1}$. $h_I$ denotes the probability that an {\it informed} trader owns the stock, whereas $h_N$ indicates  the probability that a {\it noise} trader owns the stock. $\smash{K= c\alpha(1-p)/((1-\alpha)(p-c))}$ is an increasing function of $c$. The simulation assumes $\alpha=0.5$, $p= 0.5$, $g= 0.9$, $h_N=0.3$, and $h_I = 0.6$ (so that $\smash{h_I^2/h_N >1}$) or $h_I = 0.2$ (so that $\smash{h_I^2/h_N < 1}$). A description of these parameters is presented in \mytabIA{tab:summary_notation}.		
		
	\end{footnotesize}

\end{figure}

\begin{landscape}
	\begin{figure}

		\begin{center}		
			\includegraphics[scale=0.65, angle = 0, trim = 00mm 00mm 00mm 00mm]{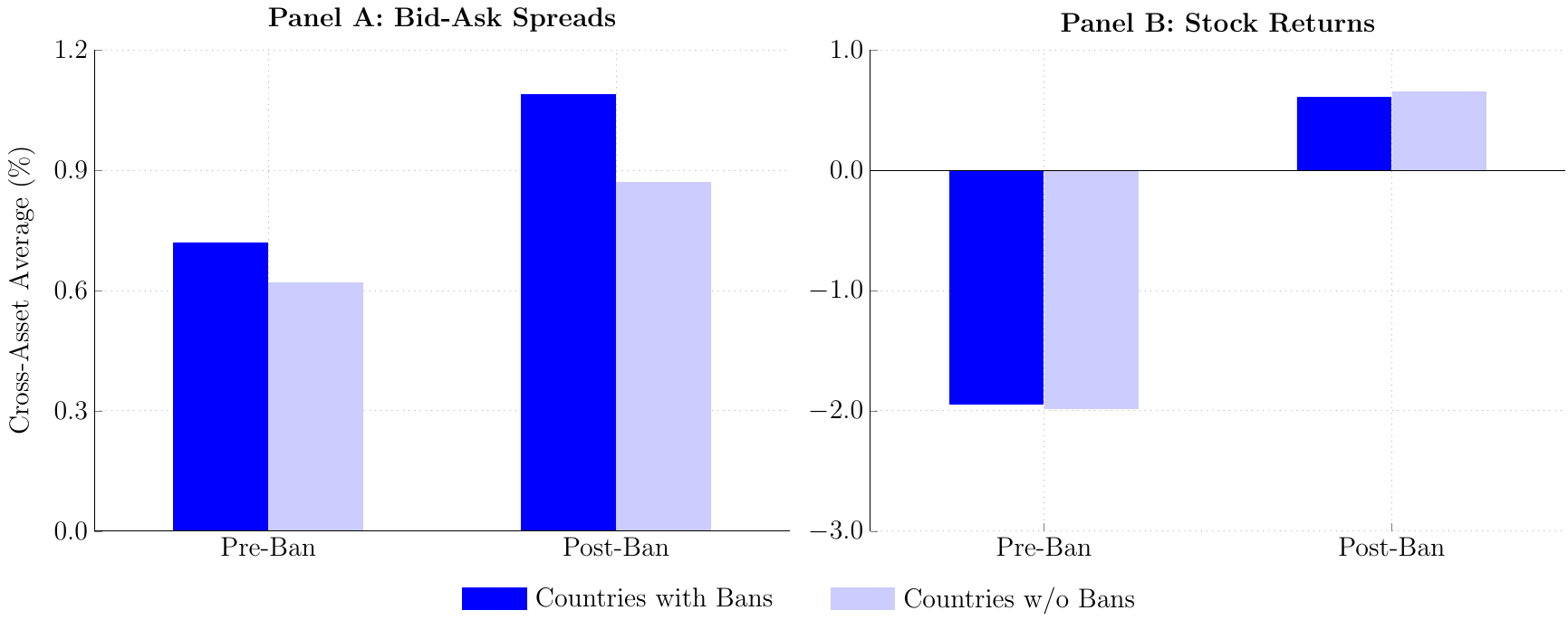}
			\vspace{0.5cm}
			\caption{\bf Bid-Ask Spreads and Stock Returns} 
			\label{fig:summary_bar_chart}
		\end{center}
		
		\begin{footnotesize}
		This figure shows average bid-ask spreads (Panel A) and average stock returns (Panel B) for European stock markets around the introduction of temporary short-selling bans during the COVID-19 pandemic. The group of countries that restricted short sales includes Austria, Belgium, France, Greece, Italy, and Spain. The group of countries that allowed short sales consists of Denmark, Finland, Germany, Ireland, Netherlands, Norway,  Poland, Portugal, Sweden, Switzerland, and the United Kingdom. The pre-ban (post-ban) period covers one month and runs between February 17 and March 16, 2020 (March 17 and April 15, 2020). Bid-ask spreads are computed using daily bid and ask prices before taking the mean value across all stocks in countries with and without short-selling bans, respectively. Returns are based on daily total return indices expressed in dollar terms before taking the mean value across all stocks in countries with and without short-selling bans, respectively. Both bid-ask spreads and returns are expressed in percentages. Data are collected from {\it Datastream}.		
		
		\end{footnotesize}
	\end{figure}
\end{landscape}

\begin{landscape}
	\begin{figure}

		\begin{center}		
			\includegraphics[scale=0.60, angle = 0, trim = 00mm 00mm 00mm 00mm]{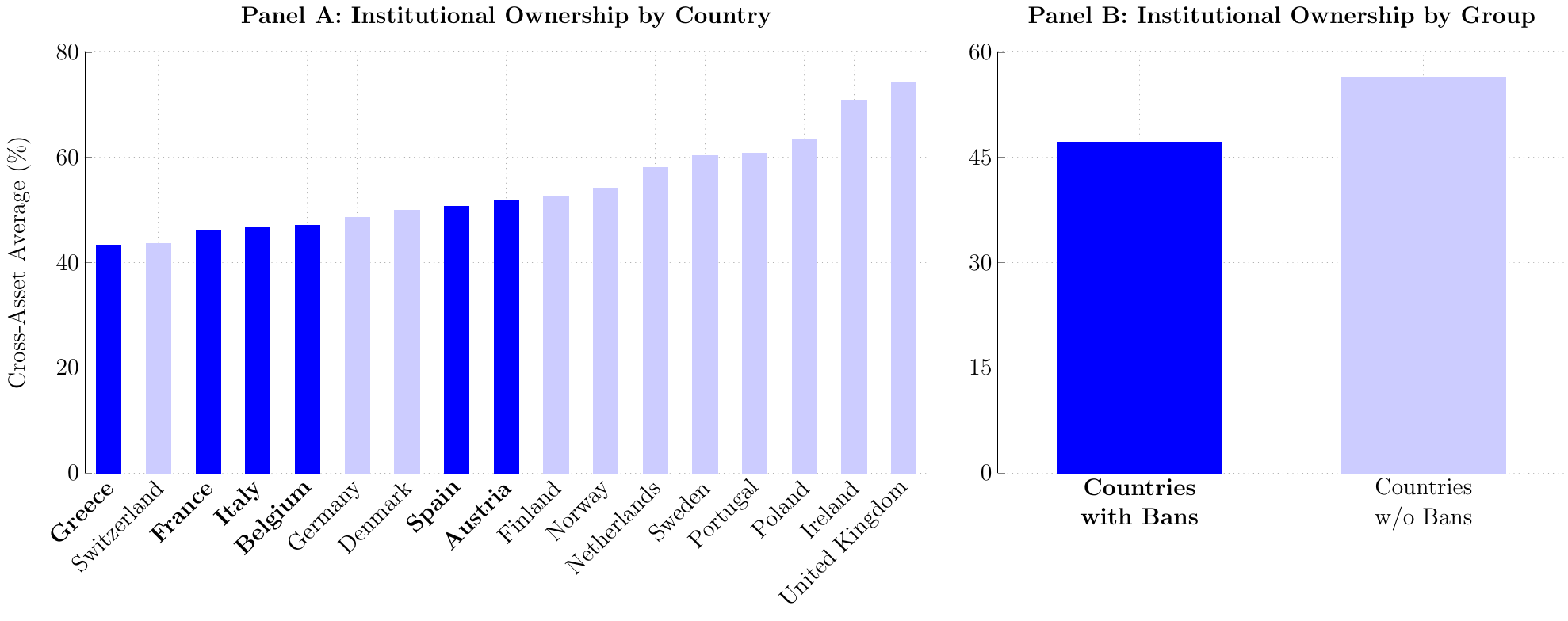}
			\vspace{0.5cm}
			\caption{\bf Institutional Ownership} 
			\label{fig:institutional_ownership}
		\end{center}
		
		\begin{footnotesize}
		This figure shows the average stock-level institutional ownership as of December 2019 for European stock markets. Austria, Belgium, France, Greece, Italy, and Spain denote the countries that introduced temporary short-selling bans during the COVID-19 pandemic. Denmark, Finland, Germany, Ireland, Netherlands, Norway,  Poland, Portugal, Sweden, Switzerland, and the United Kingdom refer to countries that allowed short sales. Panel A presents the average institutional ownership of all stocks in each country whereas Panel B displays the average institutional ownership across all stocks traded in countries with and without short-selling bans, respectively. Institutional ownerships are all expressed in percentages and collected from {\it Bloomberg}.		
		
		\end{footnotesize}
	\end{figure}
\end{landscape}

\begin{landscape}
\begin{figure}
		\begin{center}
		\includegraphics[scale=1.00, angle = 0, trim = 00mm 00mm 00mm 00mm]{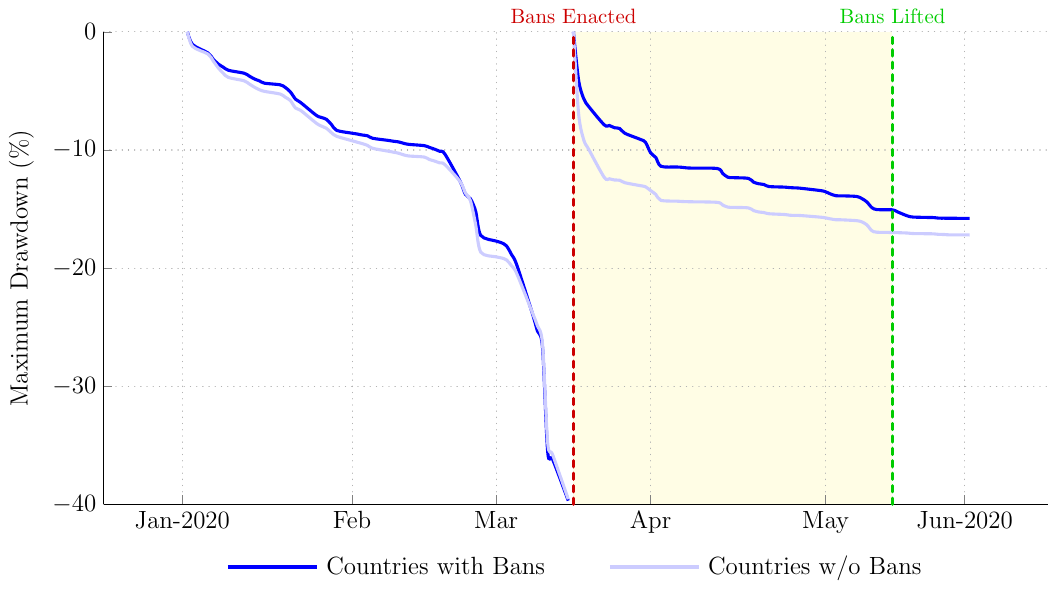}
		\vspace{0.5cm}
		\caption{\bf Maximum Drawdown and Short-Selling Bans} \label{fig:DiD_mdd}
	\end{center}

	\begin{footnotesize}
	This figure shows the average maximum drawdown of stocks traded in European markets around the introduction of  temporary short-selling bans during the COVID-19 pandemic. The group of countries that restricted short sales includes Austria, Belgium, France, Greece, Italy, and Spain, whereas the group of countries that allowed short sales consists of Denmark, Finland, Germany, Ireland, Netherlands, Norway,  Poland, Portugal, Sweden, Switzerland, and the United Kingdom. Short-selling bans were introduced on March 17 and remained in place until May 18, 2020. The maximum drawdown employs an expanding window from January 2 to March 16, 2020, and from March 17 to June 2, 2020, respectively, and is based on daily total return indices expressed in dollar terms before taking the mean value across all stocks in countries with and without short-selling bans, respectively. The sample includes small, mid, and large cap stocks (micro and nano caps are excluded), and runs between January 2 and June 2, 2020. Data are collected from {\it Datastream}.	
	
	\end{footnotesize}

\end{figure}
\end{landscape}

\begin{landscape}
	\begin{table}[ht]
		\caption{\bf Data Description} \label{tab:data_description}
		\begin{footnotesize}
		This table describes stock market data for European venues. The countries highlighted in gray, i.e., Austria, Belgium, France, Greece, Italy, and Spain, introduced temporary short-selling bans between March 17 and May 18, 2020. $\#$Small denotes the number of small-cap firms with an average market capitalization between 250 million and 2 billion US dollars, $\#$Mid refers to the number of mid-cap firms with an average market capitalization between 2 billion and 10 billion US dollars, whereas $\#$Large indicates the number of large-cap firm with an average market capitalization of \$10 billion US dollars or more.  We exclude micro and nano caps from our sample. The sample runs daily between January 2 and June 2, 2020. Data are collected from {\it Datastream}. 		
			
		\end{footnotesize}
		\bigskip

		\newcolumntype{L}[1]{>{\raggedright\let\newline\\\arraybackslash\hspace{0pt}}m{#1}}
		\newcolumntype{C}[1]{>{\centering\let\newline\\\arraybackslash\hspace{0pt}}m{#1}}
		\newcolumntype{R}[1]{>{\raggedleft\let\newline\\\arraybackslash\hspace{0pt}}m{#1}}
		\setlength\extrarowheight{2pt}

		\sisetup{
			input-decimal-markers  =  .,
			input-ignore           = {,},
			table-number-alignment = right,
			group-separator        ={,}, 
			group-four-digits      = true,
			input-symbols         = [()],
			table-text-alignment  = center,
			table-space-text-post = ***,
			table-align-text-post = false,
		}

		\robustify\bfseries
		\centering
		\scalebox{0.90}{
			
			\begin{tabular}{L{3.5cm}
					S[table-column-width = 3.0cm, table-format = 7.0]
					S[table-column-width = 2.2cm, table-format = 6.0]
					S[table-column-width = 2.2cm, table-format = 6.0]
					S[table-column-width = 2.2cm, table-format = 6.0]
					S[table-column-width = 2.2cm, table-format = 6.0]
				}
				\toprule
									   
			   	 &       &       &       \multicolumn{3}{c}{Market Capitalization} \\
			   	 \cmidrule{4-6}
			   	 \multicolumn{1}{l}{ Country}	
			   	 & \multicolumn{1}{c}{$\#$Obs} 
			   	 & \multicolumn{1}{c}{$\#$Stock} 
			   	 & \multicolumn{1}{c}{$\#$Small} 
			   	 & \multicolumn{1}{c}{$\#$Mid}   
			   	 & \multicolumn{1}{c}{$\#$Large}  \\			   	 
			   	 \cmidrule{1-6}

	\rc \bl{Austria} & \bl3,672 & \bl34    & \bl16    & \bl16    & \bl2 \\
	\rc \bl{Belgium} & \bl7,452 & \bl69    & \bl40    & \bl24    & \bl5 \\
	{Denmark} & 5,616 & 52    & 25    & 16    & 11 \\
	{Finland} & 5,800 & 54    & 36    & 11    & 7 \\
	\rc \bl{France} & \bl24,254 & \bl225   & \bl118   & \bl61    & \bl46 \\
	{Germany} & 22,572 & 209   & 107   & 61    & 41 \\
	\rc \bl{Greece} & \bl3,024 & \bl28    & \bl24    & \bl4     &  0\\
	{Ireland} & 2,166 & 20    & 13    & 4     & 3 \\
	\rc \bl{Italy} & \bl12,744 & \bl118   & \bl68    & \bl38    & \bl12 \\
	{Netherlands} & 6,588 & 61    & 28    & 15    & 18 \\
	{Norway} & 9,180 & 85    & 63    & 18    & 4 \\
	{Poland} & 6,372 & 59    & 41    & 18    &  0\\
	{Portugal} & 1,620 & 15    & 9     & 4     & 2 \\
	\rc \bl{Spain} & \bl9,072 & \bl84    & \bl46    & \bl23    & \bl15 \\
	{Sweden} & 19,440 & 180   & 118   & 49    & 13 \\
	{Switzerland} & 19,538 & 181   & 104   & 49    & 28 \\
	{United Kingdom} & 48,308 & 448   & 270   & 134   & 44 \\
	\midrule
	{Total} & 207,418 & 1,922 & 1,126 & 545   & 251 \\ 

	\bottomrule
				
			\end{tabular}
		}
	\end{table}

\end{landscape}

\begin{landscape}
\begin{table}[ht]
	\caption{\bf Bid-Ask Spreads and Institutional Ownership}\label{tab:DiD_bas_IO}
	
	\begin{footnotesize}		
		This table presents difference-in-differences estimates associated with the introduction of temporary short-selling bans in European stock markets during the COVID-19 pandemic. The dependent variable is the percentage bid-ask spread based on daily bid and ask prices for stocks with low institutional ownership (bottom tercile) in Panel A and stocks with high institutional ownership (top tercile) in Panel B, respectively. \emph{Country} is a dummy variable that equals one for the group of countries that introduced short-selling bans (i.e., Austria, Belgium, France, Greece, Italy, and Spain), and zero for the control group of countries (i.e., Denmark, Finland, Germany, Ireland, Netherlands, Norway,  Poland, Portugal, Sweden, Switzerland, and the United Kingdom). \emph{Ban} is a dummy variable that equals one (zero) for a post-treatment (pre-treatment) period of one month that goes from March 17 to April 15, 2020 (February 17 to March 16, 2020). The set of controls includes firm size and sovereign CDS spread. Specifications are complemented with stock and time (calendar date) fixed effects {\it fe}. Standard errors (in parentheses) are clustered by stock and time (calendar date) dimensions. *, **, ***, indicate statistical significance at the 10\%, 5\%, and 1\% level, respectively. The sample includes small, mid, and large cap stocks (micro and nano caps are excluded). Data are collected from {\it Datastream} and {\it Bloomberg}.	
			
   	\end{footnotesize}
	
		\bigskip

	\newcolumntype{L}[1]{>{\raggedright\let\newline\\\arraybackslash\hspace{0pt}}m{#1}}
	\newcolumntype{C}[1]{>{\centering\let\newline\\\arraybackslash\hspace{0pt}}m{#1}}
	\newcolumntype{R}[1]{>{\raggedleft\let\newline\\\arraybackslash\hspace{0pt}}m{#1}}
	\setlength\extrarowheight{0pt}

	\sisetup{
		input-symbols         = [()],
		table-format          = 1.0,
		table-space-text-post = ***,
		table-align-text-post = false,
		table-text-alignment  = center,
		group-digits          = false
	}

	\robustify\bfseries
	\centering
	\scalebox{0.90}{
		
		\begin{tabular}{L{2.9cm}
				S[table-column-width = 1.9cm]
				S[table-column-width = 1.9cm]
				S[table-column-width = 1.9cm]
				S[table-column-width = 1.9cm]
				S[table-column-width = 1.9cm]
				S[table-column-width = 1.9cm]
				S[table-column-width = 1.9cm]
				S[table-column-width = 1.9cm]
			}
	
	\toprule

			& \multicolumn{4}{c}{\bf Panel A: Low Institutional Ownership} 
			& \multicolumn{4}{c}{\bf Panel B: High Institutional Ownership} \\
			\cmidrule(r){2-5} \cmidrule(l){6-9}

		 	& \multicolumn{1}{c}{(1)} 
			& \multicolumn{1}{c}{(2)} 
		    & \multicolumn{1}{c}{(3)} 
		 	& \multicolumn{1}{c}{(4)} 
			& \multicolumn{1}{c}{(5)} 
			& \multicolumn{1}{c}{(6)} 
			& \multicolumn{1}{c}{(7)}
			& \multicolumn{1}{c}{(8)} \\
			
			\midrule
			\addlinespace[5pt]

	\rc{\it Ban $\times$ Country}   & \bl0.079 & \bl0.077 & \bl0.079 & \bl0.058 & \bl0.229\fn{***} & \bl0.285\fn{***} & \bl0.285\fn{***} & \bl0.278\fn{***} \\
	\rc & (0.053) & (0.063) & (0.064) & (0.072) & (0.066) & (0.078) & (0.079) & (0.083) \\
	\addlinespace[5pt]
	
	{\it Ban} & 0.307\fn{***} & 0.310\fn{***} &       &       & 0.206\fn{***} & 0.209\fn{***} &       &  \\
	& (0.072) & (0.077) &       &       & (0.043) & (0.049) &       &  \\
	\addlinespace[5pt]
	
	{\it Country} & 0.063 &       &       &       & 0.181 &       &       &  \\
	& (0.089) &       &       &       & (0.114) &       &       &  \\
	\addlinespace[5pt]

	{\it Constant} & 0.807\fn{***} & 0.833\fn{***} & 0.983\fn{***} & 0.186 & 0.613\fn{***} & 0.653\fn{***} & 0.754\fn{***} & 0.643 \\
	& (0.081) & (0.066) & (0.012) & (1.074) & (0.062) & (0.040) & (0.010) & (0.819) \\
	\addlinespace[5pt]
	
	{$R^2$} & 0.011 & 0.490 & 0.513 & 0.513 & 0.014 & 0.618 & 0.628 & 0.628  \\
	\cmidrule{2-9}
	{\it $\#$ obs} & {20,826} & {20,825} & {20,825} & {20,825} & {20,921} & {20,921} & {20,921} & {20,921}  \\
	\midrule
	{\it Stock fe} &       & {\checkmark}     & {\checkmark}     & {\checkmark}     &       & {\checkmark}     & {\checkmark}     & {\checkmark} \\
	{\it Time fe}  &       &                  & {\checkmark}     & {\checkmark}     &       &                  & {\checkmark}     & {\checkmark} \\
	{\it Controls} &       &                  &                  & {\checkmark}     &       &                  &                  & {\checkmark} \\
	\bottomrule

\end{tabular}%
}

\end{table}
\end{landscape}

\begin{landscape}
\begin{table}[ht]
	\caption{\bf Stock Returns and Short-Selling Bans}\label{tab:DiD_distribution}

	\begin{footnotesize}
	This table presents difference-in-differences estimates associated with the introduction of temporary short-selling bans in European stock markets during the COVID-19 pandemic. The dependent variables are the percentage mean, median, volatility, and maximum drawdown of total return indices in dollar terms based on on a one-month window around the enactment of short-selling bans. The pre-treatment (post-treatment) period runs between February 17 and March 16, 2020 (March 17 and April 15, 2020). \emph{Country} is a dummy variable that equals one for the group of countries that introduced short-selling bans (i.e., Austria, Belgium, France, Greece, Italy, and Spain), and zero for the control group of countries (i.e., Denmark, Finland, Germany, Ireland, Netherlands, Norway,  Poland, Portugal, Sweden, Switzerland, and the United Kingdom).  \emph{Ban} is a dummy variable that equals one (zero) for the post-treatment (pre-treatment) window. The set of controls includes firm size and sovereign CDS spread. Standard errors (in parentheses) are clustered by stock dimension. *, **, ***, indicate statistical significance at the 10\%, 5\%, and 1\% level, respectively. The sample includes small, mid, and large cap stocks (micro and nano caps are excluded). Data are collected from {\it Datastream}.	
	
    \end{footnotesize}  

		\bigskip

\newcolumntype{L}[1]{>{\raggedright\let\newline\\\arraybackslash\hspace{0pt}}m{#1}}
\newcolumntype{C}[1]{>{\centering\let\newline\\\arraybackslash\hspace{0pt}}m{#1}}
\newcolumntype{R}[1]{>{\raggedleft\let\newline\\\arraybackslash\hspace{0pt}}m{#1}}
\setlength\extrarowheight{0pt}

\sisetup{
	input-symbols         = [()],
	table-format          = 1.0,
	table-space-text-post = ***,
	table-align-text-post = false,
	table-text-alignment  = center,
	group-digits          = false
}

\robustify\bfseries
\centering
\scalebox{0.90}{
	
	\begin{tabular}{L{2.8cm}
			S[table-column-width = 2.0cm]
			S[table-column-width = 2.0cm]
			S[table-column-width = 2.0cm]
			S[table-column-width = 2.0cm]
			S[table-column-width = 2.0cm]
			S[table-column-width = 2.0cm]
			S[table-column-width = 2.0cm]
			S[table-column-width = 2.0cm]
		}
		
		\toprule

	        & \multicolumn{2}{c}{Mean} 
	        & \multicolumn{2}{c}{Median} 
	        & \multicolumn{2}{c}{Volatility}
	        & \multicolumn{2}{c}{Maximum drawdown} \\
	             
	        \cmidrule(lr){2-3}
	        \cmidrule(lr){4-5}
	        \cmidrule(lr){6-7}
	        \cmidrule(l){8-9}
	        
	        & \multicolumn{1}{c}{(1)} 
	        & \multicolumn{1}{c}{(2)} 
	        & \multicolumn{1}{c}{(3)} 
	        & \multicolumn{1}{c}{(4)} 
	        & \multicolumn{1}{c}{(5)} 
	        & \multicolumn{1}{c}{(6)} 
	        & \multicolumn{1}{c}{(7)}
	        & \multicolumn{1}{c}{(8)} \\

 	\midrule
 
 \addlinespace[5pt]

	\rc{\it Ban $\times$ Country}  & \bl -0.073 & \bl-0.070 & \bl-0.224\fn{***} & \bl-0.215\fn{***} & \bl-1.203\fn{***} & \bl-1.262\fn{***} & \bl2.757\fn{***} & \bl3.105\fn{***} \\
	\rc & (0.070) & (0.070) & (0.060) & (0.061) & (0.098) & (0.103) & (0.562) & (0.570) \\
	\addlinespace[5pt]

	{\it Ban} & 2.627\fn{***} & 2.639\fn{***} & 1.602\fn{***} & 1.620\fn{***} & 1.061\fn{***} & 0.938\fn{***} & 21.462\fn{***} & 21.931\fn{***} \\
	& (0.043) & (0.043) & (0.035) & (0.035) & (0.062) & (0.061) & (0.316) & (0.327) \\
	\addlinespace[5pt]
	
	{\it Country} & 0.034 & 0.037 & 0.084\fn{*} & 0.097\fn{**} & 0.069 & -0.014 & -0.081 & 0.442 \\
	& (0.050) & (0.051) & (0.044) & (0.045) & (0.085) & (0.092) & (0.634) & (0.670) \\
	\addlinespace[5pt]

	{\it Constant} & -1.979\fn{***} & -2.225\fn{***} & -1.273\fn{***} & -1.553\fn{***} & 4.771\fn{***} & 6.788\fn{***} & -36.125\fn{***} & -42.116\fn{***} \\
	& (0.030) & (0.077) & (0.024) & (0.081) & (0.051) & (0.274) & (0.357) & (1.186) \\
	\addlinespace[5pt]
	
	{$R^2$} & 0.660 & 0.661 & 0.437 & 0.440 & 0.047 & 0.081 & 0.522 & 0.530 \\
	\cmidrule{2-9}

	{\it $\#$ obs} & {3,844} & {3,844} & {3,844} & {3,844} & {3,844} & {3,844} & {3,844} & {3,844}  \\
    \midrule
    {\it Controls}  &       &   {\checkmark}  &       &   {\checkmark}  &       &   {\checkmark} &       &   {\checkmark}      \\

		\bottomrule

\end{tabular}%
}

\end{table}
\end{landscape}
\begin{landscape}
\begin{table}[ht]
	\caption{\bf Stock Returns and Institutional Ownership}\label{tab:DiD_distribution_IO}

	\begin{footnotesize}
	This table presents difference-in-differences estimates associated with the introduction of temporary short-selling bans in European stock markets during the COVID-19 pandemic. The dependent variables are the percentage  mean, median, volatility, and maximum drawdown of total return indices in dollar terms based on a one-month window around the enactment of short-selling bans. The pre-treatment (post-treatment) period runs between February 17 and March 16, 2020 (March 17 and April 15, 2020). Panel A presents estimates for stocks with low institutional ownership (bottom tercile), whereas Panel B  for stocks with high institutional ownership (top tercile).  \emph{Country} is a dummy variable that equals one for the group of countries that introduced short-selling bans (i.e., Austria, Belgium, France, Greece, Italy, and Spain), and zero for the control group of countries (i.e., Denmark, Finland, Germany, Ireland, Netherlands, Norway,  Poland, Portugal, Sweden, Switzerland, and the United Kingdom).  \emph{Ban} is a dummy variable that equals one (zero) for the post-treatment (pre-treatment) window. The set of controls includes firm size and sovereign CDS spread. Standard errors (in parentheses) are clustered by stock dimension. *, **, ***, indicate statistical significance at the 10\%, 5\%, and 1\% level, respectively. The sample includes small, mid, and large cap stocks (micro and nano caps are excluded). Data are collected from {\it Datastream} and {\it Bloomberg}.	
	
    \end{footnotesize}

		\bigskip

\newcolumntype{L}[1]{>{\raggedright\let\newline\\\arraybackslash\hspace{0pt}}m{#1}}
\newcolumntype{C}[1]{>{\centering\let\newline\\\arraybackslash\hspace{0pt}}m{#1}}
\newcolumntype{R}[1]{>{\raggedleft\let\newline\\\arraybackslash\hspace{0pt}}m{#1}}
\setlength\extrarowheight{0pt}

\sisetup{
	input-symbols         = [()],
	table-format          = 1.0,
	table-space-text-post = ***,
	table-align-text-post = false,
	table-text-alignment  = center,
	group-digits          = false
}

\robustify\bfseries
\centering
\scalebox{0.90}{
	
	\begin{tabular}{L{2.8cm}
			S[table-column-width = 2.0cm]
			S[table-column-width = 2.0cm]
			S[table-column-width = 2.0cm]
			S[table-column-width = 2.0cm]
			S[table-column-width = 2.0cm]
			S[table-column-width = 2.0cm]
			S[table-column-width = 2.0cm]
			S[table-column-width = 2.0cm]
		}
		
		\toprule

	        & \multicolumn{4}{c}{\bf Panel A: Low Institutional Ownership} 
	        & \multicolumn{4}{c}{\bf Panel B: High Institutional Ownership}\\
	        
			\cmidrule(lr){2-5}
			\cmidrule(lr){6-9}	
	
     & 	\multicolumn{1}{c}{\multirow{2}{*}{Mean}} &  \multicolumn{1}{c}{\multirow{2}{*}{Median}} & \multicolumn{1}{c}{\multirow{2}{*}{Volatility}} & {Maximum} 
     & 	\multicolumn{1}{c}{\multirow{2}{*}{Mean}} &  \multicolumn{1}{c}{\multirow{2}{*}{Median}} & \multicolumn{1}{c}{\multirow{2}{*}{Volatility}} & {Maximum} \\

	 &  \multicolumn{1}{c}{}                       &  \multicolumn{1}{c}{}                       & \multicolumn{1}{c}{}                            & {drawdown}
	 &  \multicolumn{1}{c}{}                       &  \multicolumn{1}{c}{}                       & \multicolumn{1}{c}{}                            & {drawdown}\\

	\midrule

	\rc{\it Ban $\times$ Country}   & \bl 0.180 & \bl -0.093 & \bl -0.670\fn{***} & \bl 4.300\fn{***} & \bl -0.091 & \bl -0.291\fn{***} & \bl -1.021\fn{***} & \bl 2.077\fn{*} \\
	\rc & (0.117) & (0.106) & (0.147) & (0.987) & (0.133) & (0.107) & (0.184) & (1.116) \\
	\addlinespace[5pt]
		
	{\it Ban} & 2.289\fn{***} & 1.511\fn{***} & 0.251\fn{***} & 20.197\fn{***} & 2.632\fn{***} & 1.607\fn{***} & 0.939\fn{***} & 21.788\fn{***} \\
	& (0.082) & (0.075) & (0.095) & (0.692) & (0.084) & (0.066) & (0.092) & (0.632) \\
	\addlinespace[5pt]
		
	{\it Country} & -0.217\fn{**} & -0.044 & 0.299\fn{**} & -3.302\fn{***} & 0.034 & 0.182\fn{**} & -0.019 & 0.291 \\
	& (0.086) & (0.075) & (0.148) & (1.138) & (0.104) & (0.088) & (0.186) & (1.387) \\
	\addlinespace[5pt]
		
	{\it Constant} & -1.987\fn{***} & -1.488\fn{***} & 5.761\fn{***} & -37.308\fn{***} & -2.392\fn{***} & -1.586\fn{***} & 6.888\fn{***} & -42.927\fn{***} \\
	& (0.143) & (0.147) & (0.471) & (2.299) & (0.136) & (0.166) & (0.536) & (2.282) \\
	\addlinespace[5pt]
		
	{$R^2$} & 0.665 & 0.451 & 0.027 & 0.553 & 0.661 & 0.445 & 0.085 & 0.530\\
	\cmidrule{2-9}
	{$\#$ \it obs} & {1,036} & {1,036} & {1,036} & {1,036} & {1,036} & {1,036} & {1,036} & {1,036} \\
    \midrule
    {\it Controls}  &   {\checkmark}    &   {\checkmark}  &   {\checkmark}    &   {\checkmark}  & {\checkmark}      &   {\checkmark} &   {\checkmark}    &   {\checkmark}      \\

		\bottomrule

\end{tabular}
}

\end{table}
\end{landscape}
\begin{landscape}
\begin{table}[ht]
	\caption{\bf Bid-Ask Spreads and Institutional Ownership: Matched Sample}\label{tab:DiD_bas_IO_matched}
	
	\begin{footnotesize}	
	This table presents difference-in-differences estimates associated with the introduction of temporary short-selling bans in European stock markets during the COVID-19 pandemic. The dependent variable is the percentage bid-ask spread based on daily bid and ask prices for stocks with low institutional ownership (bottom tercile) in Panel A and high institutional ownership (top tercile) in Panel B, respectively.  \emph{Country} is a dummy variable that equals one for the group of countries that introduced short-selling bans (i.e., Austria, Belgium, France, Greece, Italy, and Spain), and zero for the control group of countries (i.e., Denmark, Finland, Germany, Ireland, Netherlands, Norway,  Poland, Portugal, Sweden, Switzerland, and the United Kingdom).   \emph{Ban} is a dummy variable that equals one (zero) for a post-treatment (pre-treatment) period of one month that goes from March 17 to April 15, 2020 (February 17 to March 16, 2020). The set of controls includes firm size and sovereign CDS spread. Specifications are complemented with stock and time (calendar date) fixed effects {\it fe}. Standard errors (in parentheses) are clustered by stock and time (calendar date) dimensions. *, **, ***, indicate statistical significance at the 10\%, 5\%, and 1\% level, respectively. The sample includes small, mid, and large cap stocks (micro and nano caps are excluded) matched by market capitalization and industry classification according to the Industry Classification Benchmark (ICB) code. Data are collected from {\it Datastream} and {\it Bloomberg}.	
	
\end{footnotesize}
	
		\bigskip

	\newcolumntype{L}[1]{>{\raggedright\let\newline\\\arraybackslash\hspace{0pt}}m{#1}}
	\newcolumntype{C}[1]{>{\centering\let\newline\\\arraybackslash\hspace{0pt}}m{#1}}
	\newcolumntype{R}[1]{>{\raggedleft\let\newline\\\arraybackslash\hspace{0pt}}m{#1}}
	\setlength\extrarowheight{0pt}

	\sisetup{
		input-symbols         = [()],
		table-format          = 1.0,
		table-space-text-post = ***,
		table-align-text-post = false,
		table-text-alignment  = center,
		group-digits          = false
	}

	\robustify\bfseries
	\centering
	\scalebox{0.90}{
		
		\begin{tabular}{L{2.9cm}
				S[table-column-width = 1.9cm]
				S[table-column-width = 1.9cm]
				S[table-column-width = 1.9cm]
				S[table-column-width = 1.9cm]
				S[table-column-width = 1.9cm]
				S[table-column-width = 1.9cm]
				S[table-column-width = 1.9cm]
				S[table-column-width = 1.9cm]
			}
	
	\toprule

			& \multicolumn{4}{c}{\bf Panel A: Low Institutional Ownership} 
			& \multicolumn{4}{c}{\bf Panel B: High Institutional Ownership} \\
			\cmidrule(r){2-5} \cmidrule(l){6-9}

		 	& \multicolumn{1}{c}{(1)} 
			& \multicolumn{1}{c}{(2)} 
		    & \multicolumn{1}{c}{(3)} 
		 	& \multicolumn{1}{c}{(4)} 
			& \multicolumn{1}{c}{(5)} 
			& \multicolumn{1}{c}{(6)} 
			& \multicolumn{1}{c}{(7)}
			& \multicolumn{1}{c}{(8)} \\
			
			\midrule
			\addlinespace[5pt]

	\rc{\it Ban $\times$ Country}   & \bl 0.121\fn{*} & \bl 0.110 & \bl 0.112 & \bl 0.133 & \bl 0.262\fn{***} & \bl 0.316\fn{***} & \bl 0.317\fn{***} & \bl 0.340\fn{***} \\
	\rc & (0.066) & (0.088) & (0.089) & (0.101) & (0.058) & (0.073) & (0.075) & (0.082) \\
	\addlinespace[5pt]
	
	{\it Ban} & 0.283\fn{***} & 0.296\fn{***} &       &       & 0.152\fn{***} & 0.150\fn{***} &       &  \\
	& (0.072) & (0.091) &       &       & (0.045) & (0.053) &       &  \\	
	\addlinespace[5pt]
	
	{\it Country} & 0.203\fn{*} &       &       &       & 0.184 &       &       &  \\
	& (0.120) &       &       &       & (0.125) &       &       &  \\
	\addlinespace[5pt]
	
	{Constant} & 0.693\fn{***} & 0.834\fn{***} & 0.977\fn{***} & -1.051 & 0.610\fn{***} & 0.700\fn{***} & 0.772\fn{***} & -0.151 \\
	& (0.110) & (0.075) & (0.029) & (1.822) & (0.092) & (0.046) & (0.020) & (1.161) \\
	\addlinespace[5pt]
	
	{$R^2$} & 0.019 & 0.444 & 0.475 & 0.475 & 0.019 & 0.567 & 0.581 & 0.581  \\
    \cmidrule{2-9}
    {\it $\#$ obs}  & {11,093} & {11,092} & {11,092} & {11,092} & {11,031} & {11,031} & {11,031} & {11,031} \\

    \midrule

	{\it Stock fe} &       & {\checkmark}     & {\checkmark}     & {\checkmark}     &       & {\checkmark}     & {\checkmark}     & {\checkmark} \\
	{\it Time fe}  &       &                  & {\checkmark}     & {\checkmark}     &       &                  & {\checkmark}     & {\checkmark} \\
	{\it Controls} &       &                  &                  & {\checkmark}     &       &                  &                  & {\checkmark} \\
	\bottomrule

\end{tabular}
}

\end{table}
\end{landscape}

\begin{landscape}
\begin{table}[ht]
	\caption{\bf Stock Returns and Institutional Ownership: Matched Sample}\label{tab:DiD_distribution_IO_matched}

	\begin{footnotesize}
	This table presents difference-in-differences estimates associated with the introduction of temporary short-selling bans in European stock markets (see \mytab{tab:data_description} for more details) during the COVID-19 pandemic. The dependent variables are the percentage  mean, median, volatility, and maximum drawdown of total return indices in dollar terms based on a one-month window around the enactment of short-selling bans. The pre-treatment (post-treatment) period runs between February 17 and March 16, 2020 (March 17 and April 15, 2020). Panel A presents estimates for stocks with low institutional ownership (bottom tercile), whereas Panel B  for stocks with high institutional ownership (top tercile). \emph{Country} is a dummy variable that equals one (zero) for the treated (control) group of European countries.  \emph{Ban} is a dummy variable that equals one (zero) for the post-treatment (pre-treatment) period. The set of controls includes firm size and sovereign CDS spread. Standard errors (in parentheses) are clustered by stock dimension. *, **, ***, indicate statistical significance at the 10\%, 5\%, and 1\% level, respectively. The sample includes small, mid, and large cap stocks (micro and nano caps are excluded) matched by market capitalization and industry classification according to the Industry Classification Benchmark (ICB) code. Data are collected from {\it Datastream} and {\it Bloomberg}.	
	
\end{footnotesize}

		\bigskip

\newcolumntype{L}[1]{>{\raggedright\let\newline\\\arraybackslash\hspace{0pt}}m{#1}}
\newcolumntype{C}[1]{>{\centering\let\newline\\\arraybackslash\hspace{0pt}}m{#1}}
\newcolumntype{R}[1]{>{\raggedleft\let\newline\\\arraybackslash\hspace{0pt}}m{#1}}
\setlength\extrarowheight{0pt}

\sisetup{
	input-symbols         = [()],
	table-format          = 1.0,
	table-space-text-post = ***,
	table-align-text-post = false,
	table-text-alignment  = center,
	group-digits          = false
}

\robustify\bfseries
\centering
\scalebox{0.90}{
	
	\begin{tabular}{L{2.8cm}
			S[table-column-width = 2.0cm]
			S[table-column-width = 2.0cm]
			S[table-column-width = 2.0cm]
			S[table-column-width = 2.0cm]
			S[table-column-width = 2.0cm]
			S[table-column-width = 2.0cm]
			S[table-column-width = 2.0cm]
			S[table-column-width = 2.0cm]
		}
		
		\toprule

	        & \multicolumn{4}{c}{\bf Panel A: Low Institutional Ownership} 
	        & \multicolumn{4}{c}{\bf Panel B: High Institutional Ownership}\\
	        
			\cmidrule(lr){2-5}
			\cmidrule(lr){6-9}	
	
     & 	\multicolumn{1}{c}{\multirow{2}{*}{Mean}} &  \multicolumn{1}{c}{\multirow{2}{*}{Median}} & \multicolumn{1}{c}{\multirow{2}{*}{Volatility}} & {Maximum} 
     & 	\multicolumn{1}{c}{\multirow{2}{*}{Mean}} &  \multicolumn{1}{c}{\multirow{2}{*}{Median}} & \multicolumn{1}{c}{\multirow{2}{*}{Volatility}} & {Maximum} \\

	 &  \multicolumn{1}{c}{}                       &  \multicolumn{1}{c}{}                       & \multicolumn{1}{c}{}                            & {drawdown}
	 &  \multicolumn{1}{c}{}                       &  \multicolumn{1}{c}{}                       & \multicolumn{1}{c}{}                            & {drawdown}\\

	\midrule

	\rc{\it Ban $\times$ Country}   & \bl 0.450\fn{**} & \bl -0.006 & \bl -0.609\fn{**} & \bl 5.502\fn{***} & \bl -0.009 & \bl -0.294\fn{**}  & \bl -0.825\fn{***} & \bl 2.522\fn{*} \\
	\rc & (0.176) & (0.161) & (0.239) & (1.658) & (0.165) & (0.131) & (0.225) & (1.458) \\	
	\addlinespace[5pt]
	
	{\it Ban} & 2.050\fn{***} & 1.429\fn{***} & 0.176 & 19.251\fn{***} & 2.570\fn{***} & 1.593\fn{***} & 0.883\fn{***} & 21.086\fn{***} \\
	& (0.152) & (0.139) & (0.208) & (1.488) & (0.125) & (0.103) & (0.162) & (1.139) \\
	\addlinespace[5pt]
	{\it Country} & -0.356\fn{***} & -0.160 & 0.475\fn{**} & -4.889\fn{***} & -0.058 & 0.144 & 0.050 & -0.984 \\
	& (0.126) & (0.097) & (0.223) & (1.693) & (0.124) & (0.103) & (0.218) & (1.661) \\
	\addlinespace[5pt]
	{\it Constant} & -1.608\fn{***} & -0.960\fn{***} & 4.220\fn{***} & -29.625\fn{***} & -2.274\fn{***} & -1.464\fn{***} & 5.769\fn{***} & -37.888\fn{***} \\
	& (0.220) & (0.187) & (0.508) & (3.087) & (0.194) & (0.225) & (0.655) & (3.276) \\
	\addlinespace[5pt]
	
	{$R^2$} & 0.692 & 0.472 & 0.011 & 0.601 & 0.697 & 0.462 & 0.036 & 0.531  \\
	\cmidrule{2-9}
	{\it $\#$ obs} & {552}   & {552}   & {552}   & {552}   & {550}   & {550}   & {550}   & {550}  \\

    \midrule
    {\it Controls}  &   {\checkmark}    &   {\checkmark}  &   {\checkmark}    &   {\checkmark}  & {\checkmark}      &   {\checkmark} &   {\checkmark}    &   {\checkmark}      \\

		\bottomrule

\end{tabular}
}

\end{table}
\end{landscape}
\begin{landscape}
\begin{table}[ht]
	\caption{\bf Bid-Ask Spreads and Institutional Ownership: Placebo}
	\label{tab:PlaceboOutcome_bas_IO_matched}

	\begin{footnotesize}		
	This table presents difference-in-differences estimates associated with the introduction of temporary short-selling bans in European stock markets during the COVID-19 pandemic. The dependent variable is the percentage bid-ask spread based on daily bid and ask prices for stocks with low institutional ownership (bottom tercile) in Panel A and stocks with high institutional ownership (top tercile) in Panel B, respectively.  \emph{Country} is a dummy variable that equals one for the group of placebo countries (i.e., Denmark, Finland, Germany, Ireland, Netherlands, and Norway), and zero for the group of control countries (i.e., Poland, Portugal, Sweden, Switzerland, and the United Kingdom). \emph{Ban} is a dummy variable that equals one (zero) for a post-treatment (pre-treatment) period of one month that goes from March 17 to April 15, 2020 (February 17 to March 16, 2020). The set of controls includes firm size and sovereign CDS spread. Specifications are complemented with stock and time (calendar date) fixed effects {\it fe}. Standard errors (in parentheses) are clustered by stock and time (calendar date) dimensions. *, **, ***, indicate statistical significance at the 10\%, 5\%, and 1\% level, respectively. The sample includes small, mid, and large cap stocks (micro and nano caps are excluded) matched by firm characteristics. Data are collected from {\it Datastream} and {\it Bloomberg}.	
	
\end{footnotesize}

		\bigskip

	\newcolumntype{L}[1]{>{\raggedright\let\newline\\\arraybackslash\hspace{0pt}}m{#1}}
	\newcolumntype{C}[1]{>{\centering\let\newline\\\arraybackslash\hspace{0pt}}m{#1}}
	\newcolumntype{R}[1]{>{\raggedleft\let\newline\\\arraybackslash\hspace{0pt}}m{#1}}
	\setlength\extrarowheight{0pt}

	\sisetup{
		input-symbols         = [()],
		table-format          = 1.0,
		table-space-text-post = ***,
		table-align-text-post = false,
		table-text-alignment  = center,
		group-digits          = false
	}

	\robustify\bfseries
	\centering
	\scalebox{0.90}{
		
		\begin{tabular}{L{2.8cm}
				S[table-column-width = 1.9cm]
				S[table-column-width = 1.9cm]
				S[table-column-width = 1.9cm]
				S[table-column-width = 1.9cm]
				S[table-column-width = 1.9cm]
				S[table-column-width = 1.9cm]
				S[table-column-width = 1.9cm]
				S[table-column-width = 1.9cm]
			}
	
	\toprule

			& \multicolumn{4}{c}{\bf Panel A: Low Institutional Ownership} 
			& \multicolumn{4}{c}{\bf Panel B: High Institutional Ownership} \\
			\cmidrule(r){2-5} \cmidrule(l){6-9}

		 	& \multicolumn{1}{c}{(1)} 
			& \multicolumn{1}{c}{(2)} 
		    & \multicolumn{1}{c}{(3)} 
		 	& \multicolumn{1}{c}{(4)} 
			& \multicolumn{1}{c}{(5)} 
			& \multicolumn{1}{c}{(6)} 
			& \multicolumn{1}{c}{(7)}
			& \multicolumn{1}{c}{(8)} \\
			
			\midrule
			\addlinespace[5pt]

	\rc{\it Ban $\times$ Country}   & \bl 0.041 & \bl0.050 & \bl0.050 & \bl-0.014 & \bl-0.070 & \bl-0.074 & \bl-0.076 & \bl-0.014 \\
	\rc & (0.091) & (0.127) & (0.128) & (0.141) & (0.048) & (0.077) & (0.078) & (0.077) \\
	\addlinespace[5pt]
	
	{\it Ban} & 0.258\fn{***} & 0.266\fn{***} &       &       & 0.181\fn{***} & 0.181\fn{***} &       &  \\
	& (0.080) & (0.089) &       &       & (0.050) & (0.065) &       &  \\
	\addlinespace[5pt]
	
	{\it Country} & -0.045 &       &       &       & -0.243 &       &       &  \\
	& (0.166) &       &       &       & (0.184) &       &       &  \\
	\addlinespace[5pt]

	{\it Constant} & 0.691\fn{***} & 0.663\fn{***} & 0.791\fn{***} & 12.257\fn{*} & 0.718\fn{***} & 0.639\fn{***} & 0.727\fn{***} & 0.062 \\
	& (0.097) & (0.058) & (0.028) & (6.241) & (0.128) & (0.042) & (0.009) & (1.499) \\
	\addlinespace[5pt]
	
	{$R^2$} & 0.008 & 0.504 & 0.517 & 0.528 & 0.012 & 0.529 & 0.540 & 0.542  \\
	\cmidrule{2-9}
	{$\#$ \it obs} & {4,090} & {4,089} & {4,089} & {4,089} & {4,252} & {4,252} & {4,252} & {4,252}  \\
	\midrule
	{\it Stock fe} &       & {\checkmark}     & {\checkmark}     & {\checkmark}     &       & {\checkmark}     & {\checkmark}     & {\checkmark} \\
	{\it Time fe}  &       &                  & {\checkmark}     & {\checkmark}     &       &                  & {\checkmark}     & {\checkmark} \\
	{\it Controls} &       &                  &                  & {\checkmark}     &       &                  &                  & {\checkmark} \\
	\bottomrule

\end{tabular}%
}

\end{table}
\end{landscape}

\begin{landscape}
\begin{table}[ht]
	\caption{\bf Stock Returns and Institutional Ownership: Placebo}\label{tab:PlaceboOutcome_distribution_IO_matched}

\begin{footnotesize}
		This table presents difference-in-differences estimates associated with the introduction of temporary short-selling bans in European stock markets during the COVID-19 pandemic. The dependent variables are the percentage mean, median, volatility, and maximum drawdown of stock returns based on a one-month window around the enactment of short-selling bans. The pre-treatment (post-treatment) period runs between February 17 and March 16, 2020 (March 17 and April 15, 2020).  Panel A presents estimates for stocks with low institutional ownership (bottom tercile), whereas Panel B  for stocks with high institutional ownership (top tercile). \emph{Country} is a dummy variable that equals one for the group of placebo countries (i.e., Denmark, Finland, Germany, Ireland, Netherlands, and Norway), and zero for the group of control countries (i.e., Poland, Portugal, Sweden, Switzerland, and the United Kingdom). \emph{Ban} is a dummy variable that equals one (zero) for the post-treatment (pre-treatment) window. The set of controls includes firm size and sovereign CDS spread. Standard errors (in parentheses) are clustered by stock dimension. *, **, ***, indicate statistical significance at the 10\%, 5\%, and 1\% level, respectively. The sample includes small, mid, and large cap stocks (micro and nano caps are excluded). Data are collected from {\it Datastream} and {\it Bloomberg}.	
		
	\end{footnotesize}

		\bigskip

\newcolumntype{L}[1]{>{\raggedright\let\newline\\\arraybackslash\hspace{0pt}}m{#1}}
\newcolumntype{C}[1]{>{\centering\let\newline\\\arraybackslash\hspace{0pt}}m{#1}}
\newcolumntype{R}[1]{>{\raggedleft\let\newline\\\arraybackslash\hspace{0pt}}m{#1}}
\setlength\extrarowheight{0pt}

\sisetup{
	input-symbols         = [()],
	table-format          = 1.0,
	table-space-text-post = ***,
	table-align-text-post = false,
	table-text-alignment  = center,
	group-digits          = false
}

\robustify\bfseries
\centering
\scalebox{0.90}{
	
	\begin{tabular}{L{2.8cm}
			S[table-column-width = 2.0cm]
			S[table-column-width = 2.0cm]
			S[table-column-width = 2.0cm]
			S[table-column-width = 2.0cm]
			S[table-column-width = 2.0cm]
			S[table-column-width = 2.0cm]
			S[table-column-width = 2.0cm]
			S[table-column-width = 2.0cm]
		}
		
		\toprule

     & \multicolumn{4}{c}{\bf Panel A: Low Institutional Ownership} 
	 & \multicolumn{4}{c}{\bf Panel B: High Institutional Ownership}\\
	        
	\cmidrule(lr){2-5}
	\cmidrule(lr){6-9}	
	
     & 	\multicolumn{1}{c}{\multirow{2}{*}{Mean}} &  \multicolumn{1}{c}{\multirow{2}{*}{Median}} & \multicolumn{1}{c}{\multirow{2}{*}{Volatility}} & {Maximum} 
     & 	\multicolumn{1}{c}{\multirow{2}{*}{Mean}} &  \multicolumn{1}{c}{\multirow{2}{*}{Median}} & \multicolumn{1}{c}{\multirow{2}{*}{Volatility}} & {Maximum} \\

	 &  \multicolumn{1}{c}{}                       &  \multicolumn{1}{c}{}                       & \multicolumn{1}{c}{}                            & {drawdown}
	 &  \multicolumn{1}{c}{}                       &  \multicolumn{1}{c}{}                       & \multicolumn{1}{c}{}                            & {drawdown}\\

	\midrule

	\rc{\it Ban $\times$ Country}   & \bl0.373 &\bl 0.508\fn{**} & \bl0.022 & \bl2.896 & \bl-0.069 & \bl-0.215 & \bl-1.062\fn{***} & \bl2.200 \\
	\rc& (0.285) & (0.252) & (0.384) & (2.838) & (0.308) & (0.223) & (0.355) & (2.870) \\
	\addlinespace[5pt]
	{\it Ban} & 1.934\fn{***} & 1.185\fn{***} & 0.335 & 17.662\fn{***} & 2.545\fn{***} & 1.662\fn{***} & 1.307\fn{***} & 19.510\fn{***} \\
	& (0.223) & (0.202) & (0.278) & (2.047) & (0.175) & (0.148) & (0.256) & (1.881) \\
	\addlinespace[5pt]
	{\it Country} & -0.328 & -0.277\fn{*} & 0.413 & -4.148 & 0.166 & 0.208 & 0.088 & 1.986 \\
	& (0.199) & (0.158) & (0.387) & (2.671) & (0.238) & (0.178) & (0.374) & (2.945) \\
	\addlinespace[5pt]
	\addlinespace[5pt]
	{\it Constant} & -1.859\fn{***} & -1.057\fn{***} & 5.665\fn{***} & -36.080\fn{***} & -2.564\fn{***} & -1.576\fn{***} & 7.989\fn{***} & -48.721\fn{***} \\
	& (0.315) & (0.276) & (0.746) & (4.391) & (0.293) & (0.356) & (1.106) & (5.028) \\
	\addlinespace[5pt]
	{$R^2$} & 0.585 & 0.459 & 0.044 & 0.502 & 0.682 & 0.511 & 0.155 & 0.498  \\
	\cmidrule{2-9}
	{\it $\#$ obs} & {200}   & {200}   & {200}   & {200}   & {202}   & {202}   & {202}   & {202} \\
	\midrule
	{\it Controls}  &   {\checkmark}    &   {\checkmark}  &   {\checkmark}    &   {\checkmark}  & {\checkmark}      &   {\checkmark} &   {\checkmark}    &   {\checkmark}      \\

	\bottomrule

\end{tabular}
}

\end{table}
\end{landscape}

\newpage 

\phantomsection
\addcontentsline{toc}{section}{Internet Appendix}
\begin{appendices}
	\begin{center}
		
		{\LARGE \texttt{Internet Appendix to}\label{app:internet_appendix} \vspace{0.5cm}\\
			\textbf{\textquotedblleft The Double-Edged Sword \\ \vspace{0.0cm} of Short-Selling Bans\textquotedblright} \vspace{0.5cm}\\
			{\large (\texttt{not for publication})} \vspace{0.5cm}\\}
		
	\end{center}

	\begin{abstract}
		\noindent This Internet Appendix presents additional technical details and results not included in the main body of the paper.
		
		\vspace{1cm}
		
	\end{abstract}

	\pagenumbering{gobble}   %
	
	\setcounter{section}{0}
	\setcounter{subsection}{0}
	\renewcommand{\thesection}{\Alph{section}}
	\renewcommand{\thesubsection}{\thesection.\arabic{subsection}}
	
	\setcounter{equation}{0}
	\renewcommand{\theequation}{\thesection.\arabic{equation}}
	
	\setcounter{table}{0}
	\renewcommand\thetable{A.\arabic{table}}
	
	\setcounter{figure}{0}
	\renewcommand\thefigure{A.\arabic{figure}}
	
	\newpage
	\pagenumbering{arabic}
	\renewcommand{\thepage}{\arabic{page}}
	\setcounter{page}{1}
	\pagestyle{plain}
	
	\newpage

\newpage

\section{Model's Proofs and Extensions} \label{app:proof_simulations}

\subsection{Proof of \mylem{lem:prob}}\label{app:proof_lemma1}
\begin{proof}
If $c$ is small enough, we have
\begin{align*}
P(q<c)&=\mathbb{E}[\mathbb{E}[\mathbbm{1}_{q<c}|\eta_s]] \nonumber  \\
&=\mathbb{E}[P(Sell| \eta_s)\mathbbm{1}_{Bid<c}] \nonumber \\
&= \mathbb{E}\left[P(Sell|\eta_s)\mid\eta_s<K\right]P(\eta_s<K),
\end{align*}
where $Bid<c$ \emph{iff} $\eta_s<K$, and $\mathbbm{1}$ is an indicator function that takes the value of one (and zero otherwise) if $q<c$. Noting that $P(Sell|\eta_s)=(1-p)g\alpha+(1-\alpha)g \eta_s$, we than have
\begin{equation}\label{eq:probq}
    P(q<c)= \left((1-p)g\alpha+(1-\alpha)g \mathbb{E}[\eta_s\mid\eta_s<K]\right)\cdot P(\eta_s<K).
\end{equation}
\end{proof}

\subsection{Proof of \myres{prop:zeta}}\label{app:proof_result1}
\begin{proof}
	The {\it regulator} only acts when   $\int_{0}^{K}\left((1-p)g\alpha+(1-\alpha)g\eta\right)f(\eta)d\eta>x$ according to \myeq{eq:probq}. Since the left-hand side is increasing in $K$, we obtain the above necessary condition. 
\end{proof}

\subsection{Proof of \mylem{lem:conprob}}\label{app:proof_lemma3}
\begin{proof}
The first part of the proof of this lemma is the same as that of \mylem{lem:prob}, with $\frac{h_I}{h_N}K$ playing the role of $K$. As for the case of the standard uniform distribution, we have:
\begin{align*}
P(\tilde{q}<c) &=\left((1-p)g h_I\alpha+(1-\alpha)g h_N \mathbb{E}[\eta_s\mid\eta_s<\frac{h_I}{h_N}K]\right)\cdot P(\eta_s<\frac{h_I}{h_N}K)\\
&=h_I \left((1-p)g \alpha+(1-\alpha)g  \frac{\mathbb{E}[\eta_s\mid\eta_s<\frac{h_I}{h_N}K]}{h_I/h_N}\right)\cdot \frac{h_I}{h_N}\frac{P(\eta_s<\frac{h_I}{h_N}K)}{h_I/h_N}\\
&=h_I \left((1-p)g \alpha+(1-\alpha)g  \mathbb{E}[\eta_s\mid\eta_s<K]\right)\cdot \frac{h_I}{h_N}P(\eta_s<K)\\
&=\frac{h_I^2}{h_N}P(q<c).
\end{align*}
We used the linearity of the cumulative distribution function of the uniform distribution to get from the second to the third line, and \myeq{eq:probq} to derive the final line.
\end{proof}

\subsection{Proof of \mylem{lem:meanmedian}}\label{app:proof_lemma4}
\begin{proof}
	As also explained in \citet{Diamond1987}, the law of iterated expectations and the risk neutrality of {\it market makers}  imply that
	\begin{equation*}
		\mathbb{E}[\tilde{q}]=\mathbb{E}[\mathbb{E}[V|\tilde{\mathcal{F}}]]=\mathbb{E}[V]=\mathbb{E}[\mathbb{E}[V|\mathcal{F}]]=\mathbb{E}[q],
	\end{equation*}
	where $\mathcal{F}$ ($\tilde{\mathcal{F}}$) denotes the information sets of the {\it market maker} without (with) short-selling bans.  The median price (denoted as $\mu_{1/2}$) coincides with the price attained with the no trade action in either cases. Under no bans, $E[V|\text{No trade}]=p$ and $\mu_{1/2}(q)=p$ since $P(q\leq p)=1-P(Buy)\geq 1/2$ and  $P(q\geq p)=1-P(Sell)\geq 1/2$. With short-selling bans, moreover, $E[V|\text{No Trade},\eta_s]=\frac{((1-g)+g(1-\alpha)\eta_s(1-h_N))p}{1-g+g((1-\alpha)\eta_s(1-h_N)+\alpha(1-p)(1-h_I))}<p$ independently of $\eta_s$. It then follows that $\mu_{1/2}(\tilde{q})<\mu_{1/2}(q)$.
\end{proof}

\subsection{General distribution $f_u(\eta)$}\label{app:technical}

Let us consider the following family of symmetric distributions in $[0,1]$, parametrized by $u\in[0,2]$:

\[ 
f_u(\eta)= \left\{
\begin{array}{ll}
     u-4(u-1)\eta  &  \eta\leq \frac{1}{2}\\
      4-3u+4(u-1)\eta & \eta>\frac{1}{2} \\ 
\end{array} 
\right. 
\]
For example, for $u=1$, we obtain the standard uniform distribution $U(0,1)$. But more generally, this is a tractable family of distributions in $[0,1]$, indexed by $u$, that can be (second-order) stochastically ordered.\footnote{We choose this family of distributions, as opposed to other families like the Beta distribution, so that we can compute the $E[\eta_s|\eta_s<K]$ in closed form.} Since these distributions are symmetric with $E[f_u(\eta)]=1/2$, it is easy to show that $f_{u_1}(\eta)\succeq f_{u_2}(\eta)$ iff $u_1<u_2$: if we consider the ratio $\frac{f_{u_1}(\eta)}{f_{u_2}(\eta)}$, this is increasing in $[0,\frac{1}{2}]$ and decreasing in $[\frac{1}{2},1]$. It, thus, follows by \citet{Ramos2000sufficient} that the two distributions are second-order stochastically ordered. 
Then using \mylem{lem:prob}, and assuming that $K$ is sufficiently small, we get:
\begin{lemma}
\label{lem:SSD}
In the unconstrained economy when $c$ is sufficiently small ($c<\frac{(1-a)p}{(1-a)+4a(1-p)}$), the likelihood of a very low price, $P(q<c)$, is increasing in the perceived variance of $\eta_s$.
\end{lemma}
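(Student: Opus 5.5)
The plan is to write $P(q<c)$ from \mylem{lem:prob} as an explicit function of the shape parameter $u$, show that it is increasing in $u$, and show separately that the variance of $f_u$ is increasing in $u$. Combining the two gives the claim: a higher perceived variance of $\eta_s$ corresponds to a larger $u$, and hence to a larger $P(q<c)$. This agrees with the second-order stochastic ordering $f_{u_1}\succeq f_{u_2}$ for $u_1<u_2$ noted above.

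\emph{Step 1: the threshold condition.} I first check that the condition on $c$ is exactly $K<\tfrac14$. Since $K=\frac{c\alpha(1-p)}{(1-\alpha)(p-c)}$, the inequality $K<\tfrac14$ is equivalent to $4c\alpha(1-p)<(1-\alpha)(p-c)$. This in turn is equivalent to $c\bigl((1-\alpha)+4\alpha(1-p)\bigr)<(1-\alpha)p$, which is the stated bound, reading the paper's $a$ as $\alpha$. In particular $K<\tfrac12$, so on $[0,K]$ only the first branch $f_u(\eta)=u-4(u-1)\eta$ is relevant.

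\emph{Step 2: closed form for $P(q<c)$.} Write $A=(1-p)g\alpha$ and $B=(1-\alpha)g$. Then \mylem{lem:prob} gives
\begin{equation*}
P(q<c)=\int_0^K (A+B\eta)\bigl(u-4(u-1)\eta\bigr)\,d\eta
= u\Bigl(AK+\tfrac{B}{2}K^2\Bigr)-4(u-1)\Bigl(\tfrac{A}{2}K^2+\tfrac{B}{3}K^3\Bigr).
\end{equation*}
This expression is affine in $u$, and its slope is
\begin{equation*}
\frac{\partial P(q<c)}{\partial u}=AK(1-2K)+BK^2\Bigl(\tfrac12-\tfrac43K\Bigr).
\end{equation*}
Both terms are strictly positive when $K<\tfrac14$; indeed $K<\tfrac38$ already suffices. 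Hence $P(q<c)$ is strictly increasing in $u$.

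\emph{Step 3: the variance is increasing in $u$.} By symmetry of $f_u$ around $\tfrac12$, I would write
\begin{equation*}
\mathrm{Var}_u(\eta_s)=2\int_0^{1/2}(\eta-\tfrac12)^2\bigl(u-4(u-1)\eta\bigr)\,d\eta,
\end{equation*}
which is also affine in $u$. Substituting $t=\tfrac12-\eta$, its derivative in $u$ is
\begin{equation*}
2\int_0^{1/2}t^2(4t-1)\,dt=2\Bigl(\tfrac1{16}-\tfrac1{24}\Bigr)>0.
\end{equation*}
So the variance is strictly increasing in $u$ and the map $u\mapsto\mathrm{Var}_u$ is a bijection onto its range. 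Composing it with Step 2 gives that $P(q<c)$ is increasing in the perceived variance.

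The main obstacle is Step 2: confirming that the sign of the $u$-derivative is positive exactly under the stated bound on $c$. This is where the restriction to small $K$ enters. For large $K$, the term $-4(u-1)\eta$ would reverse the effect near $\eta=\tfrac12$, and the second branch of $f_u$ would also come into play.
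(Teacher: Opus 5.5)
Your proposal is correct and follows essentially the same route as the paper: reduce the bound on $c$ to $K<\tfrac14$, use only the first branch $u-4(u-1)\eta$ of $f_u$, show that the resulting integral increases in $u$, and link $u$ to the variance of $\eta_s$. The difference is that you check both monotonicity claims by explicit computation, whereas the paper simply asserts the first and derives the second from the second-order stochastic ordering of the equal-mean family $f_u$. Your version is more self-contained and also shows that $K<\tfrac38$ would already suffice.
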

\begin{proof}
Using \mylem{lem:prob}, we know that 
$$    P(q<c)=\int_{0}^{K}\left((1-p)g\alpha+(1-\alpha)g\eta\right)f_u(\eta)d\eta$$
We now have that $c<\frac{(1-a)p}{(1-a)+4a(1-p)}\Longrightarrow K<\frac{1}{4}$ and hence we can write
$$P(q<c)=\int_{0}^{K}\left((1-p)g\alpha+(1-\alpha)g\eta\right)( u-4(u-1)\eta)d\eta,$$
which increases in $u$. Moreover, because of the stochastic dominance result shown above for the specified families of distributions $f_u(\eta)$, we get that $var[\eta(u)]$ is also increasing in $u$. Thus, the more uncertain the {\it regulator} is about the sentiment of the {\it noise} traders, the higher the left tail of prices and hence the higher the likelihood of bans getting imposed.
\end{proof}

\subsection{Simulations}
In \myfigIA{fig:simulated_bid_ask}, we provide a graphical illustration of the relationship between $\eta_s$ and the prices quoted by {\it market makers}. We use simulated data based on $\alpha=0.5$, $p= 0.5$, $g= 0.9$, and $h_N=0.3$. Panel A considers the benchmark scenario where all traders can freely short the risky asset. While bid and ask prices are increasing in $\eta_s$, the no trade price is independent of $\eta_s$ since {\it market makers} cannot update their beliefs and rely solely on public information to determine the conditional expectation of $V$ (equal to $0.5$ in our simulation). In the subsequent panels, we analyze the alternative scenario where the regulatory authority enforces a prohibition on short sales, while varying the proportion of {\it informed} relative to {\it noise} traders holding the risky asset. Specifically, Panel B sets $h_I = 0.6$ so that $h_I/h_N > 1$ as in \mylem{lem:bidask}, and displays a wider bid-ask spread than the benchmark case, driven by a lower bid price and an identical ask price. Also, the no trade price is now increasing in $\eta_s$ since a higher (lower) $\eta_s$ implies a smaller (larger) number of constrained {\it informed} investors,  and {\it market makers} can adjust upward (downward) their expectations of $V$. In Panel C, we set $h_I/h_N = 1$ as in \citet{Diamond1987} and show that the bid-ask spread remains identical to the benchmark case, while the no trade price remains a function on $\eta_s$. In Panel D, finally, we set $h_I = 0.2$ so that $h_I/h_N < 1$ and report a slightly tighter bid-ask spread than the benchmark case due to a higher bid price. This happens as sell orders are more likely to originate from {\it noise} traders, thus reducing the adverse selection faced by {\it market makers}.

Moreover, to better understand the effect of bans on prices, \myfigIA{fig:model_distribution} estimates the density of prices with or without bans using simulated data under different scenarios. Without loss of generality, we choose the following baseline parameters: $a=0.5, p=0.5, \eta=0.5, g=0.9,  h_N=0.3$, and  $h_I = 0.2$ (so that $h_I^2/h_N < 1$ in Panel A) or $h_I = 0.6$ (so that $h_I^2/h_N > 1$ in Panel B). Akin to \citet{Diamond1987}, the price is equal to $E[V|\text{No trade}]$ in the case of a no trade action. 
Panel A examines the price distribution with $h_I^2/h_N < 1$ and shows that the {\it regulator} can effectively reduce the probability of an extreme left-tail event as stated in \myres{prop:condition}. When short selling is prohibited (relative to the unconstrained scenario), the weight on very small price realizations falls while the likelihood of having prices below the average (but not extreme) rises, since the no-trade event is now regarded as negative news. 
Panel B, moreover, considers the price distribution with $h_I^2/h_N > 1$ and reveals that the {\it regulator} is unable to reduce the probability of an extreme left-tail event as such events are more likely to happen when short selling is prohibited (relative to the unconstrained scenario). Why is this so? Even though the likelihood of a sell order decreases, the bid price after the ban also declines. This is because when many informed traders own the asset, the sell order becomes more informative about the payoff, and market makers adjust their valuation towards the low payoff (i.e., zero). In that case, bans are ineffective.  

\newpage

\section{Supplementary Empirical Analysis} \label{app:supplement_analysis}
\subsection{Short-Selling Bans and Market Liquidity}
Here, we present the full analysis of our study of the effect of short-sale restrictions on stock market liquidity using bid-ask spreads, following the seminal paper of \citet{Beber2013}. While other measures of market liquidity could be used, \citet{Goyenko2009} show that liquidity measures based on bid-ask prices are closely related to actual transaction costs. To assess the impact of the ban, we calculate the average daily bid-ask spread over a window that covers one month before and one month after the introduction of short-selling bans. While liquidity was lower during the ban than in the preceding period as previously discussed,  we cannot conclude that the imposition of short-selling bans \emph{caused} the rise in bid-ask spreads. There is evidence that liquidity started to decrease several weeks before the imposition of bans. To visually inspect the sensitivity of our results to the specific choice of the start and end dates of the observation windows, we examine the daily bid-ask spreads between January 2 and June 2, 2020. 

\myfigIA{fig:DiD_bas} plots the average bid-ask spread for two groups of countries, namely, countries that imposed short-selling bans (dark blue line) and countries that refrained from this policy (light blue line). Both series are smoothed on a five-day rolling window to mitigate noise. While bid-ask spreads began to widen in February when the infection started to spread in Europe, short-selling bans were not implemented until March 17, 2020. In contrast, short-selling bans during the global financial crisis were enacted almost immediately after the collapse of Lehman Brothers on September 15, 2008. The delayed policy response in 2020 is potentially beneficial for our analysis, as it allows the effects of confounding factors like  pandemic-related uncertainty to be reflected in prices prior to the imposition of the bans. As a result, this delay can improve our ability to isolate the impact of short-selling bans on market liquidity. Visually,  \myfigIA{fig:DiD_bas} shows that the bid-ask spreads for the two groups of countries moved together prior to the implementation of short-selling bans.  However, following the introduction of these restrictions (red dashed vertical line), bid-ask spreads increased more sharply in the countries that imposed bans, suggesting a potential adverse effect of short-selling bans on market liquidity.  The divergence in bid-ask spreads gradually narrows after the bans are lifted on May 18, 2020 (green dashed line).
To formalize our findings, \mytabIA{tab:DiD_bas} reports the results of a difference-in-differences regression that estimates how short-selling bans differentially affected market liquidity. The dependent variable is the daily percentage bid-ask spread.  \emph{Country} is a dummy variable that equals one for the group of countries that introduced short-selling bans (i.e., Austria, Belgium, France, Greece, Italy, and Spain), and zero for the control group of countries (i.e., Denmark, Finland, Germany, Ireland, Netherlands, Norway,  Poland, Portugal, Sweden, Switzerland, and the United Kingdom). \emph{Ban} is a dummy variable that equals one (zero) for a post-treatment (pre-treatment) period of one month that goes from March 17 to April 15, 2020 (February 17 to March 16, 2020). Standard errors are clustered by stock and time (calendar date) dimensions. 

The first specification shows that the average bid-ask spreads increased by $25$ basis points during the short-selling ban period across all stocks in the sample.  However, the difference-in-differences specification allows us to estimate the relative effect of short-selling bans on liquidity, while controlling for both unconditional differences in bid-ask spread levels and the overall rise in spreads during the ban period. The coefficient estimate on the interaction term {\it Ban} $\times$ {\it Country} indicates that  the average bid-ask spreads in countries that imposed short-selling bans widened by an additional $11.6$ basis points compared to countries without short-selling restrictions, and the result is statistically significant at the 1\% level. The second specification introduces stock-level fixed effects, while the third specification further adds time (calendar date) fixed effects. Stock fixed effects control for time-invariant unobserved heterogeneity such as the number of market makers, analyst coverage, and country characteristics. Time fixed effects capture unobserved common shocks that vary over time, including market-wide liquidity conditions, which are particularly relevant during periods of systemic stress like the COVID-19 pandemic. The second and third specifications confirm the result of the first specification in terms of statistical significance and point estimate, with the differential effect of short-selling bans on bid-ask spreads estimated at $13$ basis points. 

Stock and time fixed effects may fail to control for unobserved time-varying stock characteristics. If the decision to impose short-selling bans is correlated with country or stock characteristics, then any regression results may incorrectly attribute the effects of those characteristics to the imposition of short-selling bans. \citet{Beber2013}, for example, argue that country-level CDS spreads may be correlated with the regulatory decisions to implement short-selling bans, raising the possibility that our estimates could be confounded by country-specific default risk. To address this concern, our fourth specification includes country-specific daily CDS spreads and firm size (measured as the log of stock market capitalization) as control variables. The coefficient on our variable of interest remains statistically significant and closely aligned in magnitude ($12$ basis points) with previous specifications.  We thus conclude that our results support \myhyp{h:liquidity} and are consistent with the findings of \citet{Beber2013} during the global financial crisis.\footnote{\citet{Beber2013} estimate that short-selling bans increased bid-ask spreads by around 198 basis points for covered bans, with the effect reduced by 65 basis points in jurisdictions with short-sale disclosure. In contrast, our estimates range from 11 to 13 basis points, a discrepancy we attribute to sample differences. First, our analysis excludes micro  and nano caps, which are more illiquid and likely to magnify the effect. As a result, pre-ban bid-ask spreads in our sample are significantly lower across all countries. Second, in 2008 several countries imposed bans on naked short selling, which has been permanently prohibited in Europe since 2012. Take together, these differences may explain the smaller impact of short-selling bans on liquidity observed in 2020.}

In our baseline specifications, we cluster standard errors by stock and time dimensions, which may fail to fully capture cross-sectional dependence arising from shocks that simultaneously affect multiple firms within the same country on the same day. For instance, a policy announcement like in our case could generate correlated responses among firms within a country, beyond what is captured by time fixed effects alone. To address this concern, we extend our clustering strategy by computing standard errors clustered by stock and country-time dimensions,  accounting for cross-sectional correlation within each country at a given time. The results, presented in \mytabIA{tab:DiD_bas_sctcls}, reveal that the standard error on the slope coefficient of interest, i.e., the coefficient measuring the differential change in bid-ask spreads in countries with bans relative to those without bans following the short-selling restrictions, is slightly wider when stock and time fixed effects are omitted, but becomes slightly tighter when both are included.  Overall, the results remain qualitatively unchanged, confirming our baseline findings.

\subsection{Short-Selling Bans and Endogeneity}
A key concern with the estimates in \mytabIA{tab:DiD_bas} is the potential endogeneity of short-selling bans. If policymakers tend to impose bans during periods of rising volatility and falling liquidity, the relationship between short-selling bans and market liquidity could not be interpreted as a causal relationship. To address this concern, we employ an instrumental variables strategy, where the first stage determines the likelihood of a ban and the second stage its effect on liquidity. Valid instruments must be correlated with the decision to impose a ban but uncorrelated with the residuals of the bid-ask spread regression. Since short-selling bans are market-wide policies, the instruments must also be market-level variables that vary over time to avoid collinearity with stock fixed effects.

Similar to \citet{Beber2013}, we employ the monthly average sovereign CDS spread and the monthly Financial Stress Index of \citet{Duprey2017} in logs as instruments. Both variables are measured over the previous month at the country level, and daily values are constructed by forward filling, i.e., holding the most recent monthly observation constant until a new value becomes available.\footnote{Country-level data on the Financial Stress Index are available through the {\it ECB Data Portal}. For Norway and Switzerland, which are not included in the ECB dataset, we replicate the index in the spirit of \citet{Duprey2017}.} The CDS spread is a market-based proxy for the default risk of a country, while the Financial Stress Index captures systemic financial pressures. We expect countries in which these sources of risk are greater to be more inclined to impose short-selling restrictions. Importantly, because the instruments are lagged, they should not be correlated with contemporaneous bid-ask spreads at the individual stock level if default and systemic risk are already fully priced in contemporaneous bid-ask spreads.

We report our results in \mytabIA{tab:IV_bas}. In the first-stage regression, we regress the dummy variable indicating the ban period on both instruments, which exhibit strong explanatory power, even after controlling for firm size and firm-level volatility, and including stock and time (calendar date) fixed effects. In the second stage regression, we regress percentage bid-ask spreads on the instrumented short-selling ban indicator, controlling for firm size, firm-level volatility, and including stock and time (calendar date) fixed effects. We estimate our specifications over the full sample from January 2 to June 2, 2020. While our difference-in-differences analysis relies on a narrow one-month window around the introduction of the bans, the full sample for the instrumental variables approach provides greater variation in the instruments and helps improve the predictive power of the first-stage regression. The coefficient on the estimated ban, ranging between $0.126$ and $0.135$,  is statistically significant and closely aligned with the magnitudes reported in \mytabIA{tab:DiD_bas}. 

Finally, we assess the validity of our instruments using standard diagnostic tests. First, the Kleibergen-Paap LM statistic rejects the null hypothesis of underidentification, confirming that the instruments are relevant and the model is identified. Second, both the Kleibergen-Paap and Cragg-Donald Wald F-tests exceed the critical values proposed by \citet{stock_yogo:2015}, indicating that the instruments are not weak. Finally, we test the validity of the overidentifying restrictions, which require that the instruments are correlated with the endogenous variables but uncorrelated with the error term. The Hansen J statistic supports this assumption, indicating that the instruments are unrelated to the residual error term. Overall, our instrumental variable estimates reinforce the baseline results and strengthen the causal interpretation of the ban's impact on market liquidity.

\newpage

\begin{figure}
	\begin{center}
		\includegraphics[scale = 0.55, angle = 0, trim = 00mm 00mm 00mm 00mm]{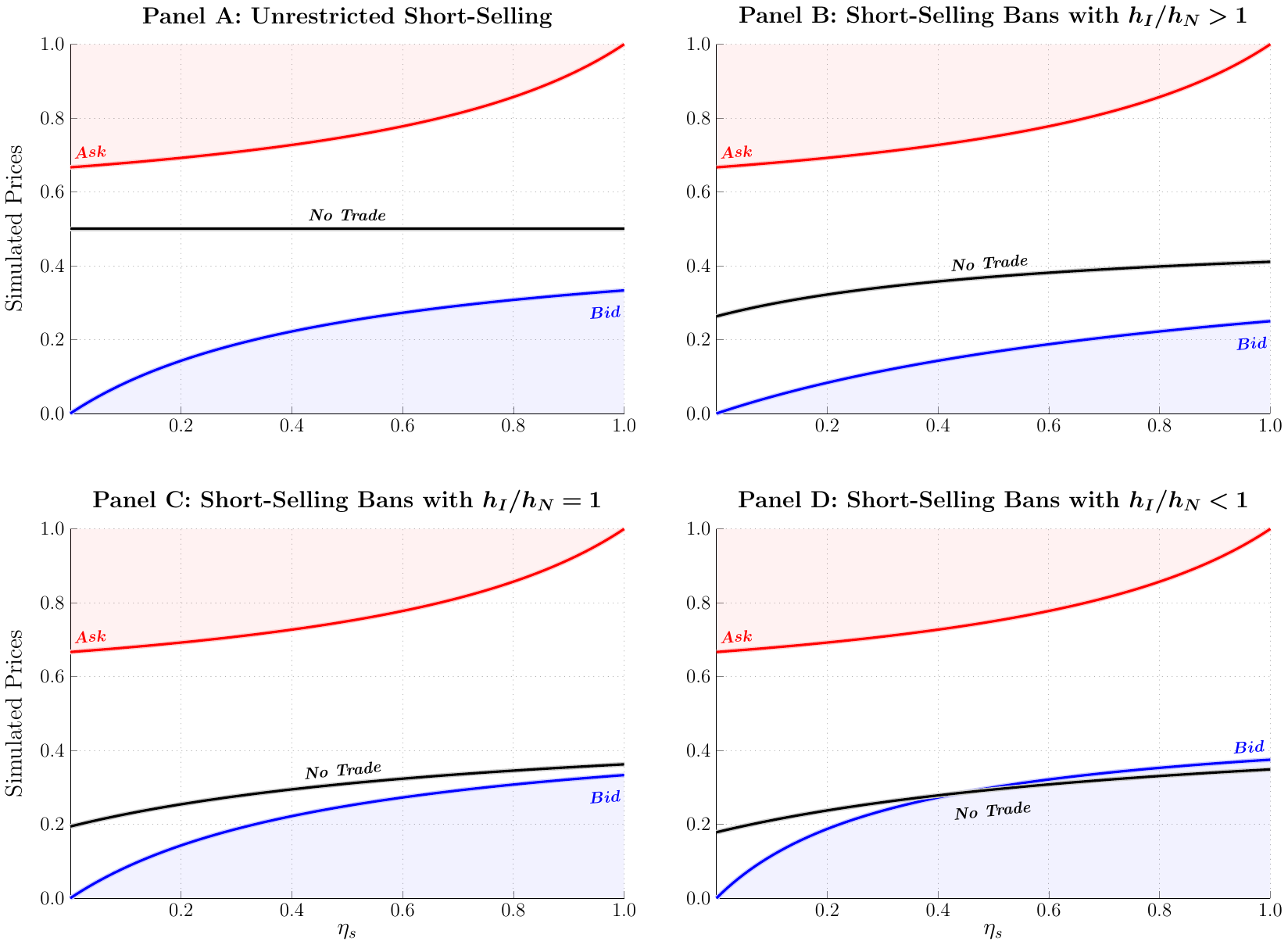}
		\vspace{0.5cm}
		\caption{\bf Simulated Bid-Ask Spreads under Different Scenarios} 
		\label{fig:simulated_bid_ask}
	\end{center}
	
\begin{footnotesize}
This figure displays the relationship between $\eta_{s}$ and the prices set by {\it market makers} under different scenarios. Panel A illustrates the scenario with no short-selling ban, whereas Panels B, C, and D focus on the case with short-selling bans, while varying the proportion of {\it informed} to {\it noise} traders holding the risky asset  ($h_I/h_N$). The simulation assumes $\alpha=0.5$, $p= 0.5$, $g= 0.9$, $h_N=0.3$, $h_I = 0.6$ in Panel B, $h_I = 0.3$ in Panel C, and $h_I = 0.2$ in Panel D. A description of these parameters is presented in \mytabIA{tab:summary_notation}.

\end{footnotesize}
\end{figure}

\begin{figure}

	\begin{center}		
		\includegraphics[scale=0.70, angle = 0, trim = 05mm 00mm 00mm 00mm]{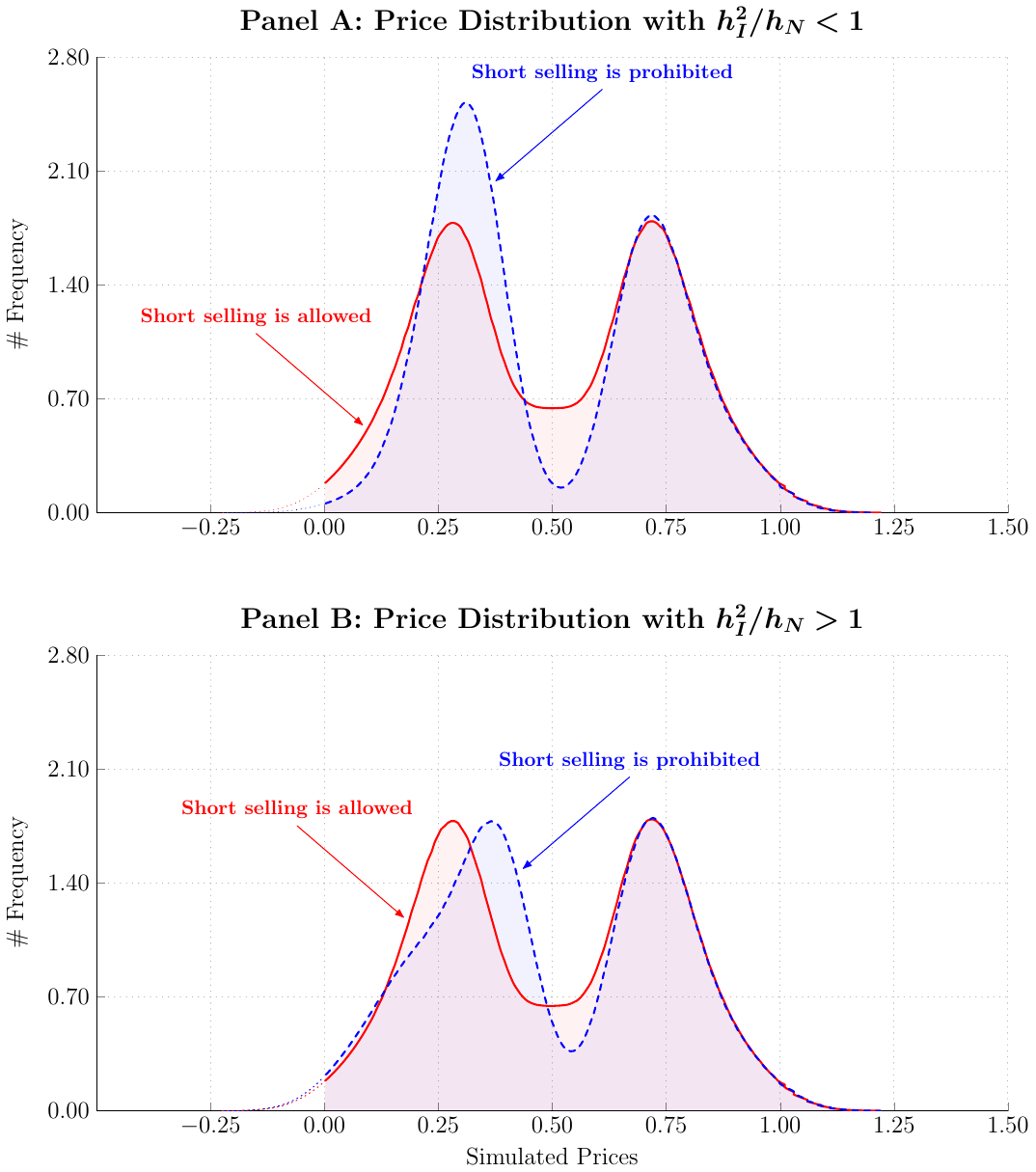}
		\vspace{0.5cm}
		\caption{\bf Distribution of Simulated Prices} 
		\label{fig:model_distribution}
	\end{center}

	\begin{footnotesize}
		This figure displays the distribution of simulated stock prices under the benchmark scenario (short selling is permitted) and the alternative scenario (short selling is restricted). In Panel A, the price distribution under short-selling restrictions  is characterized by  $h_I^2/h_N < 1$, meaning that the proportion of {\it informed} investors holding the stock ($h_I$) is substantially lower than the fraction of {\it noised} investors holding the stock ($h_N$).  In Panel B, instead, the price distribution under short-selling restrictions  is characterized by  $h_I^2/h_N > 1$, meaning that the proportion of {\it informed} investors holding the stock is substantially higher than the fraction of {\it noised} investors holding the stock. We simulate 200 prices using $\alpha=0.5$, $p= 0.5$, $g= 0.9$, $h_N=0.3$, and $h_I = 0.2$ (so that $\smash{h_I^2/h_N < 1}$) or $h_I = 0.6$ (so that $\smash{h_I^2/h_N >1}$). A description of these parameters is presented in \mytabIA{tab:summary_notation}.	
			
	\end{footnotesize}
\end{figure}
\begin{landscape}
\begin{figure}
		\begin{center}
		\includegraphics[scale=1.00, angle = 0, trim = 00mm 00mm 00mm 00mm]{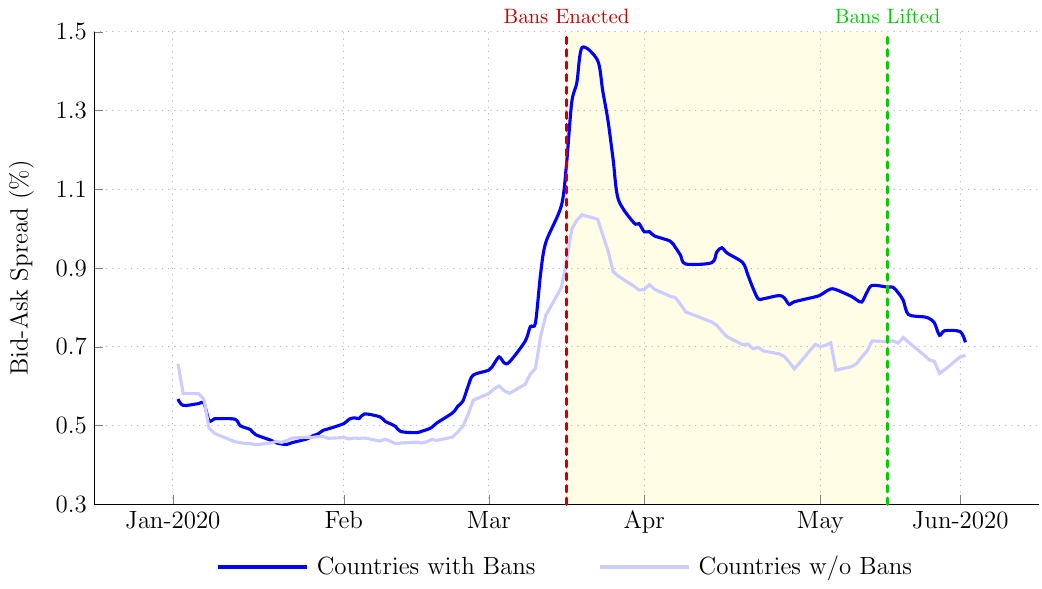}
		\vspace{0.5cm}
		\caption{\bf Bid-Ask Spreads and Short-Selling Bans} \label{fig:DiD_bas}
	\end{center}

	\begin{footnotesize}
		This figure shows the average percentage bid-ask spread of stocks traded in European markets around the introduction of temporary short-selling bans during the COVID-19 pandemic. The group of countries that restricted short sales includes Austria, Belgium, France, Greece, Italy, and Spain, whereas the group of countries that allowed short sales consists of Denmark, Finland, Germany, Ireland, Netherlands, Norway,  Poland, Portugal, Sweden, Switzerland, and the United Kingdom. Short-selling bans were introduced on March 17 and remained in place until May 18, 2020. Daily bid-ask spreads are averaged across countries with and without short-selling bans nd then smoothed using a five-day rolling average. The sample includes small, mid, and large cap stocks (micro and nano caps are excluded), and runs between January 2 and June 2, 2020. Data are collected from {\it Datastream}.	 
							
	\end{footnotesize}

\end{figure}
\end{landscape}
\begin{landscape}
	
	\begin{figure}
		\begin{center}
		\includegraphics[scale=1.00, angle = 0, trim = 00mm 00mm 00mm 00mm]{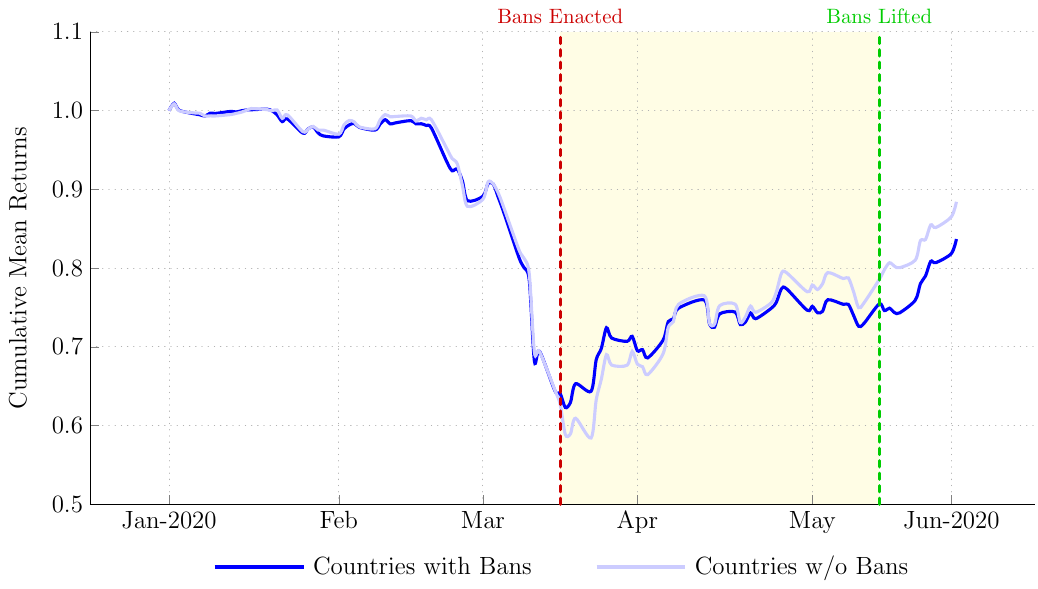}
		\vspace{0.5cm}
		\caption{\bf Mean Returns and Short-Selling Bans} \label{fig:DiD_mean}
	\end{center}

	\begin{footnotesize}
	This figure shows the cumulative mean return of stocks traded in European markets around the introduction of temporary short-selling bans during the COVID-19 pandemic. The group of countries that restricted short sales includes Austria, Belgium, France, Greece, Italy, and Spain, whereas the group of countries that allowed short sales consists of Denmark, Finland, Germany, Ireland, Netherlands, Norway,  Poland, Portugal, Sweden, Switzerland, and the United Kingdom. Short-selling bans were introduced on March 17 and remained in place until May 18, 2020. Returns are based on daily total return indices expressed in dollar terms before taking the mean value across all stocks in countries with and without short-selling bans, respectively. The sample includes small, mid, and large cap stocks (micro and nano caps are excluded), and runs between January 2 and June 2, 2020. Data are collected from {\it Datastream}.			 
	
\end{footnotesize}

\end{figure}
\end{landscape}
\begin{landscape}
	
	\begin{figure}
		\begin{center}
		\includegraphics[scale=1.00, angle = 0, trim = 00mm 00mm 00mm 00mm]{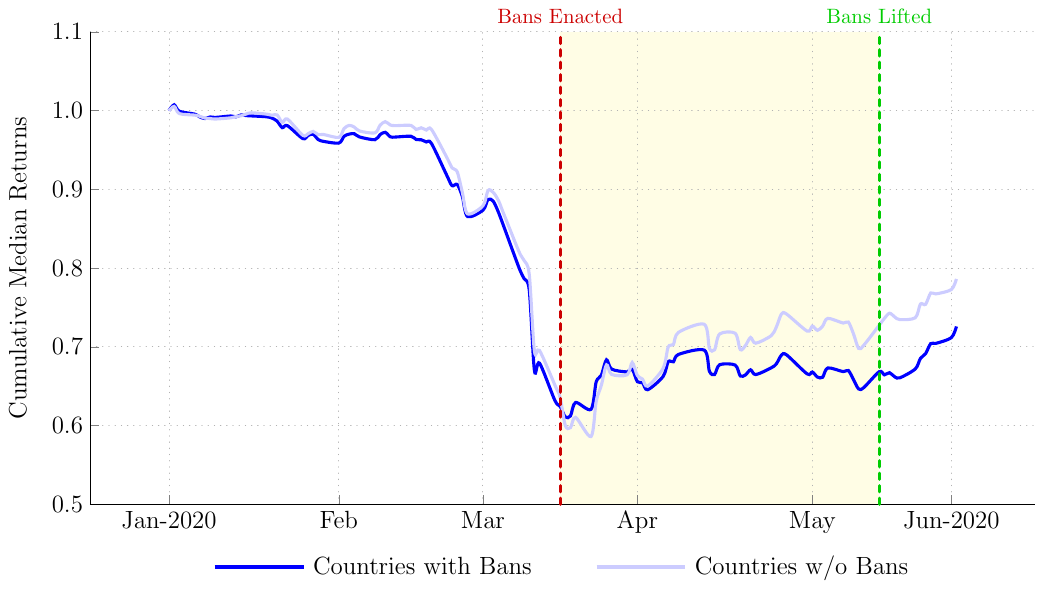}
		\vspace{0.5cm}
		\caption{\bf Median Returns and Short-Selling Bans} \label{fig:DiD_median}
	\end{center}

	\begin{footnotesize}
	This figure shows the cumulative median return of stocks traded in European markets around the introduction of temporary short-selling bans during the COVID-19 pandemic. The group of countries that restricted short sales includes Austria, Belgium, France, Greece, Italy, and Spain, whereas the group of countries that allowed short sales consists of Denmark, Finland, Germany, Ireland, Netherlands, Norway,  Poland, Portugal, Sweden, Switzerland, and the United Kingdom. Short-selling bans were introduced on March 17 and remained in place until May 18, 2020. Returns are based on daily total return indices expressed in dollar terms before taking the median value across all stocks in countries with and without short-selling bans, respectively. The sample includes small, mid, and large cap stocks (micro and nano caps are excluded), and runs between January 2 and June 2, 2020. Data are collected from {\it Datastream}.	
	
\end{footnotesize}

\end{figure}
\end{landscape}
\begin{landscape}
\begin{figure}
		\begin{center}
		\includegraphics[scale=1.00, angle = 0, trim = 00mm 00mm 00mm 00mm]{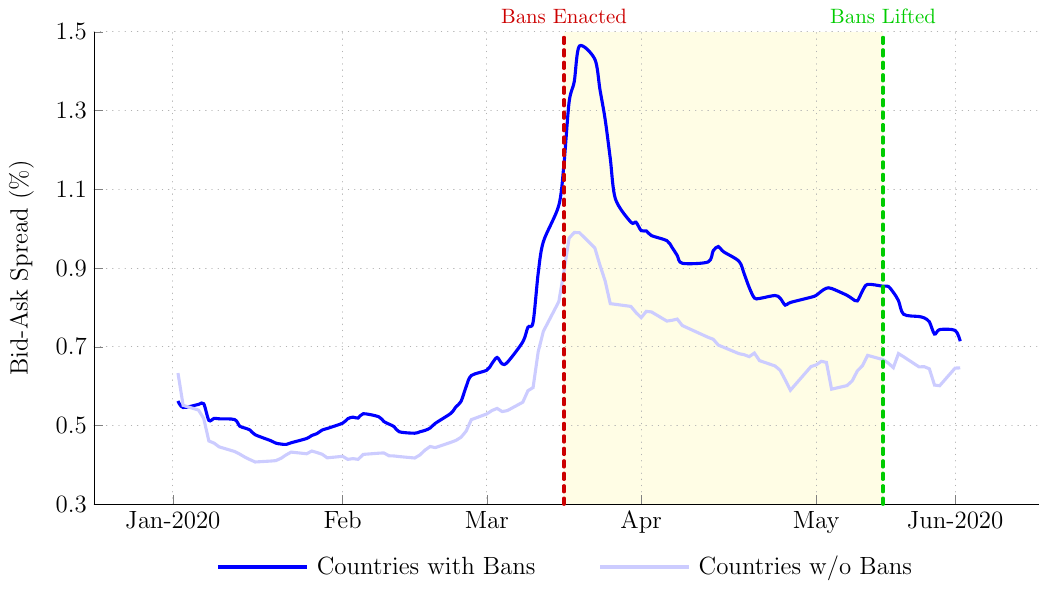}
		\vspace{0.5cm}
		\caption{\bf Bid-Ask Spreads and Short-Selling Bans: Matched Sample} \label{fig:DiD_bas_matched}
	\end{center}

	\begin{footnotesize}
	This figure shows the average percentage bid-ask spread of stocks traded in European markets around the introduction of temporary short-selling bans during the COVID-19 pandemic. The group of countries that restricted short sales includes Austria, Belgium, France, Greece, Italy, and Spain, whereas the group of countries that allowed short sales consists of Denmark, Finland, Germany, Ireland, Netherlands, Norway,  Poland, Portugal, Sweden, Switzerland, and the United Kingdom. Short-selling bans were introduced on March 17 and remained in place until May 18, 2020. Daily bid-ask spreads are averaged across countries with and without short-selling bans and then smoothed using a five-day rolling average. The sample includes small, mid, and large cap stocks (micro and nano caps are excluded) matched by market capitalization and industry classification according to the Industry Classification Benchmark (ICB) code, and runs between January 2 and June 2, 2020. Data are collected from {\it Datastream}.	 
	
\end{footnotesize}

\end{figure}
\end{landscape}
\begin{landscape}
	
	\begin{figure}
		\begin{center}
		\includegraphics[scale=1.00, angle = 0, trim = 00mm 00mm 00mm 00mm]{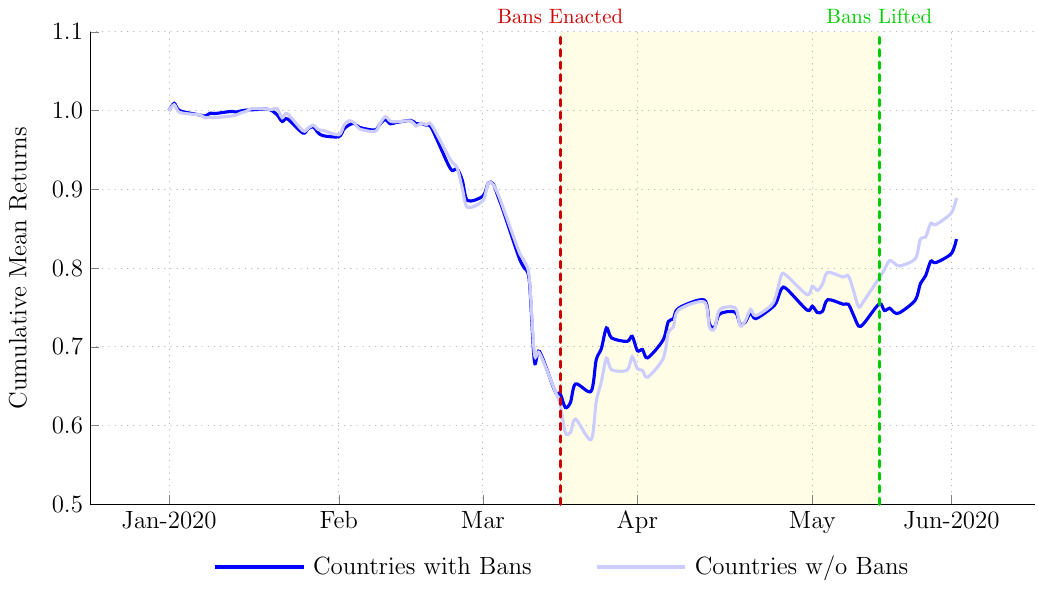}
		\vspace{0.5cm}
		\caption{\bf Mean Returns and Short-Selling Bans: Matched Sample} \label{fig:DiD_mean_matched}
	\end{center}

	\begin{footnotesize}
	This figure shows the cumulative mean return of stocks traded in European markets around the introduction of temporary short-selling bans during the COVID-19 pandemic. The group of countries that restricted short sales includes Austria, Belgium, France, Greece, Italy, and Spain, whereas the group of countries that allowed short sales consists of Denmark, Finland, Germany, Ireland, Netherlands, Norway,  Poland, Portugal, Sweden, Switzerland, and the United Kingdom. Short-selling bans were introduced on March 17 and remained in place until May 18, 2020. Returns are based on daily total return indices expressed in dollar terms before taking the mean value across all stocks in countries with and without short-selling bans, respectively. The sample includes small, mid, and large cap stocks (micro and nano caps are excluded) matched by market capitalization and industry classification according to the Industry Classification Benchmark (ICB) code, and runs between January 2 and June 2, 2020. Data are collected from {\it Bloomberg} and {\it Datastream}.
	
\end{footnotesize}

\end{figure}
\end{landscape}
\begin{landscape}
	
	\begin{figure}
		\begin{center}
		\includegraphics[scale=1.00, angle = 0, trim = 00mm 00mm 00mm 00mm]{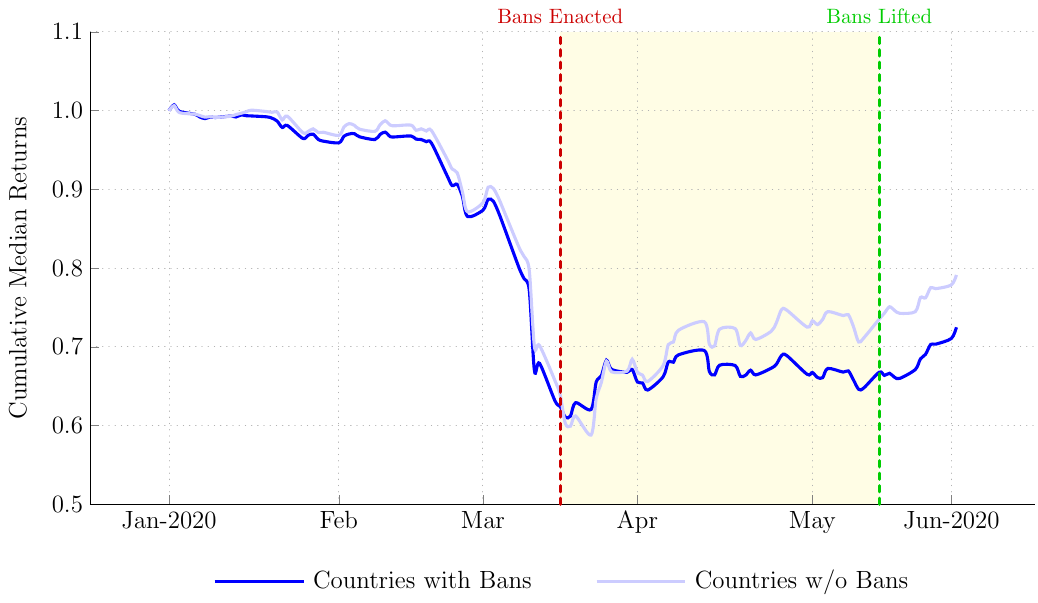}
		\vspace{0.5cm}
		\caption{\bf Median Returns and Short-Selling Bans: Matched Sample} \label{fig:DiD_median_matched}
	\end{center}

	\begin{footnotesize}
	This figure shows the cumulative median return of stocks traded in European markets around the introduction of temporary short-selling bans during the COVID-19 pandemic. The group of countries that restricted short sales includes Austria, Belgium, France, Greece, Italy, and Spain, whereas the group of countries that allowed short sales consists of Denmark, Finland, Germany, Ireland, Netherlands, Norway,  Poland, Portugal, Sweden, Switzerland, and the United Kingdom. Short-selling bans were introduced on March 17 and remained in place until May 18, 2020. Returns are based on daily total return indices expressed in dollar terms before taking the median value across all stocks in countries with and without short-selling bans, respectively. The sample includes small, mid, and large cap stocks (micro and nano caps are excluded) matched by market capitalization and industry classification according to the Industry Classification Benchmark (ICB) code, and runs between January 2 and June 2, 2020. Data are collected from {\it Bloomberg} and {\it Datastream}.
	
	\end{footnotesize}

\end{figure}
\end{landscape}
\begin{landscape}
\begin{figure}
		\begin{center}
		\includegraphics[scale=1.00, angle = 0, trim = 00mm 00mm 00mm 00mm]{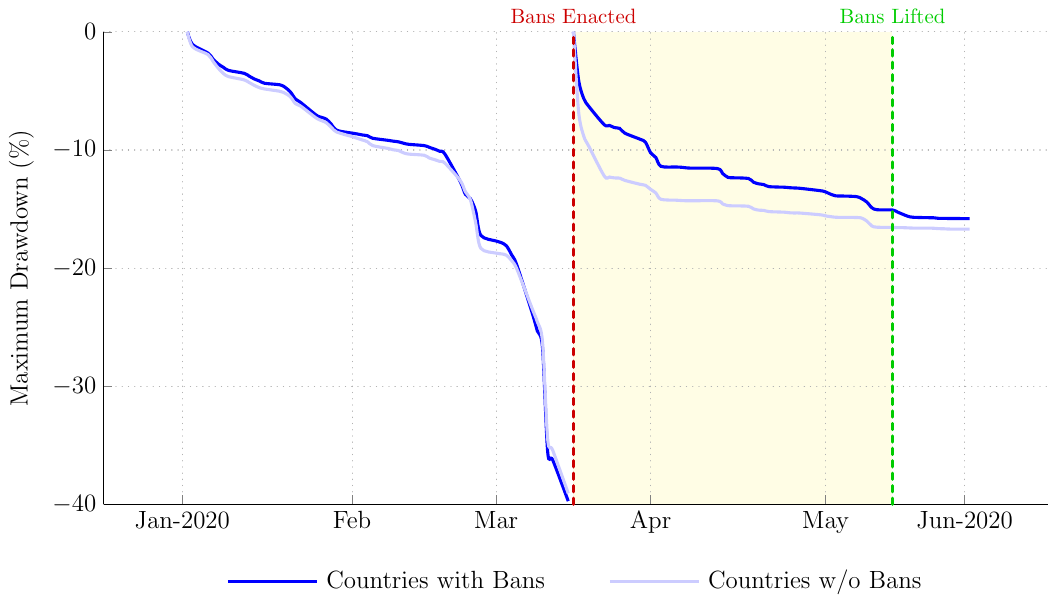}
		\vspace{0.5cm}
		\caption{\bf Maximum Drawdown and Short-Selling Bans: Matched Sample} \label{fig:DiD_mdd_matched}
	\end{center}

	\begin{footnotesize}
		This figure shows the average maximum drawdown of stocks traded in European markets around the introduction of  temporary short-selling bans during the COVID-19 pandemic. The group of countries that restricted short sales includes Austria, Belgium, France, Greece, Italy, and Spain, whereas the group of countries that allowed short sales consists of Denmark, Finland, Germany, Ireland, Netherlands, Norway,  Poland, Portugal, Sweden, Switzerland, and the United Kingdom. Short-selling bans were introduced on March 17 and remained in place until May 18, 2020. The maximum drawdown employs an expanding window from January 2 to March 16, 2020, and from March 17 to June 2, 2020, respectively, and is based on daily total return indices expressed in dollar terms before taking the mean value across all stocks in countries with and without short-selling bans, respectively. The sample includes small, mid, and large cap stocks (micro and nano caps are excluded) matched by market capitalization and industry classification according to the Industry Classification Benchmark (ICB) code, and runs between January 2 and June 2, 2020. Data are collected from {\it Bloomberg} and {\it Datastream}.	
		
	\end{footnotesize}

\end{figure}
\end{landscape}
\begin{landscape}
\begin{figure}
		\begin{center}
		\includegraphics[scale=1.00, angle = 0, trim = 00mm 00mm 00mm 00mm]{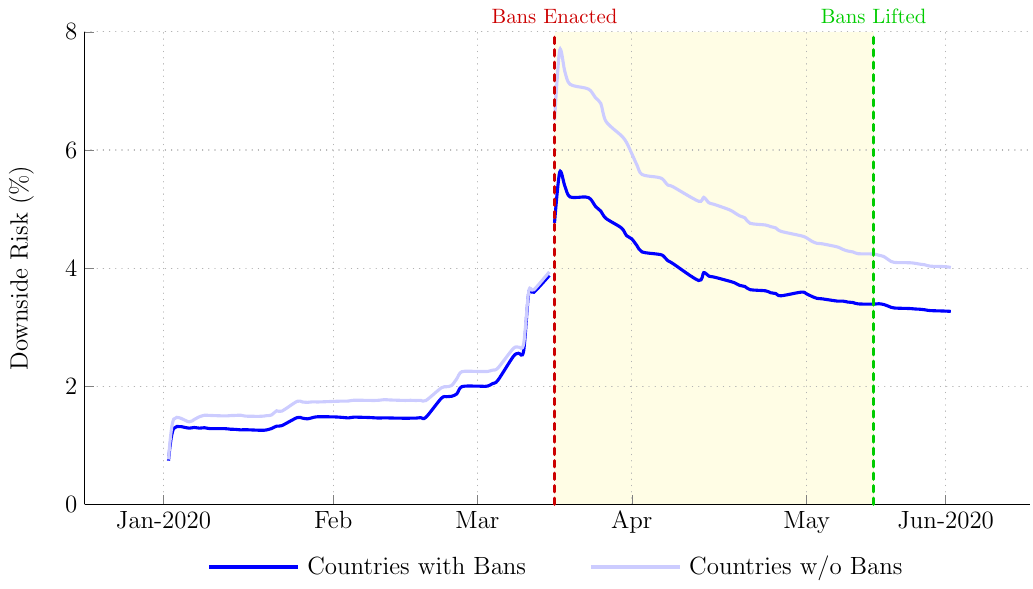}
		\vspace{0.5cm}
		\caption{\bf Downside Risk and Short-Selling Bans} \label{fig:DiD_down}
	\end{center}

	\begin{footnotesize}
	This figure shows the average downside risk of stocks traded in European markets around the introduction of  temporary short-selling bans during the COVID-19 pandemic. The group of countries that restricted short sales includes Austria, Belgium, France, Greece, Italy, and Spain, whereas the group of countries that allowed short sales consists of Denmark, Finland, Germany, Ireland, Netherlands, Norway,  Poland, Portugal, Sweden, Switzerland, and the United Kingdom.  Short-selling bans were introduced on March 17 and remained in place until May 18, 2020. The downside risk employs an expanding window from January 2 to March 16, 2020, and from March 17 to June 2, 2020, respectively, and is constructed as the volatility of negative returns on daily total return indices expressed in dollar terms before taking the mean value across all stocks in countries with and without short-selling bans, respectively. The sample includes small, mid, and large cap stocks (micro and nano caps are excluded), and runs between January 2 and June 2, 2020. Data are collected from {\it Datastream}.	
	
\end{footnotesize}
\end{figure}
\end{landscape}
\begin{landscape}
\begin{figure}
		\begin{center}
		\includegraphics[scale=1.00, angle = 0, trim = 00mm 00mm 00mm 00mm]{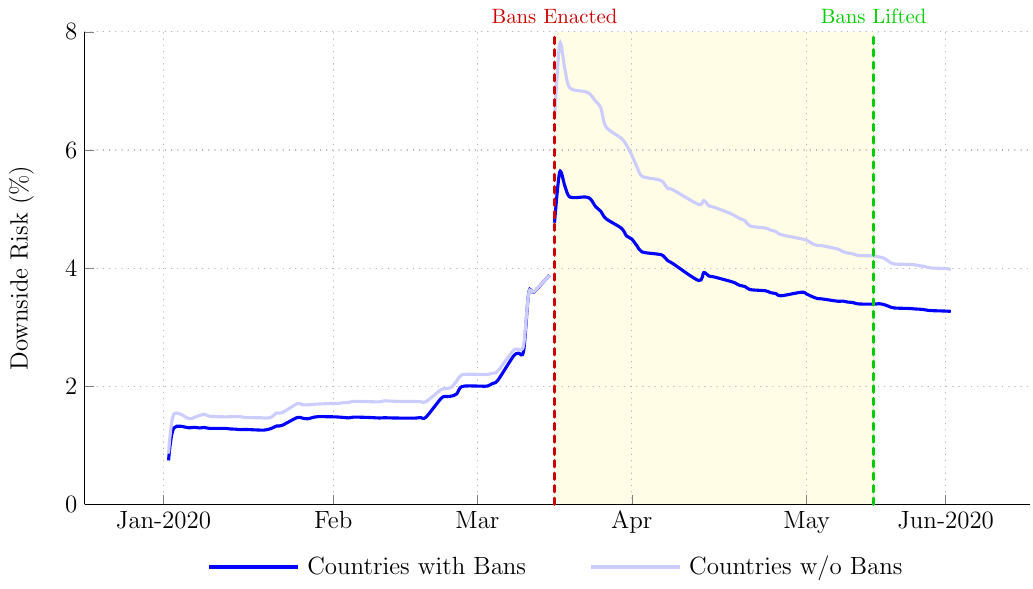}
		\vspace{0.5cm}
		\caption{\bf Downside Risk and Short-Selling Bans: Matched Sample} \label{fig:DiD_down_matched}
	\end{center}

	\begin{footnotesize}
		This figure shows the average downside risk of stocks traded in European markets around the introduction of  temporary short-selling bans during the COVID-19 pandemic. The group of countries that restricted short sales includes Austria, Belgium, France, Greece, Italy, and Spain, whereas the group of countries that allowed short sales consists of Denmark, Finland, Germany, Ireland, Netherlands, Norway,  Poland, Portugal, Sweden, Switzerland, and the United Kingdom.  Short-selling bans were introduced on March 17 and remained in place until May 18, 2020. The downside risk employs an expanding window from January 2 to March 16, 2020, and from March 17 to June 2, 2020, respectively, and is constructed as the volatility of negative returns on daily total return indices expressed in dollar terms before taking the mean value across all stocks in countries with and without short-selling bans, respectively. The sample includes small, mid, and large cap stocks (micro and nano caps are excluded) matched by market capitalization and industry classification according to the Industry Classification Benchmark (ICB) code, and runs between January 2 and June 2, 2020. Data are collected from {\it Bloomberg} and {\it Datastream}.
		
	\end{footnotesize}

\end{figure}
\end{landscape}

\begin{table}
\caption{\bf Summary of the Notation} \label{tab:summary_notation}
\begin{footnotesize}
This table summarizes the model's notation used in \mysec{sec:bans_model} and \mysecIA{app:proof_simulations}.

    \end{footnotesize}
\bigskip

\newcolumntype{L}[1]{>{\raggedright\let\newline\\\arraybackslash\hspace{0pt}}m{#1}}
\newcolumntype{C}[1]{>{\centering\let\newline\\\arraybackslash\hspace{0pt}}m{#1}}
\newcolumntype{R}[1]{>{\raggedleft\let\newline\\\arraybackslash\hspace{0pt}}m{#1}}
\setlength\extrarowheight{2pt}

\robustify\bfseries
\centering
\scalebox{0.90}{
	
	\begin{tabular}{C{2.0cm}m{0.9\linewidth}}
	    \toprule
	    \multicolumn{1}{c}{\bf Variable} & \multicolumn{1}{c}{\bf Definition} \\
 		\midrule 		
 		\addlinespace[5pt]
 		
 		$V$ & {Liquidation value of the risky asset, either zero or one.} \\
 		\addlinespace[5pt]
 		
 		$p$   & {Probability that $V$ is equal to one.} \\
 		\addlinespace[5pt]

 		$1- p$   & {Probability that $V$ is equal to zero.} \\
        \addlinespace[5pt]

		$g$   & {Probability that a trader is active either for information or liquidity reasons. He submits a buy order or a sell-or-short order of a single stock.} \\
		\addlinespace[5pt]

		$1-g$   & {Probability that a trader does not actively participate in the market, so no trade is observed.} \\
\addlinespace[5pt]

		$\alpha$   & {Probability that a trader is {\it informed} (or the fraction of {\it informed} traders among those active in the market).} \\
		\addlinespace[5pt]
		
		$1-\alpha$   & {Probability that a given trader is {\it noise} (or the fraction of {\it noise} traders among those active in the market).} \\
        \addlinespace[5pt]

 		$h_I$    & {Probability that an {\it informed} trader owns the stock.}  \\
 		\addlinespace[5pt]
 		
 		$h_N$ & {Probability that an {\it noise} trader owns the stock.}  \\
 		\addlinespace[5pt]

 		$\eta_s$ & {Probability that an {\it noise} trader submits a sell-or-short order, which is known to {\it market makers}.} \\
 		\addlinespace[5pt] 		
 		
 		$f(\eta_s)$ & {Probability distribution of $\eta_s$, which is known to the {\it regulator}.} \\
 		\addlinespace[5pt]

 		$q$ & {Equilibrium price of the risky asset from the {\it regulator}'s perspective when short selling is allowed.} \\ %
         
         \addlinespace[5pt]

		$\tilde{q}$ & {Equilibrium price of the risky asset from the {\it regulator}'s perspective when short selling is prohibited.} \\ %

\addlinespace[5pt]

 		$c$ & {A sufficiently low threshold for the stock price set by the {\it regulator}.} \\

        \addlinespace[5pt]
 		
 		$P(q<c)$ & {Probability that $q$ falls below $c$, as perceived by the {\it regulator}, when short selling is allowed.}\\
 		
 		\addlinespace[5pt]

  		$P(\tilde{q}<c)$ & {Probability that $\tilde{q}$ falls below $c$, as perceived by the {\it regulator}, when short selling is prohibited.}\\

        \addlinespace[5pt]

$x$ & {Confidence level set by the {\it regulator}.} \\

	    \bottomrule
	
\end{tabular}
}
\end{table}

\begin{table}
\caption{\bf Timeline of Short-Selling Bans} \label{tab:timeline_restrictions}
	\begin{footnotesize}
	This table describes the decisions taken by the National Regulatory Authority of Austria, Belgium, France, Greece, Italy, and Spain as well as the corresponding opinions released by the  European Securities and Markets Authority (ESMA) between March and May 2020. 
	
    \end{footnotesize}
\bigskip

\newcolumntype{L}[1]{>{\raggedright\let\newline\\\arraybackslash\hspace{0pt}}m{#1}}
\newcolumntype{C}[1]{>{\centering\let\newline\\\arraybackslash\hspace{0pt}}m{#1}}
\newcolumntype{R}[1]{>{\raggedleft\let\newline\\\arraybackslash\hspace{0pt}}m{#1}}
\setlength\extrarowheight{2pt}

\robustify\bfseries
\centering
\scalebox{0.83}{
	
	\begin{tabular}{C{2.5cm}m{1\linewidth}}
		\toprule
		\multicolumn{1}{c}{\bf Country} & \multicolumn{1}{c}{\bf Actions Taken Between March and May 2020} \\
		\midrule 		
		\addlinespace[5pt]
		
		{\bf Austria} & {On March 18, the Austrian Financial Market Authority (FMA) issued a temporary ban on stocks traded on the Vienna Stock Exchange. The ban began on the same date and was expected to expire on April 18. FMA's decision is available \href{https://www.fma.gv.at/en/fma-issues-a-regulation-prohibiting-short-selling-in-certain-financial-instruments-that-are-listed-on-the-vienna-stock-exchange/}{here} and ESMA's opinion \href{https://www.esma.europa.eu/document/opinion-fma-emergency-measure-under-ssr}{here}. Later, the ban was renewed until May 18 with ESMA's opinion available \href{https://www.esma.europa.eu/document/opinion-fma-emergency-measure-under-ssr-0}{here}.} \\
		\addlinespace[5pt]
		\midrule
		\addlinespace[5pt]		
		{\bf Belgium} & {On March 16, the Belgian Financial Services and Markets Authority (FSMA) issued a temporary ban on stocks traded on Belgian trading venues (Euronext Brussels and Euronext Growth). The ban began on March 17 and was expected to expire on April 17. FSMA's decision is available \href{https://www.fsma.be/en/news/short-selling}{here} and ESMA's opinion  \href{https://www.esma.europa.eu/document/opinion-fsma-emergency-measure-under-ssr}{here}. Later, the ban was renewed until May 18 with ESMA's opinion available \href{https://www.esma.europa.eu/document/opinion-fsma-emergency-measure-under-ssr-0}{here}.} \\
		\addlinespace[5pt]
		\midrule
		\addlinespace[5pt]		
		{\bf France} & {On March 17, the Autorit\'e des March\'e Financiers (AMF) issued a temporary ban on stocks traded on French  trading venues (Euronext Paris, Euronext Growth Paris, and Euronext Access Paris). The ban began on the same date and was expected to expire on April 16. AMF's decision is available \href{https://www.amf-france.org/en/news-publications/news-releases/amf-news-releases/amf-announces-temporary-short-selling-ban-certain-shares-until-end-trading-day-march-17}{here} and ESMA's opinion \href{https://www.esma.europa.eu/document/opinion-amf-emergency-measure-under-ssr}{here}. Later, the ban was renewed until May 18 with ESMA's opinion available \href{https://www.esma.europa.eu/document/opinion-amf-emergency-measure-under-ssr-0}{here}.} \\
		\addlinespace[5pt]
		\midrule
		\addlinespace[5pt]		
		{\bf Greece} & {On March 17, the Hellenic Capital Markets Commission (HCMC) issued a temporary ban on stocks traded on the Athens Stock Exchange. The ban came into force on March 18 and was expected to expire on April 24. HCMC's decision is available \href{http://www.hcmc.gr/vdrv/elib/af9dc6d61-5b88-4379-866e-f701707eeedc-92668751-0}{here} and ESMA's opinion \href{https://www.esma.europa.eu/document/opinion-hcmc-emergency-measure-under-ssr}{here}. Later, the ban was renewed until May 18 with ESMA's opinion available \href{https://www.esma.europa.eu/document/opinion-hcmc-emergency-measure-under-ssr-0}{here}.} \\
		\addlinespace[5pt]
		\midrule
		\addlinespace[5pt]		
		{\bf Italy} & {On March 17, the Italian Commissione Nazionale per le Societ\'a e la Borsa (CONSOB) issued a temporary ban on stocks traded on the Mercato Telematico Azionario (MTA). The ban came into force on March 18 and was expected to expire on June 20. CONSOB's decision is available \href{https://www.consob.it/web/consob-and-its-activities/bullettin/documenti/english/resolutions/res21303.htm}{here} and ESMA's opinion \href{https://www.esma.europa.eu/document/opinion-consob-emergency-measure-under-ssr}{here}. On May 15, CONSOB decided to terminate the ban early on May 18, in line with the other European countries as stated \href{https://www.consob.it/web/consob-and-its-activities/bullettin/documenti/english/resolutions/res21367.htm}{here}.} \\
		\addlinespace[5pt]
		\midrule
		\addlinespace[5pt]		
		{\bf Spain} & {On March 16, the Comisi\'on Nacional del Mercado de Valores (CNMV) issued a temporary ban on stocks traded on Spanish trading venues (Bolsa de Madrid, Bolsa de Barcelona, Bolsa de Valencia, Bolsa de Bilbao, and Mercado Alternativo Burs\'atil). The ban came into force on March 17 and was expected to expire on April 17. CNMV's decision is available \href{https://www.cnmv.es/portal/verDoc.axd?t=\%7b5baf609e-ed4e-4dad-a697-80c55548e181\%7d}{here} and ESMA's opinion \href{https://www.esma.europa.eu/document/opinion-cnmv-emergency-measure-under-ssr}{here}. Later, the ban was renewed until May 18 with ESMA's opinion available \href{https://www.esma.europa.eu/document/opinion-cnmv-emergency-measure-under-ssr-0}{here}.} \\

	    \bottomrule

\end{tabular}
}
\end{table}
\begin{table}[ht]
	\caption{\bf  Bid-Ask Spreads and Short-Selling Bans}\label{tab:DiD_bas}
	\begin{footnotesize}
		This table presents difference-in-differences estimates associated with the introduction of temporary short-selling bans in European stock markets during the COVID-19 pandemic. The dependent variable is the percentage bid-ask spread based on daily bid and ask prices. \emph{Country} is a dummy variable that equals one for the group of countries that introduced short-selling bans (i.e., Austria, Belgium, France, Greece, Italy, and Spain), and zero for the control group of countries (i.e., Denmark, Finland, Germany, Ireland, Netherlands, Norway,  Poland, Portugal, Sweden, Switzerland, and the United Kingdom). \emph{Ban} is a dummy variable that equals one (zero) for a post-treatment (pre-treatment) period of one month that goes from March 17 to April 15, 2020 (February 17 to March 16, 2020). The set of controls includes firm size and sovereign CDS spread. Specifications are complemented with stock and time (calendar date) fixed effects {\it fe}. Standard errors (in parentheses) are clustered by stock and time (calendar date) dimensions. *, **, ***, indicate statistical significance at the 10\%, 5\%, and 1\% level, respectively. The sample includes small, mid, and large cap stocks (micro and nano caps are excluded). Data are collected from {\it Datastream}.
				
	\end{footnotesize}

		\bigskip

	\newcolumntype{L}[1]{>{\raggedright\let\newline\\\arraybackslash\hspace{0pt}}m{#1}}
	\newcolumntype{C}[1]{>{\centering\let\newline\\\arraybackslash\hspace{0pt}}m{#1}}
	\newcolumntype{R}[1]{>{\raggedleft\let\newline\\\arraybackslash\hspace{0pt}}m{#1}}
	\setlength\extrarowheight{0pt}

	\sisetup{
		input-symbols         = [()],
		table-format          = 1.0,
		table-space-text-post = ***,
		table-align-text-post = false,
		table-text-alignment  = center,
		group-digits          = false
	}

	\robustify\bfseries
	\centering
	\scalebox{1.00}{
		
		\begin{tabular}{L{6.0cm}
				S[table-column-width = 2.0cm]
				S[table-column-width = 2.0cm]
				S[table-column-width = 2.0cm]
				S[table-column-width = 2.0cm]
			}
	
	\toprule
	
	  \multicolumn{1}{l}{}
	& \multicolumn{1}{c}{(1)} 
	& \multicolumn{1}{c}{(2)} 
	& \multicolumn{1}{c}{(3)}  
	& \multicolumn{1}{c}{(4)} \\
	\midrule
	
	\addlinespace[5pt]

		\rc{\it Ban $\times$ Country}   & \bl 0.116\fn{***} & \bl 0.130\fn{***} & \bl 0.130\fn{***} & \bl 0.120\fn{**} \\
		\rc & (0.040) & (0.044) & (0.044) & (0.048) \\		
		\addlinespace[5pt]
		
		 {\it Ban} & 0.247\fn{***} & 0.246\fn{***} &       &  \\
		& (0.045) & (0.046) &       &  \\		
		\addlinespace[5pt]
		
		{\it Country} & 0.106\fn{**} &       &       &  \\
		& (0.051) &       &       &  \\
		\addlinespace[5pt]

		{\it Constant} & 0.619\fn{***} & 0.647\fn{***} & 0.767\fn{***} & 1.646\fn{**} \\
		& (0.044) & (0.042) & (0.006) & (0.625) \\
		\addlinespace[5pt]
		
		{$R^2$} & 0.012 & 0.562 & 0.574 & 0.574  \\
		\cmidrule{2-5}
		{$\#$ \it obs} & {77,667} & {77,665} & {77,665} & {77,665} \\
		\midrule	
		{\it Stock fe} &       & {\checkmark}      & {\checkmark}      &     {\checkmark}  \\
		{\it Time fe}  &       &                   & {\checkmark}      &     {\checkmark}  \\
		{\it Controls} &       &                   &                   &     {\checkmark}  \\

	\bottomrule

\end{tabular}%
}

\end{table}
 
\begin{table}[ht]
	\caption{\bf Bid-Ask Spreads and IV Regressions}\label{tab:IV_bas}

	\begin{footnotesize}		
	This table presents instrumental variable regressions associated with the introduction of temporary short-selling bans in European stock markets during the COVID-19 pandemic. In Panel A, the dependent variable is \emph{Ban}, a dummy that equals one during the ban period that goes from March 17 to May 15, 2020, and zero otherwise. In Panel B, the dependent variable is the percentage bid-ask spread based on daily bid and ask prices. The instruments for the ban decision are the monthly financial stress index in logs and the monthly average sovereign CDS spread, both measured over the previous month. The controls are the firm size (stock market capitalization in logs) and firm volatility (the standard deviation of daily stock returns over a 21-day rolling window). Both Panels include stock and time (calendar date) fixed effects {\it fe}. Standard errors (in parentheses) are clustered by stock and time (calendar date) dimensions. *, **, ***, indicate statistical significance at the 10\%, 5\%, and 1\% level, respectively. The critical values for the Kleibergen-Paap and Cragg-Donald Wald F statistics are from \citet{stock_yogo:2015}. The significance of Kleibergen-Paap LM statistic is based on the $\chi^2_2$. The constants are unreported to save space. The sample includes small, mid, and large cap stocks (micro and nano caps are excluded), and runs daily between January 2 and June 2, 2020. Data are collected from {\it Datastream}, {\it Bloomberg} and the {\it ECB Data Portal}.
		 
\end{footnotesize}  

		\bigskip
		
	\newcolumntype{L}[1]{>{\raggedright\let\newline\\\arraybackslash\hspace{0pt}}m{#1}}
	\newcolumntype{C}[1]{>{\centering\let\newline\\\arraybackslash\hspace{0pt}}m{#1}}
	\newcolumntype{R}[1]{>{\raggedleft\let\newline\\\arraybackslash\hspace{0pt}}m{#1}}
	\setlength\extrarowheight{0pt}

	\sisetup{
		input-symbols         = [()],
		table-format          = 1.0,
		table-space-text-post = ***,
		table-align-text-post = false,
		table-text-alignment  = center,
		group-digits          = false
	}

	\robustify\bfseries
	\centering
	\scalebox{0.85}{
		
		\begin{tabular}{L{8cm}
				S[table-column-width = 1.9cm]
				S[table-column-width = 1.9cm]
				S[table-column-width = 1.9cm]
			}
	
	\toprule
	\multicolumn{4}{c}{\bf Panel A: First-Stage Regression} \\
	\midrule
	& \multicolumn{1}{c}{(1)} 
	& \multicolumn{1}{c}{(2)} 
	& \multicolumn{1}{c}{(3)}  \\
	\cmidrule(l){2-4}

	\rc{\it Sovereign CDS Spread} & \bl0.005\fn{***} & \bl0.005\fn{***} & \bl0.005\fn{***} \\
	\rc& (0.001) & (0.001) & (0.001) \\	
	\addlinespace[0pt]

	\rc{\it Financial Stress Index} & \bl0.263\fn{***} & \bl0.261\fn{***} & \bl0.261\fn{***} \\
	\rc& (0.028) & (0.029) & (0.028) \\	
	\addlinespace[0pt]

	{\it Firm Size} &       & 0.049\fn{***} & 0.027\fn{*} \\
	&       & (0.014) & (0.015) \\
	\addlinespace[0pt]
	
	{\it Firm Volatility} &       &       & -0.795\fn{***} \\
	&       &       & (0.248) \\
	\addlinespace[0pt]

	{$R^2$} & 0.618 & 0.619 & 0.620  \\
	\cmidrule{2-4}
	{\it $\#$ obs} & {207,418} & {207,418} & {207,382}  \\
	\midrule
	\multicolumn{4}{c}{\bf Panel B: Second-Stage Regression} \\
	\midrule
	\addlinespace[2pt]

	\rc{\it Ban (estimated)} & \bl0.126\fn{**} & \bl0.129\fn{**} & \bl0.135\fn{***} \\
   \rc	& (0.052) & (0.051) & (0.051) \\
	\addlinespace[0pt]
	
	{\it Firm Size} &       & -0.212\fn{***} & -0.178\fn{***} \\
	&       & (0.058) & (0.055) \\
	\addlinespace[0pt]
	
	{\it Firm Volatility} &       &       & 1.207 \\
	&       &       & (0.753) \\
	\addlinespace[0pt]
	
	{$R^2$} & 0.001 & 0.003 & 0.003  \\
	\cmidrule{2-4}
	{\it $\#$ obs} & {199,087} & {199,087} & {199,051}  \\
	\midrule
	{\it Stock fe} & {\checkmark}     & {\checkmark}     & {\checkmark}     \\
	{\it Time fe}  & {\checkmark}     & {\checkmark}     & {\checkmark}     \\
	\midrule 
	
	{\it Kleibergen-Paap LM statistic} & 43.371\fn{***} & 42.857\fn{***} & 43.066\fn{***} \\
	{\it Kleibergen-Paap Wald F  statistic} & 40.947\fn{***} & 40.167\fn{***} & 40.425\fn{***} \\
	{\it Cragg-Donald Wald F statistic} & 24{,}074\fn{***} & 24{,}068\fn{***} & 23{,}898\fn{***} \\
	{\it Hansen J statistic} & 0.288 & 1.335 & 1.043 \\
	{\it Hansen $\chi^2_1$ pval} & (0.592) & (0.248) & (0.307) \\

	\bottomrule

\end{tabular}%
}

\end{table}

\begin{landscape}
	
\begin{table}[ht]
	\caption{\bf Descriptive Statistics on Bid-Ask Spreads}	\label{tab:data_summary_bas}
	
	\begin{footnotesize}
		This figure table presents summary statistics of percentage bid-ask spreads for European stock markets around the introduction of temporary short-selling bans during the COVID-19 pandemic. The group of countries that restricted short sales (i.e., Austria, Belgium, France, Greece, Italy, and Spain) are highlighted in gray. The pre-ban (post-ban) period covers one month between February 17 and March 16, 2020 (March 17 and April 15, 2020). Bid-ask spreads are first computed using end-of-day quotes and then averaged across stocks in each country. AC(1) denotes the first-order serial correlation. The sample includes small, mid, and large cap stocks (micro and nano caps are excluded). Data are collected from {\it Datastream}.
				
	\end{footnotesize}

	\bigskip

	\newcolumntype{L}[1]{>{\raggedright\let\newline\\\arraybackslash\hspace{0pt}}m{#1}}
	\newcolumntype{C}[1]{>{\centering\let\newline\\\arraybackslash\hspace{0pt}}m{#1}}
	\newcolumntype{R}[1]{>{\raggedleft\let\newline\\\arraybackslash\hspace{0pt}}m{#1}}
	\setlength\extrarowheight{2pt}

	\sisetup{
		input-symbols         = [()],
		table-format          = 1.0,
		table-space-text-post = ***,
		table-align-text-post = false,
		table-text-alignment  = center,
		group-digits          = false
	}

	\robustify\bfseries
	\centering
	\scalebox{0.90}{
		
    \begin{tabular}{L{3.5cm}
    		S[table-column-width = 1.3cm]
    		S[table-column-width = 1.3cm]
    		S[table-column-width = 1.3cm]
    		S[table-column-width = 1.3cm]
    		S[table-column-width = 1.3cm]
    		S[table-column-width = 1.3cm]
    		S[table-column-width = 1.3cm]
    		S[table-column-width = 1.3cm]
    		S[table-column-width = 1.3cm]
    		S[table-column-width = 1.3cm]
    	}
    	\toprule

		& \multicolumn{5}{c}{\bf Panel A: Pre-Ban}               & \multicolumn{5}{c}{\bf Panel B: Post-Ban} \\
		\cmidrule(r){2-6} \cmidrule(l){7-11}

		\multicolumn{1}{l}{Country}
		& \multicolumn{1}{c}{mean} 
		& \multicolumn{1}{c}{sdev} 
		& \multicolumn{1}{c}{skew} 
		& \multicolumn{1}{c}{kurt} 
		& \multicolumn{1}{c}{AC(1)} 
		& \multicolumn{1}{c}{mean} 
		& \multicolumn{1}{c}{sdev} 
		& \multicolumn{1}{c}{skew} 
		& \multicolumn{1}{c}{kurt} 
		& \multicolumn{1}{c}{AC(1)} \\		
		\midrule

	\rc \bl{Austria} & \bl0.70 & \bl0.88 & \bl3.74 & \bl23.82 & \bl0.49 & \bl1.08 & \bl1.13 & \bl3.36 & \bl19.75 & \bl0.36 \\
	\rc \bl{Belgium} & \bl0.88 & \bl1.48 & \bl5.41 & \bl57.57 & \bl0.60 & \bl1.42 & \bl2.01 & \bl3.34 & \bl18.80 & \bl0.54 \\
	{Denmark} & 0.36  & 0.52  & 5.55  & 50.16 & 0.51  & 0.47  & 0.55  & 5.52  & 68.25 & 0.44 \\
	{Finland} & 0.34  & 0.49  & 7.93  & 109.53 & 0.34  & 0.49  & 0.58  & 3.96  & 25.95 & 0.36 \\
	\rc \bl{France} & \bl0.83 & \bl1.52 & \bl4.61 & \bl36.55 & \bl0.61 & \bl1.27 & \bl1.98 & \bl4.08 & \bl29.14 & \bl0.65 \\
	{Germany} & 0.69  & 1.34  & 7.15  & 78.67 & 0.41  & 0.92  & 1.48  & 5.29  & 43.33 & 0.49 \\
	\rc \bl{Greece} & \bl1.17 & \bl1.54 & \bl4.07 & \bl32.02 & \bl0.42 & \bl1.34 & \bl1.52 & \bl2.75 & \bl13.58 & \bl0.43 \\
	{Ireland} & 1.97  & 3.25  & 3.17  & 14.52 & 0.58  & 2.86  & 4.27  & 2.52  & 9.35  & 0.69 \\
	\rc \bl{Italy} & \bl0.58 & \bl1.33 & \bl7.18 & \bl76.59 & \bl0.37 & \bl0.83 & \bl1.21 & \bl3.54 & \bl21.56 & \bl0.44 \\
	{Netherlands} & 0.29  & 0.54  & 5.25  & 44.84 & 0.50  & 0.47  & 0.80  & 4.93  & 38.80 & 0.49 \\
	{Norway} & 0.90  & 1.57  & 3.68  & 20.98 & 0.57  & 1.11  & 1.85  & 4.54  & 32.00 & 0.62 \\
	{Poland} & 0.85  & 1.10  & 3.70  & 24.92 & 0.46  & 0.81  & 0.96  & 3.07  & 17.61 & 0.46 \\
	{Portugal} & 0.67  & 1.71  & 4.76  & 28.47 & 0.66  & 1.13  & 3.04  & 4.72  & 28.69 & 0.61 \\
	\rc \bl{Spain} & \bl0.41 & \bl0.86 & \bl6.13 & \bl57.02 & \bl0.49 & \bl0.63 & \bl0.99 & \bl3.88 & \bl25.86 & \bl0.40 \\
	{Sweden} & 0.38  & 0.53  & 7.33  & 104.86 & 0.33  & 0.57  & 0.67  & 4.30  & 33.18 & 0.43 \\
	{Switzerland} & 0.45  & 0.87  & 8.31  & 132.62 & 0.39  & 0.67  & 1.12  & 4.62  & 36.42 & 0.46 \\
	{United Kingdom} & 0.71  & 1.38  & 3.29  & 16.60 & 0.70  & 1.05  & 2.05  & 3.74  & 22.15 & 0.74 \\
	\midrule
	{Total} & 0.65  & 1.28  & 5.51  & 51.82 & 0.58  & 0.93  & 1.66  & 4.84  & 38.04 & 0.65 \\
		\bottomrule
	\end{tabular}
}

\end{table}

\end{landscape}

\begin{landscape}
	
\begin{table}[ht]
	\caption{\bf Descriptive Statistics on Stock Returns}	\label{tab:data_summary_returns}
	
	\begin{footnotesize}
		This figure table presents summary statistics of percentage stock returns for European stock markets around the introduction of temporary short-selling bans during the COVID-19 pandemic. The group of countries that restricted short sales (i.e., Austria, Belgium, France, Greece, Italy, and Spain) are highlighted in gray. The pre-ban (post-ban) period covers one month between February 17 and March 16, 2020 (March 17 and April 15, 2020). Returns are based on daily total return indices expressed in dollar terms before taking the mean value across all stocks in each country. AC(1) denotes the first-order serial correlation. The sample includes small, mid, and large cap stocks (micro and nano caps are excluded). Data are collected from {\it Datastream}.	
			
	\end{footnotesize}
	
	\bigskip

	\newcolumntype{L}[1]{>{\raggedright\let\newline\\\arraybackslash\hspace{0pt}}m{#1}}
	\newcolumntype{C}[1]{>{\centering\let\newline\\\arraybackslash\hspace{0pt}}m{#1}}
	\newcolumntype{R}[1]{>{\raggedleft\let\newline\\\arraybackslash\hspace{0pt}}m{#1}}
	\setlength\extrarowheight{2pt}

	\sisetup{
		input-symbols         = [()],
		table-format          = 1.0,
		table-space-text-post = ***,
		table-align-text-post = false,
		table-text-alignment  = center,
		group-digits          = false
	}

	\robustify\bfseries
	\centering
	\scalebox{0.90}{
		
    \begin{tabular}{L{3.5cm}
    		S[table-column-width = 1.3cm]
    		S[table-column-width = 1.3cm]
    		S[table-column-width = 1.3cm]
    		S[table-column-width = 1.3cm]
    		S[table-column-width = 1.3cm]
    		S[table-column-width = 1.3cm]
    		S[table-column-width = 1.3cm]
    		S[table-column-width = 1.3cm]
    		S[table-column-width = 1.3cm]
    		S[table-column-width = 1.3cm]
    	}
    	\toprule

		& \multicolumn{5}{c}{\bf Panel A: Pre-Ban}               & \multicolumn{5}{c}{\bf Panel B: Post-Ban} \\
		\cmidrule(r){2-6} \cmidrule(l){7-11}

		\multicolumn{1}{l}{Country}
		& \multicolumn{1}{c}{mean} 
		& \multicolumn{1}{c}{sdev} 
		& \multicolumn{1}{c}{skew} 
		& \multicolumn{1}{c}{kurt} 
		& \multicolumn{1}{c}{AC(1)} 
		& \multicolumn{1}{c}{mean} 
		& \multicolumn{1}{c}{sdev} 
		& \multicolumn{1}{c}{skew} 
		& \multicolumn{1}{c}{kurt} 
		& \multicolumn{1}{c}{AC(1)} \\		
		\midrule

	\rc \bl{Austria} & \bl-2.21 & \bl4.48 & \bl-1.53 & \bl7.22 & \bl0.02 & \bl0.85 & \bl5.35 & \bl0.63 & \bl5.81 & \bl0.08 \\
	\rc \bl{Belgium} & \bl-1.61 & \bl4.13 & \bl-1.23 & \bl6.81 & \bl-0.09 & \bl0.60 & \bl4.66 & \bl0.50 & \bl6.51 & \bl0.01 \\
	{Denmark} & -1.52 & 3.92  & -0.98 & 7.92  & 0.07  & 0.53  & 4.11  & 0.01  & 4.32  & 0.00 \\
	{Finland} & -1.66 & 3.51  & -1.20 & 6.11  & 0.03  & 0.53  & 4.41  & 0.06  & 4.48  & -0.03 \\
	\rc \bl{France} & \bl-1.92 & \bl4.41 & \bl-1.47 & \bl11.45 & \bl0.00 & \bl0.58 & \bl5.34 & \bl0.51 & \bl10.21 & \bl0.01 \\
	{Germany} & -1.77 & 4.23  & -1.00 & 7.97  & -0.04 & 0.70  & 5.33  & 0.78  & 8.53  & 0.02 \\
	\rc \bl{Greece} & \bl-2.44 & \bl6.16 & \bl-0.59 & \bl4.81 & \bl-0.17 & \bl1.04 & \bl5.73 & \bl0.60 & \bl5.14 & \bl-0.20 \\
	{Ireland} & -1.63 & 3.86  & -1.31 & 5.16  & 0.00  & 0.02  & 6.52  & 0.31  & 7.46  & 0.12 \\
	\rc \bl{Italy} & \bl-2.02 & \bl5.08 & \bl-0.87 & \bl6.77 & \bl-0.27 & \bl0.57 & \bl4.29 & \bl0.52 & \bl4.94 & \bl0.07 \\
	{Netherlands} & -2.04 & 4.67  & -1.57 & 8.45  & -0.07 & 0.41  & 5.46  & 0.69  & 6.18  & 0.02 \\
	{Norway} & -2.65 & 6.31  & -1.34 & 9.02  & -0.04 & 0.68  & 6.86  & 1.45  & 23.26 & 0.14 \\
	{Poland} & -1.86 & 5.27  & -0.79 & 6.38  & 0.06  & 0.68  & 4.93  & 1.20  & 9.01  & -0.10 \\
	{Portugal} & -1.76 & 3.81  & -1.46 & 7.05  & -0.04 & 0.58  & 4.19  & 0.61  & 6.34  & -0.04 \\
	\rc \bl{Spain} & \bl-1.88 & \bl4.88 & \bl-0.79 & \bl10.31 & \bl-0.06 & \bl0.52 & \bl4.96 & \bl0.95 & \bl8.85 & \bl-0.02 \\
	{Sweden} & -2.01 & 4.55  & -1.30 & 7.37  & -0.02 & 0.76  & 5.75  & 0.35  & 6.12  & -0.02 \\
	{Switzerland} & -1.35 & 3.49  & -1.63 & 12.30 & 0.06  & 0.52  & 4.22  & 0.52  & 7.93  & -0.03 \\
	{United Kingdom} & -2.31 & 5.27  & -1.09 & 25.62 & 0.02  & 0.71  & 8.49  & 1.35  & 21.07 & 0.06 \\
	\midrule
	{Total} & -1.97 & 4.73  & -1.24 & 15.38 & -0.02 & 0.64  & 6.11  & 1.19  & 22.70 & 0.03 \\

		\bottomrule
	\end{tabular}
}

\end{table}

\end{landscape}

\begin{table}[ht]
	\thispagestyle{plain}
	\caption{\bf Bid-Ask Spreads and Country-Time Clustering}\label{tab:DiD_bas_sctcls}
	\begin{footnotesize}
		This table presents difference-in-differences estimates associated with the introduction of temporary short-selling bans in European stock markets during the COVID-19 pandemic. The dependent variable is the percentage bid-ask spread based on daily bid and ask prices. \emph{Country} is a dummy variable that equals one for the group of countries that introduced short-selling bans (i.e., Austria, Belgium, France, Greece, Italy, and Spain), and zero for the control group of countries (i.e., Denmark, Finland, Germany, Ireland, Netherlands, Norway,  Poland, Portugal, Sweden, Switzerland, and the United Kingdom). \emph{Ban} is a dummy variable that equals one (zero) for a post-treatment (pre-treatment) period of one month that goes from March 17 to April 15, 2020 (February 17 to March 16, 2020). The set of controls includes firm size and sovereign CDS spread. Specifications are complemented with stock and time (calendar date) fixed effects {\it fe}. Standard errors (in parentheses) are clustered by stock and country-time (calendar date) dimensions. *, **, ***, indicate statistical significance at the 10\%, 5\%, and 1\% level, respectively. The sample includes small, mid, and large cap stocks (micro and nano caps are excluded). Data are collected from {\it Datastream}.
				
	\end{footnotesize}

		\bigskip

	\newcolumntype{L}[1]{>{\raggedright\let\newline\\\arraybackslash\hspace{0pt}}m{#1}}
	\newcolumntype{C}[1]{>{\centering\let\newline\\\arraybackslash\hspace{0pt}}m{#1}}
	\newcolumntype{R}[1]{>{\raggedleft\let\newline\\\arraybackslash\hspace{0pt}}m{#1}}
	\setlength\extrarowheight{0pt}

	\sisetup{
		input-symbols         = [()],
		table-format          = 1.0,
		table-space-text-post = ***,
		table-align-text-post = false,
		table-text-alignment  = center,
		group-digits          = false
	}

	\robustify\bfseries
	\centering
	\scalebox{1.00}{
		
		\begin{tabular}{L{6.0cm}
				S[table-column-width = 2.0cm]
				S[table-column-width = 2.0cm]
				S[table-column-width = 2.0cm]
				S[table-column-width = 2.0cm]
			}
	
	\toprule
	
	  \multicolumn{1}{l}{}
	& \multicolumn{1}{c}{(1)} 
	& \multicolumn{1}{c}{(2)} 
	& \multicolumn{1}{c}{(3)}  
	& \multicolumn{1}{c}{(4)} \\
	\midrule
	
	\addlinespace[5pt]

	\rc{\it Ban $\times$ Country}   & \bl 0.116* &  \bl0.130** & \bl0.130*** & \bl0.120*** \\
	\rc & (0.063) & (0.053) & (0.037) & (0.040) \\
	\addlinespace[5pt]
	
	{\it Ban} & 0.247*** & 0.246*** &       &  \\
	& (0.033) & (0.025) &       &  \\
	\addlinespace[5pt]
	
	{\it Country} & 0.106* &       &       &  \\
	& (0.061) &       &       &  \\
	\addlinespace[5pt]

	{\it Constant} & 0.619*** & 0.647*** & 0.767*** & 1.646*** \\
	& (0.032) & (0.015) & (0.007) & (0.632) \\
	\addlinespace[5pt]
	{$R^2$} & 0.012 & 0.562 & 0.574 & 0.574  \\ 
	
	\cmidrule{2-5}
	{$\#$ \it obs}  & {77,667} & {77,665} & {77,665} & {77,665} \\
	\midrule	
	{\it Stock fe} &       & {\checkmark}      & {\checkmark}      &     {\checkmark}  \\
	{\it Time fe}  &       &                   & {\checkmark}      &     {\checkmark}  \\
	{\it Controls} &       &                   &                   &     {\checkmark}  \\
	\bottomrule

\end{tabular}%
}

\end{table} 
\begin{landscape}
\begin{table}[ht]
	\caption{\bf Institutional Ownership and Country-Time Clustering}\label{tab:DiD_bas_IO_sctcls}
	
	\begin{footnotesize}		
		This table presents difference-in-differences estimates associated with the introduction of temporary short-selling bans in European stock markets during the COVID-19 pandemic. The dependent variable is the percentage bid-ask spread based on daily bid and ask prices for stocks with low institutional ownership (bottom tercile) in Panel A and stocks with high institutional ownership (top tercile) in Panel B, respectively. \emph{Country} is a dummy variable that equals one for the group of countries that introduced short-selling bans (i.e., Austria, Belgium, France, Greece, Italy, and Spain), and zero for the control group of countries (i.e., Denmark, Finland, Germany, Ireland, Netherlands, Norway,  Poland, Portugal, Sweden, Switzerland, and the United Kingdom). \emph{Ban} is a dummy variable that equals one (zero) for a post-treatment (pre-treatment) period of one month that goes from March 17 to April 15, 2020 (February 17 to March 16, 2020). The set of controls includes firm size and sovereign CDS spread. Specifications are complemented with stock and time (calendar date) fixed effects {\it fe}. Standard errors (in parentheses) are clustered by stock and country-time (calendar date) dimensions. *, **, ***, indicate statistical significance at the 10\%, 5\%, and 1\% level, respectively. The sample includes small, mid, and large cap stocks (micro and nano caps are excluded). Data are collected from {\it Datastream} and {\it Bloomberg}.	
			
   	\end{footnotesize}
	
		\bigskip

	\newcolumntype{L}[1]{>{\raggedright\let\newline\\\arraybackslash\hspace{0pt}}m{#1}}
	\newcolumntype{C}[1]{>{\centering\let\newline\\\arraybackslash\hspace{0pt}}m{#1}}
	\newcolumntype{R}[1]{>{\raggedleft\let\newline\\\arraybackslash\hspace{0pt}}m{#1}}
	\setlength\extrarowheight{0pt}

	\sisetup{
		input-symbols         = [()],
		table-format          = 1.0,
		table-space-text-post = ***,
		table-align-text-post = false,
		table-text-alignment  = center,
		group-digits          = false
	}

	\robustify\bfseries
	\centering
	\scalebox{0.90}{
		
		\begin{tabular}{L{2.9cm}
				S[table-column-width = 1.9cm]
				S[table-column-width = 1.9cm]
				S[table-column-width = 1.9cm]
				S[table-column-width = 1.9cm]
				S[table-column-width = 1.9cm]
				S[table-column-width = 1.9cm]
				S[table-column-width = 1.9cm]
				S[table-column-width = 1.9cm]
			}
	
	\toprule

			& \multicolumn{4}{c}{\bf Panel A: Low Institutional Ownership} 
			& \multicolumn{4}{c}{\bf Panel B: High Institutional Ownership} \\
			\cmidrule(r){2-5} \cmidrule(l){6-9}

		 	& \multicolumn{1}{c}{(1)} 
			& \multicolumn{1}{c}{(2)} 
		    & \multicolumn{1}{c}{(3)} 
		 	& \multicolumn{1}{c}{(4)} 
			& \multicolumn{1}{c}{(5)} 
			& \multicolumn{1}{c}{(6)} 
			& \multicolumn{1}{c}{(7)}
			& \multicolumn{1}{c}{(8)} \\
			
			\midrule
			\addlinespace[5pt]

	\rc{\it Ban $\times$ Country}   & \bl 0.079 & \bl0.077 & \bl0.079 & \bl0.058 & \bl0.229** &\bl 0.285*** & \bl0.285*** & \bl0.278*** \\
	\rc & (0.091) & (0.084) & (0.060) & (0.068) & (0.090) & (0.079) & (0.069) & (0.074) \\	
	\addlinespace[5pt]
	
	{\it  Ban} & 0.307*** & 0.310*** &       &       & 0.206*** & 0.209*** &       &  \\
	& (0.057) & (0.051) &       &       & (0.040) & (0.031) &       &  \\
	\addlinespace[5pt]
	
	{\it Country} & 0.063 &       &       &       & 0.181 &       &       &  \\
	& (0.104) &       &       &       & (0.121) &       &       &  \\
	\addlinespace[5pt]
	
	{\it Constant} & 0.807*** & 0.833*** & 0.983*** & 0.186 & 0.613*** & 0.653*** & 0.754*** & 0.643 \\
	& (0.069) & (0.028) & (0.014) & (1.080) & (0.057) & (0.017) & (0.009) & (0.807) \\
	
	\addlinespace[5pt]
	
	{$R^2$} & 0.011 & 0.490 & 0.513 & 0.513 & 0.014 & 0.618 & 0.628 & 0.628 \\
	\cmidrule{2-9}
	{\it $\#$ obs} & {20,826} & {20,825} & {20,825} & {20,825} & {20,921} & {20,921} & {20,921} & {20,921}  \\
	\midrule
	{\it Stock fe} &       & {\checkmark}     & {\checkmark}     & {\checkmark}     &       & {\checkmark}     & {\checkmark}     & {\checkmark} \\
	{\it Time fe}  &       &                  & {\checkmark}     & {\checkmark}     &       &                  & {\checkmark}     & {\checkmark} \\
	{\it Controls} &       &                  &                  & {\checkmark}     &       &                  &                  & {\checkmark} \\
	\bottomrule

\end{tabular}%
}

\end{table}
\end{landscape}

\begin{landscape}
	\begin{table}[ht]
		\caption{\bf Data Description: Matched Sample} \label{tab:data_description_matched}
		\begin{footnotesize}
		This table describes stock market data for European venues. The countries highlighted in gray, i.e., Austria, Belgium, France, Greece, Italy, and Spain, introduced temporary short-selling bans between March 17 and May, 18, 2020. $\#$Small denotes the number of small-cap firm with an average market capitalization between 250 million and 2 billion US dollars, $\#$Mid refers to the number of mid-cap firms with an average market capitalization between 2 billion and 10 billion US dollars, whereas $\#$Large indicates the number of large-cap firm with an average market capitalization of \$10 billion US dollars or more.  The sample includes small, mid, and large cap stocks (micro and nano caps are excluded) matched by market capitalization and industry classification according to the Industry Classification Benchmark (ICB) code, and runs daily between January 2 and June 2, 2020. Data are collected from {\it Datastream} and {\it Bloomberg}.	
				
		\end{footnotesize}
		\bigskip

		\newcolumntype{L}[1]{>{\raggedright\let\newline\\\arraybackslash\hspace{0pt}}m{#1}}
		\newcolumntype{C}[1]{>{\centering\let\newline\\\arraybackslash\hspace{0pt}}m{#1}}
		\newcolumntype{R}[1]{>{\raggedleft\let\newline\\\arraybackslash\hspace{0pt}}m{#1}}
		\setlength\extrarowheight{2pt}

		\sisetup{
			input-decimal-markers  =  .,
			input-ignore           = {,},
			table-number-alignment = right,
			group-separator        ={,}, 
			group-four-digits      = true,
			input-symbols         = [()],
			table-text-alignment  = center,
			table-space-text-post = ***,
			table-align-text-post = false,
		}

		\robustify\bfseries
		\centering
		\scalebox{0.90}{
			
			\begin{tabular}{L{3.5cm}
					S[table-column-width = 3.0cm, table-format = 7.0]
					S[table-column-width = 2.2cm, table-format = 6.0]
					S[table-column-width = 2.2cm, table-format = 6.0]
					S[table-column-width = 2.2cm, table-format = 6.0]
					S[table-column-width = 2.2cm, table-format = 6.0]
				}
				\toprule
									   
			   	 &       &       &       \multicolumn{3}{c}{Market Capitalization} \\
			   	 \cmidrule{4-6}
			   	 \multicolumn{1}{l}{ Country}	
			   	 & \multicolumn{1}{c}{$\#$Obs} 
			   	 & \multicolumn{1}{c}{$\#$Stock} 
			   	 & \multicolumn{1}{c}{$\#$Small} 
			   	 & \multicolumn{1}{c}{$\#$Mid}   
			   	 & \multicolumn{1}{c}{$\#$Large}  \\			   	 
			   	 \cmidrule{1-6}
  
					\rc{Austria} & \bl3,672 & \bl34 & \bl16 & \bl16 & \bl2 \\
					\rc{Belgium} & \bl7,344 & \bl68 & \bl40 & \bl24 & \bl4 \\
					{Denmark} & 1,944 & 18    & 4     & 8     & 6 \\
					{Finland} & 2,052 & 19    & 13    & 4     & 2 \\
					\rc{France} & \bl24,038 & \bl223 & \bl118 & \bl61 & \bl44 \\
					{Germany} & 6,372 & 59    & 27    & 22    & 10 \\
					\rc{Greece} & \bl2,916 & \bl27 & \bl23 & \bl4  & \bl0 \\
					{Ireland} & 108   & 1     & 0     & 1     & 0 \\
					\rc{Italy} & \bl12,636 & \bl117 & \bl68 & \bl38 & \bl11 \\
					{Netherlands} & 2,268 & 21    & 9     & 7     & 5 \\
					{Norway} & 2,700 & 25    & 18    & 5     & 2 \\
					{Poland} & 1,512 & 14    & 11    & 3     & 0 \\
					{Portugal} & 432   & 4     & 2     & 0     & 2 \\
					\rc{Spain} & \bl9,072 & \bl84 & \bl46 & \bl23 & \bl15 \\
					{Sweden} & 5,940 & 55    & 32    & 19    & 4 \\
					{Switzerland} & 5,613 & 52    & 32    & 14    & 6 \\
					{United Kingdom} & 15,118 & 140   & 86    & 39    & 15 \\
					\midrule
					{Total} & 103,737 & 961   & 545   & 288   & 128 \\ 
				\bottomrule
			
			\end{tabular}
		}
	\end{table}

\end{landscape}

\begin{landscape}
	
\begin{table}[ht]
	\caption{\bf Descriptive Statistics on Bid-Ask Spreads: Matched Sample}	\label{tab:data_summary_bas_matched}
	
	\begin{footnotesize}
		This figure table presents summary statistics of percentage bid-ask spreads for European stock markets around the introduction of temporary short-selling bans during the COVID-19 pandemic. The group of countries that restricted short sales (i.e., Austria, Belgium, France, Greece, Italy, and Spain) are highlighted in gray. The pre-ban (post-ban) period covers one month between February 17 and March 16, 2020 (March 17 and April 15, 2020). Bid-ask spreads are computed using daily bid and ask prices before taking the mean value across all stocks in each country. AC(1) denotes the first-order serial correlation. The sample includes small, mid, and large cap stocks (micro and nano caps are excluded) matched by market capitalization and industry classification according to the Industry Classification Benchmark (ICB) code, and runs daily between January 2 and June 12, 2020. Data are collected from {\it Datastream} and {\it Bloomberg}.	
				
	\end{footnotesize}

	\bigskip

	\newcolumntype{L}[1]{>{\raggedright\let\newline\\\arraybackslash\hspace{0pt}}m{#1}}
	\newcolumntype{C}[1]{>{\centering\let\newline\\\arraybackslash\hspace{0pt}}m{#1}}
	\newcolumntype{R}[1]{>{\raggedleft\let\newline\\\arraybackslash\hspace{0pt}}m{#1}}
	\setlength\extrarowheight{2pt}

	\sisetup{
		input-symbols         = [()],
		table-format          = 1.0,
		table-space-text-post = ***,
		table-align-text-post = false,
		table-text-alignment  = center,
		group-digits          = false
	}

	\robustify\bfseries
	\centering
	\scalebox{0.90}{
		
    \begin{tabular}{L{3.5cm}
    		S[table-column-width = 1.3cm]
    		S[table-column-width = 1.3cm]
    		S[table-column-width = 1.3cm]
    		S[table-column-width = 1.3cm]
    		S[table-column-width = 1.3cm]
    		S[table-column-width = 1.3cm]
    		S[table-column-width = 1.3cm]
    		S[table-column-width = 1.3cm]
    		S[table-column-width = 1.3cm]
    		S[table-column-width = 1.3cm]
    	}
    	\toprule

		& \multicolumn{5}{c}{\bf Panel A: Pre-Ban}               & \multicolumn{5}{c}{\bf Panel B: Post-Ban} \\
		\cmidrule(r){2-6} \cmidrule(l){7-11}

		\multicolumn{1}{l}{Country}
		& \multicolumn{1}{c}{mean} 
		& \multicolumn{1}{c}{sdev} 
		& \multicolumn{1}{c}{skew} 
		& \multicolumn{1}{c}{kurt} 
		& \multicolumn{1}{c}{AC(1)} 
		& \multicolumn{1}{c}{mean} 
		& \multicolumn{1}{c}{sdev} 
		& \multicolumn{1}{c}{skew} 
		& \multicolumn{1}{c}{kurt} 
		& \multicolumn{1}{c}{AC(1)} \\		
		\midrule

	\rc \bl{Austria} & \bl0.70 & \bl0.88 & \bl3.74 & \bl23.82 & \bl0.49 & \bl1.08 & \bl1.13 & \bl3.36 & \bl19.75 & \bl0.36 \\
	\rc \bl{Belgium} & \bl0.90 & \bl1.49 & \bl5.38 & \bl57.10 & \bl0.60 & \bl1.44 & \bl2.02 & \bl3.32 & \bl18.64 & \bl0.53 \\
	{Denmark} & 0.17  & 0.14  & 2.20  & 8.26  & 0.63  & 0.26  & 0.21  & 2.30  & 10.17 & 0.56 \\
	{Finland} & 0.39  & 0.51  & 6.16  & 59.60 & 0.33  & 0.52  & 0.52  & 3.63  & 23.62 & 0.39 \\
	\rc \bl{France} & \bl0.83 & \bl1.52 & \bl4.59 & \bl36.30 & \bl0.61 & \bl1.28 & \bl1.98 & \bl4.07 & \bl28.97 & \bl0.64 \\
	{Germany} & 0.71  & 1.39  & 7.11  & 77.78 & 0.42  & 0.89  & 1.32  & 4.84  & 38.15 & 0.49 \\
	\rc \bl{Greece} & \bl1.06 & \bl1.41 & \bl4.82 & \bl44.47 & \bl0.38 & \bl1.23 & \bl1.36 & \bl3.04 & \bl17.62 & \bl0.39 \\
	{Ireland} & 0.49  & 0.72  & 2.04  & 6.59  & 0.16  & 0.41  & 0.36  & 1.03  & 2.76  & -0.28 \\
	\rc \bl{Italy} & \bl0.58 & \bl1.34 & \bl7.16 & \bl76.06 & \bl0.37 & \bl0.84 & \bl1.21 & \bl3.53 & \bl21.43 & \bl0.44 \\
	{Netherlands} & 0.25  & 0.40  & 3.21  & 16.55 & 0.44  & 0.41  & 0.53  & 2.11  & 7.72  & 0.39 \\
	{Norway} & 0.70  & 1.43  & 5.38  & 43.36 & 0.55  & 1.04  & 2.25  & 5.22  & 34.22 & 0.75 \\
	{Poland} & 0.82  & 0.90  & 3.42  & 20.97 & 0.29  & 0.75  & 0.82  & 3.30  & 21.09 & 0.36 \\
	{Portugal} & 0.56  & 1.33  & 5.73  & 42.08 & 0.47  & 1.13  & 2.87  & 5.16  & 34.85 & 0.30 \\
	\rc \bl{Spain} & \bl0.41 & \bl0.86 & \bl6.13 & \bl57.02 & \bl0.49 & \bl0.63 & \bl0.99 & \bl3.88 & \bl25.86 & \bl0.40 \\
	{Sweden} & 0.39  & 0.64  & 8.59  & 124.29 & 0.21  & 0.57  & 0.71  & 4.22  & 31.05 & 0.45 \\
	{Switzerland} & 0.47  & 0.91  & 10.28 & 182.14 & 0.31  & 0.69  & 1.11  & 4.52  & 32.52 & 0.42 \\
	{United Kingdom} & 0.74  & 1.38  & 2.88  & 11.92 & 0.64  & 1.05  & 2.11  & 3.74  & 21.09 & 0.73 \\
	\midrule
	{Total} & 0.66  & 1.28  & 5.52  & 53.45 & 0.54  & 0.97  & 1.64  & 4.59  & 35.23 & 0.62 \\

		\bottomrule
	\end{tabular}
}

\end{table}

\end{landscape}

\begin{landscape}
	
\begin{table}[ht]
	\caption{\bf Descriptive Statistics on Stock Returns: Matched Sample}	\label{tab:data_summary_returns_matched}
	
	\begin{footnotesize}
		This figure table presents summary statistics of percentage stock returns for European stock markets around the introduction of temporary short-selling bans during the COVID-19 pandemic. The group of countries that restricted short sales (i.e., Austria, Belgium, France, Greece, Italy, and Spain) are highlighted in gray. The pre-ban (post-ban) period covers one month between February 17 and March 16, 2020 (March 17 and April 15, 2020). Returns are based on daily total return indices expressed in dollar terms before taking the mean value across all stocks in in each country.  AC(1) denotes the first-order serial correlation. The sample includes small, mid, and large cap stocks (micro and nano caps are excluded) matched by market capitalization and industry classification according to the Industry Classification Benchmark (ICB) code, and runs daily between January 2 and June 12, 2020. Data are collected from {\it Datastream} and {\it Bloomberg}.	
					
	\end{footnotesize}
	
	\bigskip

	\newcolumntype{L}[1]{>{\raggedright\let\newline\\\arraybackslash\hspace{0pt}}m{#1}}
	\newcolumntype{C}[1]{>{\centering\let\newline\\\arraybackslash\hspace{0pt}}m{#1}}
	\newcolumntype{R}[1]{>{\raggedleft\let\newline\\\arraybackslash\hspace{0pt}}m{#1}}
	\setlength\extrarowheight{2pt}

	\sisetup{
		input-symbols         = [()],
		table-format          = 1.0,
		table-space-text-post = ***,
		table-align-text-post = false,
		table-text-alignment  = center,
		group-digits          = false
	}

	\robustify\bfseries
	\centering
	\scalebox{0.90}{
		
    \begin{tabular}{L{3.5cm}
    		S[table-column-width = 1.3cm]
    		S[table-column-width = 1.3cm]
    		S[table-column-width = 1.3cm]
    		S[table-column-width = 1.3cm]
    		S[table-column-width = 1.3cm]
    		S[table-column-width = 1.3cm]
    		S[table-column-width = 1.3cm]
    		S[table-column-width = 1.3cm]
    		S[table-column-width = 1.3cm]
    		S[table-column-width = 1.3cm]
    	}
    	\toprule

		& \multicolumn{5}{c}{\bf Panel A: Pre-Ban}               & \multicolumn{5}{c}{\bf Panel B: Post-Ban} \\
		\cmidrule(r){2-6} \cmidrule(l){7-11}

		\multicolumn{1}{l}{Country}
		& \multicolumn{1}{c}{mean} 
		& \multicolumn{1}{c}{sdev} 
		& \multicolumn{1}{c}{skew} 
		& \multicolumn{1}{c}{kurt} 
		& \multicolumn{1}{c}{AC(1)} 
		& \multicolumn{1}{c}{mean} 
		& \multicolumn{1}{c}{sdev} 
		& \multicolumn{1}{c}{skew} 
		& \multicolumn{1}{c}{kurt} 
		& \multicolumn{1}{c}{AC(1)} \\		
		\midrule

	\rc \bl{Austria} & \bl-2.21 & \bl4.48 & \bl-1.53 & \bl7.22 & \bl0.02 & \bl0.85 & \bl5.35 & \bl0.63 & \bl5.81 & \bl0.08 \\
	\rc \bl{Belgium} & \bl-1.59 & \bl4.11 & \bl-1.23 & \bl6.87 & \bl-0.09 & \bl0.59 & \bl4.64 & \bl0.50 & \bl6.58 & \bl0.01 \\
	{Denmark} & -1.48 & 3.73  & -0.55 & 6.97  & 0.13  & 0.55  & 3.82  & -0.20 & 4.33  & -0.03 \\
	{Finland} & -1.42 & 3.41  & -1.02 & 5.88  & -0.01 & 0.56  & 4.25  & 0.38  & 5.74  & 0.01 \\
	\rc \bl{France} & \bl-1.92 & \bl4.42 & \bl-1.47 & \bl11.43 & \bl0.00 & \bl0.57 & \bl5.35 & \bl0.51 & \bl10.20 & \bl0.01 \\
	{Germany} & -1.66 & 4.18  & -1.00 & 7.27  & -0.06 & 0.67  & 5.15  & 0.48  & 5.87  & 0.03 \\
	\rc \bl{Greece} & \bl-2.51 & \bl6.25 & \bl-0.56 & \bl4.68 & \bl-0.18 & \bl1.08 & \bl5.81 & \bl0.58 & \bl5.00 & \bl-0.20 \\
	{Ireland} & -2.89 & 5.00  & -1.76 & 5.27  & -0.35 & -1.91 & 6.85  & 0.15  & 2.53  & 0.39 \\
	\rc \bl{Italy} & \bl-2.02 & \bl5.07 & \bl-0.86 & \bl6.79 & \bl-0.26 & \bl0.57 & \bl4.28 & \bl0.53 & \bl4.97 & \bl0.07 \\
	{Netherlands} & -2.25 & 4.62  & -1.87 & 7.95  & -0.07 & 0.35  & 5.58  & 0.68  & 5.74  & 0.04 \\
	{Norway} & -2.81 & 6.27  & -1.22 & 6.84  & 0.03  & 0.82  & 7.41  & 2.82  & 38.43 & 0.12 \\
	{Poland} & -1.63 & 4.97  & -0.81 & 7.66  & 0.14  & 0.61  & 5.11  & 2.18  & 14.62 & -0.04 \\
	{Portugal} & -1.68 & 4.23  & -2.22 & 10.10 & 0.04  & 0.52  & 5.16  & 0.84  & 7.49  & -0.20 \\
	\rc \bl{Spain} & \bl-1.88 & \bl4.88 & \bl-0.79 & \bl10.31 & \bl-0.06 & \bl0.52 & \bl4.96 & \bl0.95 & \bl8.85 & \bl-0.02 \\
	{Sweden} & -2.06 & 4.57  & -1.19 & 6.76  & -0.02 & 0.84  & 5.73  & 0.05  & 4.41  & 0.00 \\
	{Switzerland} & -1.29 & 3.24  & -1.57 & 7.62  & 0.08  & 0.55  & 3.94  & 0.65  & 5.28  & -0.08 \\
	{United Kingdom} & -2.24 & 5.06  & -1.68 & 10.06 & 0.04  & 0.57  & 8.33  & 2.17  & 38.88 & 0.03 \\
	\midrule
	{Total} & -1.95 & 4.67  & -1.29 & 9.17  & -0.04 & 0.61  & 5.68  & 1.48  & 33.04 & 0.02 \\

		\bottomrule
	\end{tabular}
}

\end{table}

\end{landscape}

\begin{landscape}
\begin{table}[ht]
	\caption{\bf Bid-Ask Spreads and Institutional Ownership: Two-month window}\label{tab:DiD_bas_IO_60days}

	\begin{footnotesize}	
	This table presents difference-in-differences estimates associated with the introduction of temporary short-selling bans in European stock markets during the COVID-19 pandemic. The dependent variable is the percentage bid-ask spread based on daily bid and ask prices for stocks with low institutional ownership (bottom tercile) in Panel A and stocks with high institutional ownership (top tercile) in Panel B, respectively.  \emph{Country} is a dummy variable that equals one (zero) for the treated (control) group of European countries.  \emph{Ban} is a dummy variable that equals one (zero) for a post-treatment (pre-treatment) period of two months that goes from March 17 to May 15, 2020 (January 17 to March 16, 2020). The set of controls includes firm size and sovereign CDS spread. Specifications are complemented with stock and time (calendar date) fixed effects {\it fe}. Standard errors (in parentheses) are clustered by stock and time (calendar date) dimensions. *, **, ***, indicate statistical significance at the 10\%, 5\%, and 1\% level, respectively. The sample includes small, mid, and large cap stocks (micro and nano caps are excluded). Data are collected from {\it Datastream} and {\it Bloomberg}.	
	
\end{footnotesize} 

		\bigskip
		
	\newcolumntype{L}[1]{>{\raggedright\let\newline\\\arraybackslash\hspace{0pt}}m{#1}}
	\newcolumntype{C}[1]{>{\centering\let\newline\\\arraybackslash\hspace{0pt}}m{#1}}
	\newcolumntype{R}[1]{>{\raggedleft\let\newline\\\arraybackslash\hspace{0pt}}m{#1}}
	\setlength\extrarowheight{0pt}

	\sisetup{
		input-symbols         = [()],
		table-format          = 1.0,
		table-space-text-post = ***,
		table-align-text-post = false,
		table-text-alignment  = center,
		group-digits          = false
	}

	\robustify\bfseries
	\centering
	\scalebox{0.90}{
		
		\begin{tabular}{L{2.8cm}
				S[table-column-width = 1.9cm]
				S[table-column-width = 1.9cm]
				S[table-column-width = 1.9cm]
				S[table-column-width = 1.9cm]
				S[table-column-width = 1.9cm]
				S[table-column-width = 1.9cm]
				S[table-column-width = 1.9cm]
				S[table-column-width = 1.9cm]
			}
	
	\toprule

			& \multicolumn{4}{c}{\bf Panel A: Low Institutional Ownership} 
			& \multicolumn{4}{c}{\bf Panel B: High Institutional Ownership} \\
			\cmidrule(r){2-5} \cmidrule(l){6-9}

		 	& \multicolumn{1}{c}{(1)} 
			& \multicolumn{1}{c}{(2)} 
		    & \multicolumn{1}{c}{(3)} 
		 	& \multicolumn{1}{c}{(4)} 
			& \multicolumn{1}{c}{(5)} 
			& \multicolumn{1}{c}{(6)} 
			& \multicolumn{1}{c}{(7)}
			& \multicolumn{1}{c}{(8)} \\
			
			\midrule
			\addlinespace[5pt]

	\rc{\it Ban $\times$ Country}   & \bl 0.104\fn{**} & \bl 0.102\fn{**} & \bl 0.103\fn{**} & \bl 0.072 & \bl 0.211\fn{***} & \bl 0.237\fn{***} & \bl 0.241\fn{***} & \bl 0.242\fn{***} \\
    \rc &   (0.044) &   (0.049) &   (0.050) &   (0.063) &   (0.059) &  (0.063) &   (0.063) &  (0.078) \\
	\addlinespace[5pt]
	
	{\it Ban} & 0.286\fn{***} & 0.289\fn{***} &       &       & 0.197\fn{***} & 0.199\fn{***} &       &  \\
	& (0.052) & (0.055) &       &       & (0.033) & (0.036) &       &  \\
	\addlinespace[5pt]
	
	{\it Country} & 0.023 &       &       &       & 0.133 &       &       &  \\
	& (0.074) &       &       &       & (0.097) &       &       &  \\
	\addlinespace[5pt]

	{\it Constant} & 0.691\fn{***} & 0.700\fn{***} & 0.842\fn{***} & 2.062\fn{*} & 0.544\fn{***} & 0.574\fn{***} & 0.672\fn{***} & 1.295\fn{**} \\
	& (0.060) & (0.040) & (0.010) & (1.061) & (0.053) & (0.025) & (0.008) & (0.578) \\
	\addlinespace[5pt]
	
	{$R^2$} & 0.014 & 0.507 & 0.531 & 0.531 & 0.014 & 0.632 & 0.641 & 0.641  \\
	\cmidrule{2-9}
	{$\#$ \it obs} & {42,275} & {42,275} & {42,275} & {42,275} & {42,461} & {42,461} & {42,461} & {42,461} \\
	\midrule
	{\it Stock fe} &       & {\checkmark}     & {\checkmark}     & {\checkmark}     &       & {\checkmark}     & {\checkmark}     & {\checkmark} \\
	{\it Time fe}  &       &                  & {\checkmark}     & {\checkmark}     &       &                  & {\checkmark}     & {\checkmark} \\
	{\it Controls} &       &                  &                  & {\checkmark}     &       &                  &                  & {\checkmark} \\
	\bottomrule

\end{tabular}%
}

\end{table}
\end{landscape}

\begin{landscape}
\begin{table}[ht]
	\caption{\bf Stock Returns and Institutional Ownership: Two-month window}\label{tab:DiD_distribution_IO_60days}

	\begin{footnotesize}
	This table presents difference-in-differences estimates associated with the introduction of temporary short-selling bans in European stock markets during the COVID-19 pandemic. The dependent variables are the percentage mean, median, volatility, and maximum drawdown of total return indices in dollar terms based on a two-month window around the enactment of short-selling bans. The pre-treatment (post-treatment) period runs between January 17 and March 16, 2020 (March 17 and May 15, 2020). Panel A presents estimates for stocks with low institutional ownership (bottom tercile), whereas Panel B  for stocks with high institutional ownership (top tercile). \emph{Country} is a dummy variable that equals one for the group of countries that introduced short-selling bans (i.e., Austria, Belgium, France, Greece, Italy, and Spain), and zero for the control group of countries (i.e., Denmark, Finland, Germany, Ireland, Netherlands, Norway,  Poland, Portugal, Sweden, Switzerland, and the United Kingdom).  \emph{Ban} is a dummy variable that equals one (zero) for the post-treatment (pre-treatment) period. The set of controls includes firm size and sovereign CDS spread. Standard errors (in parentheses) are clustered by stock dimension. *, **, ***, indicate statistical significance at the 10\%, 5\%, and 1\% level, respectively. The sample includes small, mid, and large cap stocks (micro and nano caps are excluded). Data are collected from {\it Datastream} and {\it Bloomberg}.
	
\end{footnotesize} 

		\bigskip

\newcolumntype{L}[1]{>{\raggedright\let\newline\\\arraybackslash\hspace{0pt}}m{#1}}
\newcolumntype{C}[1]{>{\centering\let\newline\\\arraybackslash\hspace{0pt}}m{#1}}
\newcolumntype{R}[1]{>{\raggedleft\let\newline\\\arraybackslash\hspace{0pt}}m{#1}}
\setlength\extrarowheight{0pt}

\sisetup{
	input-symbols         = [()],
	table-format          = 1.0,
	table-space-text-post = ***,
	table-align-text-post = false,
	table-text-alignment  = center,
	group-digits          = false
}

\robustify\bfseries
\centering
\scalebox{0.90}{
	
	\begin{tabular}{L{2.8cm}
			S[table-column-width = 2.0cm]
			S[table-column-width = 2.0cm]
			S[table-column-width = 2.0cm]
			S[table-column-width = 2.0cm]
			S[table-column-width = 2.0cm]
			S[table-column-width = 2.0cm]
			S[table-column-width = 2.0cm]
			S[table-column-width = 2.0cm]
		}
		
		\toprule

	        & \multicolumn{4}{c}{\bf Panel A: Low Institutional Ownership} 
	        & \multicolumn{4}{c}{\bf Panel B: High Institutional Ownership}\\
	        
			\cmidrule(lr){2-5}
			\cmidrule(lr){6-9}	
	
     & 	\multicolumn{1}{c}{\multirow{2}{*}{Mean}} &  \multicolumn{1}{c}{\multirow{2}{*}{Median}} & \multicolumn{1}{c}{\multirow{2}{*}{Volatility}} & {Maximum} 
     & 	\multicolumn{1}{c}{\multirow{2}{*}{Mean}} &  \multicolumn{1}{c}{\multirow{2}{*}{Median}} & \multicolumn{1}{c}{\multirow{2}{*}{Volatility}} & {Maximum} \\

	 &  \multicolumn{1}{c}{}                       &  \multicolumn{1}{c}{}                       & \multicolumn{1}{c}{}                            & {drawdown}
	 &  \multicolumn{1}{c}{}                       &  \multicolumn{1}{c}{}                       & \multicolumn{1}{c}{}                            & {drawdown}\\

	\midrule

	\rc{\it Ban $\times$ Country}   & \bl 0.077 & \bl-0.006 & \bl-0.367\fn{***} & \bl3.691\fn{***} & \bl-0.112 & \bl-0.120** & \bl-0.705\fn{***} & \bl0.841 \\
	\rc & (0.068) & (0.056) & (0.122) & (1.047) & (0.076) & (0.061) & (0.135) & (1.155) \\
	\addlinespace[5pt]
	{\it Ban} & 1.240\fn{***} & 0.624\fn{***} & 0.456\fn{***} & 20.612\fn{***} & 1.446\fn{***} & 0.715\fn{***} & 0.961\fn{***} & 22.548\fn{***} \\
	& (0.048) & (0.038) & (0.072) & (0.734) & (0.048) & (0.036) & (0.070) & (0.637) \\
	\addlinespace[5pt]
	{\it Country} & -0.112\fn{**} & -0.056 & 0.175 & -3.287\fn{***} & 0.012 & 0.059 & -0.127 & 0.934 \\
	& (0.050) & (0.037) & (0.109) & (1.190) & (0.059) & (0.044) & (0.133) & (1.443) \\
	\addlinespace[5pt]
	{\it Constant} & -1.014\fn{***} & -0.775\fn{***} & 4.521\fn{***} & -41.203\fn{***} & -1.398\fn{***} & -1.071\fn{***} & 5.524\fn{***} & -49.376\fn{***} \\
	& (0.088) & (0.076) & (0.355) & (2.558) & (0.099) & (0.109) & (0.410) & (2.671) \\
	\addlinespace[5pt]
	{$R^2$} & 0.609 & 0.342 & 0.037 & 0.500 & 0.608 & 0.322 & 0.129 & 0.502 \\	
	\addlinespace[5pt]
	
	\cmidrule{2-9}
	{\it $\#$ obs} & {1,036} & {1,036} & {1,036} & {1,036} & {1,036} & {1,036} & {1,036} & {1,036}  \\
	\midrule
    {\it Controls}  &   {\checkmark}    &   {\checkmark}  &   {\checkmark}    &   {\checkmark}  & {\checkmark}      &   {\checkmark} &   {\checkmark}    &   {\checkmark}      \\

	\bottomrule

\end{tabular}
}

\end{table}
\end{landscape}

\end{appendices}
\end{document}